\definecolor{DarkRed}{rgb}{0.5,0.1,0.1}
\definecolor{DarkBlue}{rgb}{0.1,0.1,0.5}
\def\BState{\State\hskip-\ALG@thistlm}
\newtheorem{theorem}{Theorem}
\newtheorem{lemma}{Lemma}[section]
\newtheorem{proposition}[lemma]{Proposition}
\newtheorem{claim}[lemma]{Claim}
\newtheorem{definition}{Definition}
\newtheorem{problem}{Problem}
\newtheorem{remark}[lemma]{Remark}
\newtheorem*{claim*}{Claim}
\newtheorem*{proposition*}{Proposition}
\newtheorem*{lemma*}{Lemma}
\newtheorem*{problem*}{Problem}
\newtheorem{mdresult}[theorem]{Theorem}
\newenvironment{Theorem}{\begin{mdframed}[backgroundcolor=lightgray!40,topline=false,rightline=false,leftline=false,bottomline=false,innertopmargin=2pt]\begin{mdresult}}{\end{mdresult}\end{mdframed}}
\newtheorem{mdinvariant}{Invariant}
\renewcommand{\qed}{\nobreak \ifvmode \relax \else
      \ifdim\lastskip<1.5em \hskip-\lastskip
      \hskip1.5em plus0em minus0.5em \fi \nobreak
      \vrule height0.75em width0.5em depth0.25em\fi}
\newcommand{\Eq}[1]{\ensuremath{\underset{\textnormal{#1}}=}}
\newcommand{\sH}{\ensuremath{\# H}}
\newcommand{\Os}{\ensuremath{O^*}}
\newcommand{\toShrink}{-.20cm}
\newcommand{\toShrinkEnu}{-.2cm}
\newcommand{\Ot}{\ensuremath{\widetilde{O}}}
\newcommand{\eps}{\ensuremath{\varepsilon}}
\newcommand{\Paren}[1]{\Big(#1\Big)}
\newcommand{\Bracket}[1]{\Big[#1\Big]}
\newcommand{\bracket}[1]{\left[#1\right]}
\newcommand{\paren}[1]{\ensuremath{\left(#1\right)}\xspace}
\newcommand{\card}[1]{\left\vert{#1}\right\vert}
\newcommand{\IN}{\ensuremath{\mathbb{N}}}
\newcommand{\ceil}[1]{{\left\lceil{#1}\right\rceil}}
\newcommand{\expect}[1]{\Exp\bracket{#1}}
\newcommand{\var}[1]{\ensuremath{\mathbb{V}\textnormal{ar}}\bracket{#1}}
\newcommand{\set}[1]{\ensuremath{\left\{ #1 \right\}}}
\newcommand{\poly}{\mbox{\rm poly}}
\newcommand{\polylog}{\mbox{\rm  polylog}}
\newcommand{\alg}{\ensuremath{\mathcal{A}}\xspace}
\DeclareMathOperator*{\Var}{\ensuremath{\mathbb{V}\textnormal{ar}}}
\DeclareMathOperator*{\Exp}{\ensuremath{{\mathbb{E}}}}
\DeclareMathOperator*{\Prob}{\ensuremath{\textnormal{Pr}}}
\renewcommand{\Pr}{\Prob}
\newcommand{\Ex}{\Exp}
\newcommand{\etal}{{\it et al.\,}}
\newcommand{\FG}{\ensuremath{\mathcal{G}}}
\newcommand{\event}[1]{\ensuremath{{\sf E}_{#1}}}
\newenvironment{tbox}{\begin{tcolorbox}[
		enlarge top by=5pt,
		enlarge bottom by=5pt,
		 drop shadow=black,
		 boxsep=0pt,
                  left=4pt,
                  right=4pt,
                  top=10pt,
                  arc=0pt,
                  boxrule=1pt,toprule=1pt,
                  colback=white
                  ]
	}
{\end{tcolorbox}}
\newcommand{\Prot}{\ensuremath{\Pi}}
\newcommand{\bR}{\bm{R}}
\newcommand{\supp}[1]{\ensuremath{\textsc{supp}(#1)}}
\renewcommand{\event}{\mathcal{E}}
\newcommand{\II}{\ensuremath{\mathbb{I}}}
\newcommand{\CC}{\ensuremath{\mathcal{C}}}
\newcommand{\DD}{\ensuremath{\mathcal{D}}}
\renewcommand{\SS}{\ensuremath{\mathcal{S}}}
\newcommand{\rhoC}{\rho^C}
\newcommand{\rhoS}{\rho^S}
 \newcommand{\rhoG}[1]{\rho_{#1+}}
\newcommand{\scestimator}{\ensuremath{\textnormal{\texttt{odd-cycle-sampler}}}\xspace}
\newcommand{\ssestimator}{\ensuremath{\textnormal{\texttt{star-sampler}}}\xspace}
\newcommand{\ssampler}{\ensuremath{\textnormal{\texttt{subgraph-sampler}}}\xspace}
\newcommand{\ustar}{\ensuremath{u^{*}}}
\newcommand{\dstar}{\ensuremath{d^{*}}}
\newcommand{\sC}{\ensuremath{\# C_{2k+1}}}
\newcommand{\ssC}[1]{\ensuremath{(\sC\mid{#1})}}
\newcommand{\sS}{\ensuremath{\# S_{\ell}}}
\renewcommand{\alg}{\textsc{alg}}
\newcommand{\be}{\bm{e}}
\newcommand{\bw}{\bm{w}}
\newcommand{\bL}[1]{\ensuremath{\bm{L}_{#1}}}
 \newcommand{\TT}{\ensuremath{\mathcal{T}}}
 \newcommand{\PP}{\ensuremath{\mathcal{P}}}
 \newcommand{\LL}{\ensuremath{\mathcal{L}}}
 \newcommand{\Label}[1]{\ensuremath{\textnormal{\textsf{label}}[#1]}\xspace}
 \newcommand{\Value}[1]{\ensuremath{\textnormal{\textsf{value}}[#1]}\xspace}
 \newcommand{\ssH}[1]{\ensuremath{(\sH\mid{#1})}}
 \newcommand{\tH}{\ensuremath{\widetilde{H}}}
\newcommand{\ve}{\vec{e}}
 \newcommand{\vE}{\vec{E}}
\title{A Simple Sublinear-Time Algorithm for Counting Arbitrary Subgraphs via Edge Sampling}
\author{Sepehr Assadi\thanks{Department of Computer and Information Science, University of Pennsylvania. Supported in part by the National Science Foundation grant CCF-1617851. Email: \texttt{sassadi@cis.upenn.edu}.}   \and 
\and Michael Kapralov\thanks{School of Computer and Communication Sciences, EPFL. Email: \texttt{michael.kapralov@epfl.ch}.}   \and 
\and Sanjeev Khanna\thanks{Department of Computer and Information Science, University of Pennsylvania. Supported in part by the National Science Foundation grants CCF-1617851 and CCF-1763514. Email: \texttt{sanjeev@cis.upenn.edu}.} 
}
\date{}
\begin{document}
\maketitle

\thispagestyle{empty}
\begin{abstract}
In the subgraph counting problem, we are given a (large) input graph $G(V, E)$ and a (small) target graph $H$ (e.g., a triangle); the goal is to estimate the number of occurrences of $H$ in $G$. Our 
focus here is on designing \emph{sublinear-time} algorithms for approximately computing number of occurrences of $H$ in $G$ in the setting where the algorithm is given \emph{query access} to $G$.
This problem has been studied in several recent papers which primarily focused on specific families of graphs $H$ such as triangles, cliques, and stars. However, not 
much is known about approximate counting of arbitrary graphs $H$ in the literature. This is in sharp contrast to the closely related subgraph enumeration problem that has received significant attention in the database community as the database join problem.
The AGM bound shows that the maximum number of occurrences of any arbitrary subgraph $H$ in a graph $G$ with $m$ edges is $O(m^{\rho(H)})$, where $\rho(H)$ is the \emph{fractional edge-cover} of $H$, 
and enumeration algorithms with matching runtime are known for any $H$. 

\medskip

In this work, we bridge this gap between the subgraph counting and subgraph enumeration problems by designing a simple sublinear-time algorithm that can estimate the number of occurrences of 
any arbitrary graph $H$ in $G$, denoted by $\sH$, to within a $(1\pm \eps)$-approximation with high probability in $O(\frac{m^{\rho(H)}}{\sH}) \cdot \poly(\log{n},1/\eps)$ time. Our algorithm is allowed 
the standard set of queries for general graphs, namely degree queries, pair queries and neighbor queries, plus an additional edge-sample query that returns an edge chosen uniformly at random.
The performance of our algorithm matches those of Eden~\etal [FOCS 2015, STOC 2018] for counting triangles and cliques and extend them to all choices of subgraph $H$ under 
the additional assumption of edge-sample queries. We further show that our algorithm works for a more general version of the estimation problem where edges of $G$ and $H$ have colors, which corresponds to the database join size estimation problem. 
For this slightly more general setting, we also establish a matching lower bound for any choice of subgraph $H$.

\end{abstract}
\clearpage
\setcounter{page}{1}


\section{Introduction}\label{sec:intro}

Counting (small) subgraphs in massive graphs is a fundamental algorithmic problem, with a wide range of applications in bioinformatics, social network analysis, spam detection and graph databases (see, e.g.~\cite{milo, irs, sociology-1}).  In social network analysis, the ratio of the number of triangles in a network to the number of length $2$ paths  (known as the {clustering coefficient}) as well as subgraph counts for larger subgraphs $H$ have been proposed as important metrics for analyzing massive networks~\cite{Ugander}. Similarly, {motif counting} are popular method for analyzing protein-protein interaction networks in bioinformatics (e.g.,~\cite{milo}). In this paper we consider designing efficient algorithms for this task.

Formally, we consider the following problem: Given a (large) graph $G(V, E)$ with $m$ edges and a (small) subgraph $H(V_H,E_H)$ (e.g., a triangle) and a precision parameter $\eps \in (0, 1)$,  output a $(1\pm\eps)$-approximation\footnote{Throughout, we say that $a$ is a $(1\pm \eps)$ approximation to $b$ iff $(1-\eps) \cdot b \leq a \leq (1+\eps) \cdot b$. } to the number of occurrences of $H$ in $G$. Our goal is to design an algorithm that runs in time {\em sublinear} in the number $m$ of edges of $G$, and in particular makes a sublinear number of the following types of queries to the graph $G$: 
\vspace{-5pt}
\begin{itemize}
	\item {\bf Degree query $v$:} the degree $d_v$ of any vertex $v \in V$; 
	\item {\bf Neighbor query $(v,i)$:} what vertex is the $i$-th neighbor of the vertex $v \in V$ for $i \leq d_v$; 
	\item {\bf Pair query $(u,v)$:} test for pair of vertices $u,v \in V$, whether or not $(u,v)$ is an edge in $E$. 
	\item {\bf Edge-sample query:} sample an edge $e$ uniformly at random from $E$.
\end{itemize}
\vspace{-5pt}
The first three queries are the standard baseline queries (see Chapter 10 of Goldreich's book~\cite{Goldreich17}) 
 assumed by nearly all sublinear time algorithms for counting small subgraphs such as triangles or cliques~\cite{EdenLRS15,EdenRS18} 
(see~\cite{GonenRS10} for the necessity of using pair queries for counting subgraphs beside stars). 
The last query is somewhat less standard but is also considered in the literature prior to our work, for example in~\cite{AliakbarpourBGP18} for counting stars in sublinear time, and in~\cite{EdenR18b} in the context 
of lower bounds for subgraph counting problems.

\subsection{Our Contributions}\label{sec:results}

For the sake of clarity, we suppress any dependencies on the approximation
 parameter $\eps$, $\log{n}$-terms, and the size of graph $H$, using the $\Os(\cdot)$ notation. 
 Our results are parameterized by the \emph{fractional edge-cover number} of the subgraph $H$ (see Section~\ref{sec:edge-cover} for the formal definition). Our goal in this paper is to approximately compute the number of occurrences $\sH$ of $H$ in $G$, 
 formally defined as number of subgraphs $H'$ of $G$ such that $H$ and $H'$ are isomorphic to each other. 

\begin{Theorem}\label{thm:main-intro}
	There exists a randomized algorithm that given a precision parameter $\eps \in (0,1)$, a subgraph $H$, and a query access to the input graph $G$, 
	with high probability outputs a $(1\pm \eps)$ approximation to the number of occurrences of $H$ in $G$, denoted by $\sH$, using: 
	\begin{align*}
		\Os\Paren{\min\set{m,\frac{m^{\rho(H)}}{\sH}}} \textnormal{ queries and } \Os\Paren{\frac{m^{\rho(H)}}{\sH}} \textnormal{ time.}
	\end{align*}
	The algorithm uses degree queries, neighbor queries, pair queries, and edge-sample queries. 
\end{Theorem}

Since the fractional edge-cover number of any $k$-clique $K_k$ is $k/2$, as a corollary of Theorem~\ref{thm:main-intro}, we obtain sublinear algorithms for counting triangles, and 
in general $k$-cliques using 
\begin{align*}
\Os\paren{\min\set{m,\frac{m\sqrt{m}}{\#K_3}}} \textnormal{ and } \Os\paren{\min\set{m,\frac{m^{k/2}}{\#K_k}}},
\end{align*}
queries respectively. These bounds match the previous results of Eden~\etal~\cite{EdenLRS15,EdenRS18} modulo an additive term of $\Os(\frac{n}{(\#K_3)^{1/3}})$ for triangles in~\cite{EdenLRS15} and $\Os(\frac{n}{(\#K_k)^{1/k}})$ for arbitrary cliques in~\cite{EdenRS18} which is needed in the absence of edge-sample queries that are not used by~\cite{EdenLRS15,EdenRS18}. 
Our bounds settle a conjecture of Eden and Rosenbaum~\cite{EdenR18b}  in the affirmative by showing that one can avoid 
the aforementioned additive terms depending on $n$ in query complexity by allowing edge-sample queries. We now elaborate more on different aspects of our result in Theorem~\ref{thm:main-intro}.

\paragraph{AGM Bound and Database Joins.} The problem of {\em enumerating} all occurrences of a graph $H$ in a graph $G$ and, more generally, the database join problem, has been considered extensively in the literature. A fundamental 
question here is that given a database with $m$ entries (e.g. a graph $G$ with $m$ edges) and a conjunctive query $H$ (e.g. a small graph $H$), what is the maximum possible size of the output of the query (e.g., number of occurrences of $H$ in $G$)? 
The AGM bound of Atserias, Grohe and Marx~\cite{AtseriasGM08} provides a tight upper bound of $m^{\rho(H)}$ (up to constant factors), where $\rho(H)$ is the fractional edge cover of $H$. The AGM bound applies to databases with 
several relations, and the fractional edge cover in question is weighted according to the sizes of the different relations. A similar bound on the number of occurrences of a hypergraph $H$ inside a hypergraph $G$ with $m$ hyperedges was proved 
earlier by Friedgut and Kahn~\cite{FriedgutKahn}, and the bound for graphs is due to Alon~\cite{Alon}. Recent work of Ngo et al.~\cite{NgoPRR18} gave algorithms for evaluating database joins in time bounded by worst case output size for a database 
with the same number of entries. When instantied for the subgraph enumeration problem, their result gives an algorithm for enumerating all occurrences of $H$ in a graph $G$ with $m$ edges in time $O(m^{\rho(H)})$. 

Our Theorem~\ref{thm:main-intro} is directly motivated by the connection between subgraph counting and subgraph enumeration problems and the AGM bound. 
In particular, Theorem~\ref{thm:main-intro} provides a natural analogue of AGM bound for estimation algorithms by stating that 
if the number of occurrences $H$ is $\sH \leq m^{\rho(H)}$, then a $(1\pm\eps)$-approximation to $\sH$ can be obtained in $\Os(\frac{m^{\rho(H)}}{\sH})$ time.
Additionally, as we show in Section~\ref{sec:extensions}, Theorem~\ref{thm:main-intro} can be easily extended to the more general problem of database join size estimation (for binary relations). This problem corresponds to a subgraph counting problem in which 
the graphs $G$ and $H$ are both \emph{edge-colored} and our goal is to count the number of copies of $H$ in $G$ with the same colors on edges. Our algorithm 
can solve this problem also in $\Os(\frac{m^{\rho(H)}}{\sH_c})$ time where $\sH_c$ denotes the number of copies of $H$ with the same colors in $G$. 


\paragraph{Optimality of Our Bounds.} Our algorithm in Theorem~\ref{thm:main-intro} is optimal from different points of view. Firstly, by a lower bound of~\cite{EdenR18b} 
(building on~\cite{EdenLRS15,EdenRS18}), the bounds achieved by our algorithm when $H$ is any $k$-clique is optimal among all algorithms with the same query access (including the edge-sample query). 
In Theorem~\ref{thm:lb-odd}, we further prove a lower bound showing that for \emph{odd cycles} as well, the bounds achieved by Theorem~\ref{thm:main-intro} are optimal. These results hence suggest that Theorem~\ref{thm:main-intro} is \emph{existentially 
optimal}: there exists several natural choices for $H$ such that Theorem~\ref{thm:main-intro} achieves the optimal bounds. However, there also exist choices of $H$ for which the bounds in Theorem~\ref{thm:main-intro} are suboptimal. 
In particular, Aliakbarpour~\etal~\cite{AliakbarpourBGP18} presented an algorithm for estimating occurrences of any star $S_\ell$ for $\ell \geq 1$ using $\Os(\frac{m}{(\sS)^{1/\ell}})$ queries in our query model (including edge-sample queries) which is always at 
least as good as our bound in Theorem~\ref{thm:main-intro}, but potentially can be better. On the other hand, we show that our current algorithm, with almost no further modification, in fact achieves this stronger bound \emph{using a different analysis} (see 
Remark~\ref{rem:star-variance} for details). 

Additionally, as we pointed out before, our algorithm can solve
the more general database join size estimation for binary relations, or equivalently the subgraph counting problem with colors on edges. In Theorem~\ref{thm:lb-colorful}, we prove that for this more general problem, 
our algorithm in Theorem~\ref{thm:main-intro} indeed achieves optimal bounds for \emph{all choices} of the subgraph $H$.

\paragraph{Edge-Sample Queries.} The edge-sample query that we assume is not part of the standard access model for sublinear algorithms, namely the ``general graph'' query model (see, e.g.~\cite{KaufmanKR04}). Nonetheless, we find allowing for this query ``natural'' 
owing to the following factors: 
\vspace{-8pt}
\begin{enumerate}[leftmargin=10pt]
\item[] \emph{Theoretical implementation.} Edge sampling queries can be implemented with an $\Ot(n/\sqrt{m})$ multiplicative overhead in query and time using the recent result of~\cite{EdenR18}, or with an $O(n)$ additive 
preprocessing time (which is still sublinear in $m$) by querying degrees of all vertices. Hence, we can focus on designing algorithms by allowing these queries and later replacing them by either of the above implementations in a black-box way at a certain additional cost. 

\item[] \emph{Practical implementation.} Edge sampling is a common practice in analyzing social networks~\cite{LeskovecF06,LeeP15} or biological networks~\cite{AhmedNK13}. Another scenario when random edge sampling is possible
is when we can access a random location of the memory that is used to store the graph. To quote~\cite{AliakbarpourBGP18}: ``because edges normally take most of the space for storing graphs, an access to a random memory location where the 
adjacency list is stored, would readily give a random edge.'' Hence, assuming edge sampling queries can be considered valid in many scenarios. 

\item[] \emph{Understanding the power of random edge queries.} Edge sampling is a critical component of various sublinear time algorithms for graph estimation~\cite{EdenLRS15,EdenRS17,AliakbarpourBGP18,EdenRS18,EdenR18}. However,
except for~\cite{AliakbarpourBGP18} that also assumed edge-sample queries, all these other algorithms employ different workarounds to these queries. As we show in this paper, decoupling these workarounds from the rest of the algorithm by 
allowing edge-sample queries results in considerably simpler and more general algorithms for subgraph counting and is hence worth studying on its own. We also mention that studying the power of edge-sample queries has  been 
cast as an open question in~\cite{EdenR18b} as well.  
\end{enumerate}
\vspace{-8pt}

\paragraph{Implications to Streaming Algorithms.} Subgraph counting is also one of the most studied problems in the graph streaming model 
(see, e.g.~\cite{Bar-YossefKS02,JowhariG05,BuriolFLMS06,KaneMSS12,BravermanOV13,JhaSP13,SimpsonSM15,McGregorVV16,CormodeJ17,BeraC17} and references therein). In this model, the edges of the input graph are presented one by one in a stream; 
the algorithm makes a single or a small number of passes over the stream and outputs the answer after the last pass. The goal in this model is to minimize the memory used by the 
algorithm (similar-in-spirit to minimizing the query complexity in our query model). 

Our algorithm in Theorem~\ref{thm:main-intro} can be directly adapted to the streaming model, resulting in an algorithm for subgraph counting that makes $O(1)$ passes over the stream and uses
a memory of size $\Os\Paren{\min\set{m,\frac{m^{\rho(H)}}{\sH}}}$. For the case of counting triangles and cliques, the space complexity of our algorithm matches the best known
algorithms of McGregor~\etal~\cite{McGregorVV16} and Bera and Chakrabarti~\cite{BeraC17} which are known to be optimal~\cite{BeraC17}.  To the best of our knowledge, the only previous streaming algorithms for counting arbitrary subgraphs $H$ 
are those of Kane~\etal~\cite{KaneMSS12} and~Bera and Chakrabarti~\cite{BeraC17} that use, respectively, one pass and $\Os(\frac{m^{2\cdot\card{E(H)}}}{(\sH)^2})$ space, and two passes and $\Os(\frac{m^{\beta(H)}}{\sH})$ space, where $\beta(H)$ is the \emph{integral}
edge-cover number of $H$. As $\rho(H) \leq \beta(H) \leq \card{E(H)}$ by definition and $\sH \leq m^{\rho(H)}$ by AGM bound, the space complexity of our algorithm is always at least as good as the ones in~\cite{KaneMSS12,BeraC17} but potentially 
can be much smaller.  

\subsection{Main Ideas in Our Algorithm}\label{sec:techniques}

Our starting point is the AGM bound which implies that the number of ``potential copies'' of $H$ in $G$ is at most $m^{\rho(H)}$. Our goal of estimating $\sH$ then translates 
to counting how many of these potential copies  form an actual copy of $H$ in $G$. A standard approach at this point is the \emph{Monte Carlo method:}  sample a potential copy of $H$ in $G$ \emph{uniformly at random} and check
whether it forms an actual copy of $H$ or not; a simple exercise in concentration inequalities then implies that we only need $O(\frac{m^{\rho(H)}}{\sH})$ many \emph{independent} samples to get a good estimate of $\sH$. 

This approach however immediately runs into a technical difficulty. Given only a query access to $G$, it is not at all clear how to sample a potential copy of $H$ from the list of all potential copies. Our first task is then to design a procedure for sampling potential 
copies of $H$ from $G$. In order to do so, we again consider the AGM bound and the optimal fractional edge-cover that is used to derive this bound. We first prove a simple structural result that states that an optimal fractional edge-cover of $H$
can be supported only on edges that form a disjoint union of \emph{odd cycles} and \emph{stars} (in $H$). This allows us to decompose $H$ into a collection of odd cycles and stars and
treat any arbitrary subgraph $H$ as a collection of these simpler subgraphs that are suitably connected together.

The above decomposition reduces the task of sampling a potential copy of $H$ to sampling a collection of odd cycles and stars. Sampling an odd cycle $C_{2k+1}$ on $2k+1$ edges is as follows: 
sample $k$ edges $e_1,\ldots,e_k$ uniformly at random from $G$; pick one of the endpoints of $e_1$ and sample a vertex $v$ from the neighborhood of this endpoint uniformly at random. With some additional care, 
one can show that the tuple $(e_1,\ldots,e_k,v)$ sampled here is enough to identify an odd cycle of length $2k+1$ uniquely. To sample a star $C_{\ell}$ with $\ell$ petals, we sample a vertex $v$ from $G$ with probability proportional to its degree (by sampling a 
random edge and picking one of the two endpoints uniformly), and then sample $\ell$ vertices $w_1,\ldots,w_\ell$ from the neighborhood of $v$. Again, with some care, this allows us to sample 
a potential copy of a star $S_{\ell}$. We remark that these sampling procedures are related to sampling triangles in~\cite{EdenLRS15} and stars in~\cite{AliakbarpourBGP18}. Finally, to sample a potential copy of $H$, we simply sample all its odd cycles and stars 
in the decomposition using the method above. We should note right away that this however does \emph{not} result in a uniformly at random sample of potential copies of $H$ as various parameters of the graph $G$, in particular degrees of vertices, alter the 
probability of sampling each potential copy. 

The next and paramount step is then how to use the samples above to estimate the value of $\sH$. Obtaining an unbiased estimator of $\sH$ based on these samples is not hard as we can identify the probability each potential copy is sampled with in this process (which is a function of degrees of vertices of the potential copy in $G$) and reweigh each sample accordingly. Nevertheless, the variance of a vanilla variant of this sampling and reweighing approach is quite large for our purpose. To fix this, we use an idea 
similar to that of~\cite{EdenLRS15} for counting triangles: sample a ``partial'' potential copy of $H$ first and fix it; sample \emph{multiple} ``extensions'' of this partial potential copy to a complete potential copy and use the average of estimates
based on each extension to reduce the variance. More concretely, this translates to sampling multiple copies of the first cycle for the decomposition and for each sampled cycle, recursively sampling multiple copies of the remainder of $H$ as specified by the 
decomposition. A careful analysis of this recursive process---which is the main technical part of the paper---allows us to bound the variance of the estimator by $O(m^{\rho(H)}) \cdot (\sH)$. Repeating such an estimator 
$O(\frac{m^{\rho(H)}}{\sH})$ times independently and taking the average value then gives us a $(1 \pm \eps)$-approximation to $\sH$ by a simple application of Chebyshev's inequality.

\subsection{Further Related Work}\label{sec:related}

In addition to the previous work in~\cite{EdenLRS15,EdenRS18,AliakbarpourBGP18} that are already discussed above, sublinear-time algorithms for estimating subgraph counts and related parameters such as average degree and 
degree distribution moments have also been studied in~\cite{Feige04,GoldreichR08,GonenRS10,EdenRS17}. Similarly, sublinear-time algorithms are also studied for estimating other graph parameters such as weight of the minimum spanning
tree~\cite{CzumajS04,ChazelleRT05,CzumajEFMNRS05} or size of a maximum matching or a minimum vertex cover~\cite{ParnasR07,NguyenO08,YoshidaYI09,HassidimKNO09,OnakRRR12} (this is by no means a comprehensive summary of previous results). 

Subgraph counting has also been studied extensively in the graph streaming model 
(see, e.g.~\cite{Bar-YossefKS02,JowhariG05,BuriolFLMS06,KaneMSS12,BravermanOV13,JhaSP13,SimpsonSM15,McGregorVV16,CormodeJ17,BeraC17,KP17,KKP18} and references therein). In this model, the edges of the input graph are presented one by one in a stream; 
the algorithm makes a single or a small number of passes over the stream and outputs the answer after the last pass. The goal in this model is to minimize the memory used by the 
algorithm similar-in-spirit to minimizing the query complexity in our query model. However, the streaming algorithms typically require reading the entire graph in the stream which is different from our goal in sublinear-time algorithms.

\section{Preliminaries}\label{sec:prelim}

\paragraph{Notation.} For any integer $t \geq 1$, we let $[t] := \set{1,\ldots,t}$. For any event $\event$, $\II(\event) \in \set{0,1}$ is an indicator denoting whether $\event$ happened or not. 
For a graph $G(V,E)$, $V(G) := V$ denotes the vertices and $E(G):= E$ denotes the edges. For a vertex $v \in V$, $N(v)$ denotes the neighbors of $v$, and $d_v := \card{N(v)}$ denotes the degree of $v$. 

To any edge $e = \set{u,v}$ in $G$, we assign two {directed} edges $\ve_1 = (u,v)$ and $\ve_2 = (v,u)$ called the directed copies of $e$ and let $\vE$ be the set of all these directed edges. 
We also fix a total ordering $\prec$ on vertices whereby for any two vertices $u,v \in V$, $u \prec v$ iff $d_u < d_v$, or $d_u = d_v$ and $u$ appears before $v$ in the lexicographic order.  
To avoid confusion, we use letters $a,b$ and $c$ 
to denote the vertices in the subgraph $H$, and letters $u,v$ and $w$ to denote the vertices of $G$.

We use the following standard variant of Chebyshev's inequality. 

\begin{proposition}\label{prop:cheb}
	For any random variable $X$ and integer $t \geq 1$, 
	$
		\Pr\paren{\card{X - \Ex\bracket{X}} \geq t} \leq \frac{\var{X}}{t^2}.
	$
\end{proposition}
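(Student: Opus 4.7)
The plan is to deduce this standard inequality from Markov's inequality applied to the nonnegative random variable $Y := (X - \Ex[X])^2$. First I would recall Markov's inequality: for any nonnegative random variable $Y$ and any $s > 0$, one has $\Pr(Y \geq s) \leq \Ex[Y]/s$. This follows from the elementary observation that $s \cdot \II(Y \geq s) \leq Y$ pointwise, so taking expectations on both sides yields the bound.

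Next I would observe that the event $\{\card{X - \Ex[X]} \geq t\}$ is exactly the same as the event $\{(X - \Ex[X])^2 \geq t^2\}$, since squaring is monotone on nonnegative reals and $t \geq 1 > 0$. Setting $Y = (X - \Ex[X])^2$ and $s = t^2$ in Markov's inequality then gives
\[
\Pr\paren{\card{X - \Ex[X]} \geq t} = \Pr(Y \geq t^2) \leq \frac{\Ex[Y]}{t^2} = \frac{\Ex[(X - \Ex[X])^2]}{t^2} = \frac{\var{X}}{t^2},
\]
where the last equality is the definition of variance.

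There is really no obstacle here; the only minor subtlety is making sure the squaring step is justified (i.e.\ that both sides of the inequality inside the probability are nonnegative before squaring), which is immediate because $\card{\cdot}$ and $t$ are both nonnegative. The hypothesis that $t$ is an integer is not actually used; the inequality holds for any real $t > 0$, but stating it for integer $t \geq 1$ suffices for the applications in this paper.
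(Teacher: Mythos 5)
Your proof is correct and is the canonical derivation of Chebyshev's inequality from Markov's inequality; the paper itself states this proposition as a standard fact without proof, so there is nothing to compare against, but your argument is exactly what any textbook would give. Your observation that integrality of $t$ is irrelevant is also accurate.
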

\noindent
We also recall the law of total variance that states the for two random variables $X$ and $Y$, 
\begin{align}
	\var{Y} = \Ex_x\paren{\var{Y \mid X=x}} + \Var_x\bracket{\expect{Y \mid X=x}}. \label{eq:total-variance}
\end{align}

\noindent
We use the following standard graph theory fact in our proofs (see Appendix~\ref{app:min-degree} for a proof). 

\begin{proposition}[cf.~\cite{ChibaN85}]\label{prop:min-degree}
	For any graph $G(V,E)$, $\sum_{(u,v) \in E} \min(d_u,d_v) \leq 5m\sqrt{m}$. 
\end{proposition}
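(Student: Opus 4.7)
My plan is to exploit the total ordering $\prec$ on vertices defined in the preliminaries, which orients every edge from its lower-degree endpoint to its higher-degree endpoint (with lexicographic tie-breaking). Under this orientation, for any edge $\{u,v\}$ the tail is the $\prec$-smaller endpoint, whose degree equals $\min(d_u,d_v)$. Consequently, the sum to be bounded can be rewritten as
\[
\sum_{\{u,v\}\in E}\min(d_u,d_v) \;=\; \sum_{u\in V} d_u \cdot d^+(u),
\]
where $d^+(u)$ counts the edges oriented out of $u$, i.e.\ the neighbors $v$ of $u$ with $u\prec v$.

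The key step is then to derive two complementary upper bounds on $d^+(u)$. First, trivially $d^+(u)\le d_u$, since $d^+(u)$ counts a subset of $u$'s neighbors. Second, every neighbor $v$ counted by $d^+(u)$ has $d_v\ge d_u$ (by the definition of $\prec$); since the total degree sum is $2m$, the number of vertices in the whole graph with degree at least $d_u$ is at most $2m/d_u$, giving $d^+(u)\le 2m/d_u$. Combining these bounds yields $d^+(u)\le \min(d_u, 2m/d_u)\le \sqrt{2m}$.

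Plugging this uniform bound back into the rewritten sum and using $\sum_u d_u = 2m$, I get
\[
\sum_{u\in V} d_u\cdot d^+(u) \;\le\; \sqrt{2m}\cdot \sum_{u\in V} d_u \;=\; 2\sqrt{2}\, m\sqrt{m} \;\le\; 5m\sqrt{m},
\]
which is the claimed bound. I do not anticipate any real obstacle here: the only subtle point is verifying that the tie-breaking in $\prec$ does not spoil the identity $\min(d_u,d_v)=d_{\text{tail}}$, which is immediate because $d_u=d_v$ on tied edges. The rest is routine counting, and the slack in the constant ($2\sqrt{2}<5$) confirms the statement comfortably.
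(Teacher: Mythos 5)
Your proof is correct and takes a genuinely different route from the paper's. The paper splits $V$ into low-degree vertices $V^-$ (those of degree at most $\sqrt{m}$) and high-degree vertices $V^+$, rewrites the sum as $\tfrac{1}{2}\sum_{u}\sum_{v\in N(u)}\min(d_u,d_v)$ (counting each edge twice), bounds the $V^-$ contribution by $m\sqrt m$ using $\min(d_u,d_v)\le\sqrt m$, and bounds each $u\in V^+$ by $4m$ via separate counts of low-degree and high-degree neighbors together with $|V^+|\le 2\sqrt m$. You instead orient every edge toward its $\prec$-larger endpoint, charge $\min(d_u,d_v)=d_u$ to the tail $u$ to get the clean identity $\sum_{\{u,v\}\in E}\min(d_u,d_v)=\sum_u d_u\, d^+(u)$, and then combine the two bounds $d^+(u)\le d_u$ and $d^+(u)\le 2m/d_u$ into the single uniform estimate $d^+(u)\le\sqrt{2m}$. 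The key ingredient --- that at most $2m/d$ vertices can have degree at least $d$ --- is shared, but your reorganization avoids both the double counting and the threshold case split, making the argument shorter and delivering a sharper constant ($2\sqrt 2$ instead of $5$). The paper's version trades constants for a presentation that does not depend on having introduced the ordering $\prec$; since $\prec$ is already defined in the preliminaries, your use of it is entirely legitimate here.
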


\paragraph{Assumption on Size of Subgraph $H$.} Throughout the paper, we assume that the size of the subgraph $H$ is a fixed constant independent of the size of the graph $G$ and hence we suppress the dependency on size of $H$
in various bounds in our analysis using $O$-notation.

\newcommand{\tsH}{\ensuremath{\widetilde{\sH}}}

\section{A Graph Decomposition Using Fractional Edge-Covers}\label{sec:edge-cover}

In this section, we give a simple decomposition of the subgraph $H$ using fractional edge-covers. We start by defining fractional edge-covers formally (see also Figure~\ref{fig:decomposition}). 
\begin{definition}[Fractional Edge-Cover Number] A {fractional edge-cover} of $H(V_H,E_H)$ is a mapping $\psi : E_H \rightarrow [0,1]$ such that for each vertex $a \in V_H$, $\sum_{e \in E_H, a \in e} \psi(e) \geq 1$. 
The fractional edge-cover number $\rho(H)$ of $H$ is the minimum value of $\sum_{e \in E_H} \psi(e)$ among all fractional edge-covers $\psi$. 
\end{definition}

\noindent
The fractional edge-cover number of a graph can be computed by the following LP:
\begin{align}
\rho(H)\quad = \quad &\textnormal{minimize} && \hspace{-2cm} \textnormal{$\sum_{e \in E(H)}x_e$} \notag \\
&\textnormal{subject to} && \hspace{-2.05cm} \textnormal{$\sum_{e \in E_H: a \in e} x_e \geq 1$ for all vertices $a \in V(H)$.} \label{lp:ec}
\end{align}

The following lemma is the key to our decomposition. The proof is based on standard ideas in linear programming and is provided in Appendix~\ref{app:subgraph-decomposition} for completeness. 

\begin{lemma}\label{lem:subgraph-decomposition}
	Any subgraph $H$ admits an optimal fractional edge-cover $x^*$ such that the support of $x^*$, denoted by $\supp{x^*}$, is a collection of vertex-disjoint
	odd cycles and star graphs, and, 
	\begin{enumerate}
		\item for every odd cycle $C \in \supp{x^*}$, $x^*_e = 1/2$ for all $e \in C$;
		\item for every edge $e \in \supp{x^*}$ that does \emph{not} belong to any odd cycle,  $x_e = 1$. 
	\end{enumerate}
\end{lemma}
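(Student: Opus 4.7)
The plan is to obtain an optimal half-integral fractional edge cover via standard LP theory and then impose the desired structure through local modifications. First, I would add the redundant upper bounds $x_e \leq 1$ to LP~(\ref{lp:ec}); these do not affect the optimum since any $x_e > 1$ in a feasible solution can be capped at $1$ without violating the $\sum \geq 1$ constraints. I would then invoke the classical fact that every extreme point of the fractional edge-cover polytope $\{x : \sum_{e \ni a} x_e \geq 1,\ 0 \leq x_e \leq 1\}$ is half-integral. Fix such an optimal extreme point $x^*$ and, among all optimal half-integral solutions, choose $x^*$ to minimize $|E_{1/2}|$, where $E_{1/2} := \{e : x^*_e = 1/2\}$ and $E_1 := \{e : x^*_e = 1\}$.

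Second, I would show that no vertex of $H$ is incident to both an edge of $E_1$ and an edge of $E_{1/2}$. Suppose toward contradiction that $(a,b) \in E_1$ and $f = (b,c) \in E_{1/2}$. If $c$ is also incident to some $1$-edge, then resetting $x^*_f := 0$ preserves feasibility at $b$ and $c$ and strictly reduces the cost, contradicting optimality. Otherwise, feasibility at $c$ forces the existence of another $1/2$-edge $f' = (c,d) \in E_{1/2}$; the local modification $x^*_f := 0$, $x^*_{f'} := 1$ then preserves both feasibility and cost while reducing $|E_{1/2}|$ by two, contradicting its minimality.

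Third, with $E_1$ and $E_{1/2}$ sharing no vertex, I would analyze each induced subgraph separately. If $E_1$ contained a cycle, then deleting any cycle edge would keep feasibility---each endpoint is still covered by the other incident cycle edge---while strictly reducing cost; if $E_1$ contained a path $a{-}b{-}c{-}d$ on three $E_1$-edges, then deleting the middle edge $bc$ would still cover $b$ (via $ab$) and $c$ (via $cd$), again reducing cost. Both contradict optimality, so $E_1$ is cycle-free and $P_4$-free, i.e., a vertex-disjoint union of stars with value~$1$.

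Finally, I would analyze $E_{1/2}$. By the previous step, each vertex of the $1/2$-subgraph $S_{1/2}$ has no incident $1$-edge, so its feasibility constraint must be satisfied entirely by $1/2$-edges, forcing at least two incident $1/2$-edges per vertex and hence minimum degree $\geq 2$ in $S_{1/2}$. The main obstacle is to rule out components with more than one cycle, and I would do so using the extreme-point characterization of $x^*$: after accounting for the tight bound constraints $x_e = 0$ and $x_e = 1$, the restriction of the incidence matrix (rows indexed by tight cover constraints, columns indexed by $E_{1/2}$) must have full column rank $|E_{1/2}|$. Since the rank of a graph's vertex-edge incidence matrix over $\mathbb{R}$ equals the number of its vertices minus the number of its bipartite connected components, each connected component of $S_{1/2}$ with $n'$ vertices and $m'$ edges satisfies $m' \leq n'$, with equality only when the component is non-bipartite. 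Combined with $m' \geq n'$ from the minimum-degree condition, each component must have $m' = n'$ and be non-bipartite: a connected unicyclic graph containing an odd cycle, with no room for any tree branch, i.e., a single odd cycle on which $x^*_e = 1/2$ by the definition of $E_{1/2}$.
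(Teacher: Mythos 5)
Your proof is correct and takes a genuinely different route from the paper's. The paper obtains half-integrality by a reduction to the bipartite case (Proposition~\ref{prop:half-integrality-ec}) and then massages the support by repeated cycle-canceling: even cycles are removed by alternating $\pm 1/2$ shifts, a vertex of an odd cycle carrying an extra support edge is handled by an asymmetric cycle-canceling that detaches the cycle, and the leftover forest is turned into stars by a bottom-up mass transfer. You instead add the redundant bounds $x_e \leq 1$, pass to an extreme point, and read the structure off the rank condition for vertices of the polytope. The separation of $E_1$ from $E_{1/2}$ and the cycle- and $P_4$-freeness of $E_1$ are handled by local modifications in the same spirit as the paper's, but your treatment of $E_{1/2}$ is arguably tighter: minimum degree $\geq 2$ gives $m' \geq n'$ per component, while full column rank of the incidence matrix on the $E_{1/2}$ columns forces $m' \leq n' - (\text{number of bipartite components})$, so every component is a non-bipartite unicyclic graph with no pendant trees, i.e., an odd cycle. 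This replaces the paper's somewhat informal forest-rounding with a clean linear-algebraic constraint.

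One point worth tightening: you want $x^*$ to simultaneously be an extreme point (for the rank argument) and a minimizer of $|E_{1/2}|$ (for the local-modification argument), and as written the ordering of these two choices is ambiguous. When the modification $x^*_f := 0$, $x^*_{f'} := 1$ produces an optimal half-integral solution with fewer half-edges, that solution need not be a vertex of the polytope, so the contradiction with ``minimality among optimal extreme points'' does not follow immediately. The fix is short but should be stated: pass to the face of the polytope obtained by freezing $x_e = 0$ on the enlarged $E_0$ and $x_e = 1$ on the enlarged $E_1$; any optimal extreme point of that face is also an extreme point of the original polytope (hence half-integral), is optimal, and has at most as many half-edges as the modified solution, which delivers the contradiction. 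With this clarification your argument is complete.
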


\subsection{The Decomposition}

We now present the decomposition of $H$ using Lemma~\ref{lem:subgraph-decomposition}. Let $x^*$ be an optimal fractional edge-cover in Lemma~\ref{lem:subgraph-decomposition} and let $\CC_1,\ldots,\CC_o$ be the odd-cycles
in the support of $x^*$ and $\SS_1,\ldots,\SS_s$ be the stars. We define $\DD(H) := \set{\CC_1,\ldots,\CC_o,\SS_1,\ldots,\SS_s}$ as the decomposition of $H$ (see Figure~\ref{fig:decomposition} below for an illustration). 

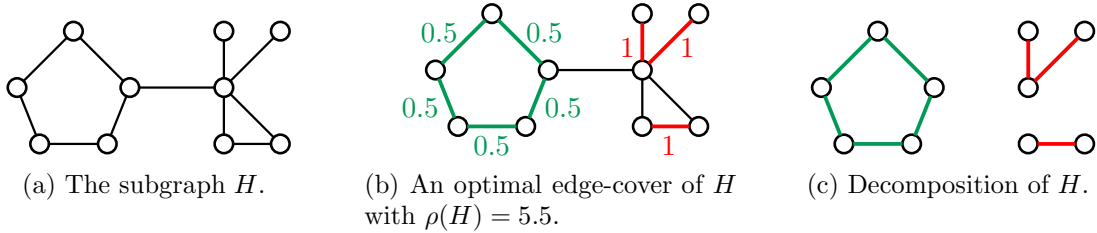
\begin{figure}[h!]
    \centering
    \subcaptionbox{The subgraph $H$.}[0.3\textwidth]{

    \begin{tikzpicture}[ auto ,node distance =1cm and 2cm , on grid , semithick , state/.style ={ circle ,top color =white , bottom color = white , draw, black , text=black}, every node/.style={inner sep=0,outer sep=0}]

\node[state, circle, black, line width=0.35mm, minimum height=7pt, minimum width=7pt] (a1){};
\node[state, circle, black, line width=0.35mm, minimum height=7pt, minimum width=7pt] (a2) [below left=0.75cm and 0.75cm of a1]{};
\node[state, circle, black, line width=0.35mm, minimum height=7pt, minimum width=7pt] (a3) [below right=0.75cm and 0.75cm of a1]{};
\node[state, circle, black, line width=0.35mm, minimum height=7pt, minimum width=7pt] (a4) [below right=0.75cm and 0.3cm of a2]{};
\node[state, circle, black, line width=0.35mm, minimum height=7pt, minimum width=7pt] (a5) [below left=0.75cm and 0.3cm of a3]{};

\node[state, circle, black, line width=0.35mm, minimum height=7pt, minimum width=7pt](b1) [below right=0.75cm and 2cm of a1]{};
\node[state, circle, black, line width=0.35mm, minimum height=7pt, minimum width=7pt] (b2) [below=0.75cm of b1]{};
\node[state, circle, black, line width=0.35mm, minimum height=7pt, minimum width=7pt](b3) [right=0.75cm of b2]{};
\node[state, circle, black, line width=0.35mm, minimum height=7pt, minimum width=7pt] (b4) [above=0.75cm of b1]{};
\node[state, circle, black, line width=0.35mm, minimum height=7pt, minimum width=7pt] (b5) [right=0.75cm of b4]{};

\draw[line width=0.3mm](a1) to (a2);
\draw[line width=0.3mm] (a1) to (a3);
\draw[line width=0.3mm] (a2) to (a4);
\draw[line width=0.3mm] (a4) to (a5);
\draw[line width=0.3mm] (a5) to (a3);

\draw[line width=0.3mm] (b1) to (b2);
\draw[line width=0.3mm] (b1) to (b3);
\draw[line width=0.3mm] (b2) to (b3);
\draw[line width=0.3mm] (b1) to (b4);
\draw[line width=0.3mm] (b1) to (b5);

\draw[line width=0.3mm] (a3) to (b1);

\end{tikzpicture}

}
\hspace{0.2cm}   
    \subcaptionbox{An optimal edge-cover of $H$ with $\rho(H) = 5.5$.}[0.3\textwidth]{

    \begin{tikzpicture}[ auto ,node distance =1cm and 2cm , on grid , semithick , state/.style ={ circle ,top color =white , bottom color = white , draw, black , text=black}, every node/.style={inner sep=0,outer sep=0}]

\node[state, circle, black, line width=0.35mm, minimum height=7pt, minimum width=7pt] (a1){};
\node[state, circle, black, line width=0.35mm, minimum height=7pt, minimum width=7pt] (a2) [below left=0.75cm and 0.75cm of a1]{};
\node[state, circle, black, line width=0.35mm, minimum height=7pt, minimum width=7pt] (a3) [below right=0.75cm and 0.75cm of a1]{};
\node[state, circle, black, line width=0.35mm, minimum height=7pt, minimum width=7pt] (a4) [below right=0.75cm and 0.3cm of a2]{};
\node[state, circle, black, line width=0.35mm, minimum height=7pt, minimum width=7pt] (a5) [below left=0.75cm and 0.3cm of a3]{};

\node[state, circle, black, line width=0.35mm, minimum height=7pt, minimum width=7pt](b1) [below right=0.75cm and 2cm of a1]{};
\node[state, circle, black, line width=0.35mm, minimum height=7pt, minimum width=7pt] (b2) [below=0.75cm of b1]{};
\node[state, circle, black, line width=0.35mm, minimum height=7pt, minimum width=7pt](b3) [right=0.75cm of b2]{};
\node[state, circle, black, line width=0.35mm, minimum height=7pt, minimum width=7pt] (b4) [above=0.75cm of b1]{};
\node[state, circle, black, line width=0.35mm, minimum height=7pt, minimum width=7pt] (b5) [right=0.75cm of b4]{};

\draw[line width=0.3mm](a1) to (a2);
\draw[line width=0.3mm] (a1) to (a3);
\draw[line width=0.3mm] (a2) to (a4);
\draw[line width=0.3mm] (a4) to (a5);
\draw[line width=0.3mm] (a5) to (a3);

\draw[line width=0.3mm] (b1) to (b2);
\draw[line width=0.3mm] (b1) to (b3);
\draw[line width=0.3mm] (b2) to (b3);
\draw[line width=0.3mm] (b1) to (b4);
\draw[line width=0.3mm] (b1) to (b5);

\draw[line width=0.3mm] (a3) to (b1);

\draw[line width=0.5mm, ForestGreen] (a1) to (a2) node[above right=0.5cm and 0.05cm of a2] {$0.5$};
\draw[line width=0.5mm, ForestGreen] (a1) to (a3) node[above left=0.5cm and 0.05cm of a3] {$0.5$};
\draw[line width=0.5mm, ForestGreen] (a2) to (a4) node[above left=0.25cm and 0.5cm of a4] {$0.5$};
\draw[line width=0.5mm, ForestGreen] (a4) to (a5) node[below left=0.25cm and 0.45cm of a5] {$0.5$};
\draw[line width=0.5mm, ForestGreen] (a5) to (a3) node[above right=0.25cm and 0.5cm of a5] {$0.5$};

\draw[line width=0.5mm, red] (b1) -- (b4) node [above left=0.3cm and 0.2cm of b1] {$1$};
\draw[line width=0.5mm, red] (b1) -- (b5) node [above right=0.3cm and 0.6cm of b1] {$1$};
\draw[line width=0.5mm, red] (b3) -- (b2) node[below right=0.25cm and 0.35cm of b2] {$1$};

\end{tikzpicture}

}\hspace{0.2cm}
\subcaptionbox{Decomposition of $H$.}[0.3\textwidth]{

    \begin{tikzpicture}[ auto ,node distance =1cm and 2cm , on grid , semithick , state/.style ={ circle ,top color =white , bottom color = white , draw, black , text=black}, every node/.style={inner sep=0,outer sep=0}]

\node[state, circle, black, line width=0.35mm, minimum height=7pt, minimum width=7pt] (a1){};
\node[state, circle, black, line width=0.35mm, minimum height=7pt, minimum width=7pt] (a2) [below left=0.75cm and 0.75cm of a1]{};
\node[state, circle, black, line width=0.35mm, minimum height=7pt, minimum width=7pt] (a3) [below right=0.75cm and 0.75cm of a1]{};
\node[state, circle, black, line width=0.35mm, minimum height=7pt, minimum width=7pt] (a4) [below right=0.75cm and 0.3cm of a2]{};
\node[state, circle, black, line width=0.35mm, minimum height=7pt, minimum width=7pt] (a5) [below left=0.75cm and 0.3cm of a3]{};

\node[state, circle, black, line width=0.35mm, minimum height=7pt, minimum width=7pt](b1) [below right=0.75cm and 2cm of a1]{};
\node[state, circle, black, line width=0.35mm, minimum height=7pt, minimum width=7pt] (b2) [below=0.75cm of b1]{};
\node[state, circle, black, line width=0.35mm, minimum height=7pt, minimum width=7pt](b3) [right=0.75cm of b2]{};
\node[state, circle, black, line width=0.35mm, minimum height=7pt, minimum width=7pt] (b4) [above=0.75cm of b1]{};
\node[state, circle, black, line width=0.35mm, minimum height=7pt, minimum width=7pt] (b5) [right=0.75cm of b4]{};

\draw[line width=0.5mm, ForestGreen] (a1) to (a2);
\draw[line width=0.5mm, ForestGreen] (a1) to (a3);
\draw[line width=0.5mm, ForestGreen] (a2) to (a4);
\draw[line width=0.5mm, ForestGreen] (a4) to (a5);
\draw[line width=0.5mm, ForestGreen] (a5) to (a3);

\draw[line width=0.5mm, red] (b1) -- (b4);
\draw[line width=0.5mm, red] (b1) -- (b5);
\draw[line width=0.5mm, red] (b3) -- (b2);

\end{tikzpicture}

}
    
    \caption{Illustration of the our decomposition for $H$ based on fractional edge-covers.}
    \label{fig:decomposition}
\end{figure}

For every $i \in [o]$, let the length of the odd cycle $\CC_i$ be $2k_i+1$ (i.e., $\CC_i = C_{2k_i+1}$); we define $\rhoC_{i} := k_i+1/2$. Similarly, for every $j \in [s]$, let the number of petals in $\SS_j$ be $\ell_j$ (i.e., $\SS_j = S_{\ell_j}$); we define
$\rhoS_j := \ell_j$. By Lemma~\ref{lem:subgraph-decomposition}, 
\begin{align}
	\rho(H) = \sum_{i=1}^{o} \rhoC_i + \sum_{j=1}^{s} \rhoS_j \label{eq:agm}.
\end{align}
\noindent
Recall that by AGM bound, the total number of copies of $H$ possible in $G$ is $m^{\rho(H)}$. 
We also use the following simple lemma which is a direct corollary of the AGM bound.

\begin{lemma}\label{lem:subgraph-AGM}
	Let $I := \set{i_1,\ldots,i_o}$ and $J:=\set{j_1,\ldots,j_s}$ be subsets of $[o]$ and $[s]$, respectively. Suppose 
	$\tH$ is the subgraph of $H$ on vertices of the odd cycles $\CC_{i_1},\ldots,\CC_{i_o}$ and stars $\SS_{j_1},\ldots,\SS_{j_s}$. 
	Then the total number of copies of $\tH$ in $G$ is at most $m^{\rho(\tH)}$ for $\rho(\tH) \leq \sum_{i \in I} \rhoC_i + \sum_{j \in J} \rhoS_j$.
\end{lemma}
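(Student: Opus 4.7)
The statement has two pieces: a bound on $\rho(\tH)$, and a bound on the number of copies of $\tH$ in $G$ in terms of $\rho(\tH)$. The plan is to establish the first piece directly from the optimal fractional edge-cover $x^*$ guaranteed by Lemma~\ref{lem:subgraph-decomposition}, and then to invoke the AGM bound for the second piece as a black box.

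For the first piece, I would construct an explicit fractional edge-cover $y$ of $\tH$ by restricting $x^*$ to the edges that lie inside the selected pieces of the decomposition. Concretely, for every edge $e$ of $\tH$, set $y(e) := x^*(e)$ if $e$ is an edge of some $\CC_i$ for $i \in I$ or of some $\SS_j$ for $j \in J$, and $y(e) := 0$ otherwise. Since the $\CC_i$'s and $\SS_j$'s in the decomposition are vertex-disjoint (by Lemma~\ref{lem:subgraph-decomposition}) and $\tH$ is defined on exactly the union of their vertex sets, every vertex $a$ of $\tH$ belongs to exactly one selected piece, and the edges of that piece incident to $a$ are themselves edges of $\tH$. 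Within a single odd cycle $\CC_i$ (where $x^*_e = 1/2$ on every cycle edge) the two cycle edges at $a$ already sum to $1$; within a star $\SS_j$ (where $x^*_e = 1$ on every petal) the incident petal edges sum to at least $1$ both at the center and at each leaf. Hence $y$ satisfies the covering inequality at every vertex of $\tH$.

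Summing $y$ over all edges gives exactly
\[
\sum_{e \in E(\tH)} y(e) \;=\; \sum_{i \in I} \sum_{e \in \CC_i} x^*(e) \;+\; \sum_{j \in J} \sum_{e \in \SS_j} x^*(e) \;=\; \sum_{i \in I} \rhoC_i + \sum_{j \in J} \rhoS_j,
\]
where the last equality uses the explicit values $\rhoC_i = k_i + 1/2$ and $\rhoS_j = \ell_j$ from Section~\ref{sec:edge-cover}. Since $\rho(\tH)$ is the minimum over all fractional edge-covers, this gives $\rho(\tH) \leq \sum_{i \in I} \rhoC_i + \sum_{j \in J} \rhoS_j$, as required.

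For the second piece, the bound on the number of copies of $\tH$ in $G$ follows immediately from the AGM bound of Atserias, Grohe and Marx~\cite{AtseriasGM08} (discussed in Section~\ref{sec:intro}), applied to $\tH$ as the pattern graph and $G$ as the host graph with $m$ edges. The only subtlety to be mindful of is whether $\tH$ might have isolated vertices (vertices of $H$ that belong to a selected piece but whose only incident edges in $H$ go to unselected pieces); since we are defining $\tH$ on the induced subgraph of $H$, an isolated vertex in $\tH$ would make $\rho(\tH)$ unbounded. However, every vertex of a selected piece is covered by an edge of that piece, which belongs to $\tH$, so no such isolated vertices exist. I expect this minor check and verifying the covering inequality at the star center to be the only subtleties; the rest is bookkeeping.
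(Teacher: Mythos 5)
Your proof takes the same route as the paper: define the restriction/projection of the optimal edge-cover $x^*$ to $\tH$, check it is a feasible edge-cover of $\tH$ with the stated value, and invoke the AGM bound. The paper states the feasibility check as "it is easy to see"; you simply spell out the details (vertex-disjointness of the decomposition pieces, the covering inequality at cycle vertices, star centers, and star leaves, and the absence of isolated vertices), all of which are correct.
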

\begin{proof}
	Let $x^*$ denote the optimal value of LP~(\ref{lp:ec}) in the decomposition $\DD(H)$. Define $y^*$ as the projection of $x^*$ to edges present in $\tH$. It is easy to see that $y^*$ is a feasible solution for
	LP~(\ref{lp:ec}) of $\tH$ with value $\sum_{i \in I} \rhoC_i + \sum_{j \in J} \rhoS_j$. The lemma now follows from  the AGM bound for $\tH$. 
\end{proof}

\subsection{Profiles of Cycles, Stars, and Subgraphs} \label{sec:profiles}

We conclude this section by specifying the representation of the potential occurrences of the subgraph $H$ in $G$ based on the decomposition $\DD(H)$. 

\noindent
\emph{Odd cycles:} We represent a potential occurrence of an odd cycle $C_{2k+1}$ in $G$ as follows. 
Let $\be= (\ve_1,\ldots,\ve_k) \in \vE^k$ be an ordered tuple of $k$ directed copies of edges in $G$ and suppose $\ve_i := (u_i,v_i)$ for all $i \in [k]$. Define $\ustar_{\be} 
= u_1$ and let $w$ be any vertex in $N(\ustar_{\be})$. 
We refer to any such collection $(\be,w)$ as a \emph{profile} of $C_{2k+1}$ in $G$. We say that ``the profile $(\be,w)$ {forms} a cycle $C_{2k+1}$ in $G$'' 
iff $(i)$  $u_1$ is the smallest vertex on the cycle according to $\prec$, $(ii)$ $v_1 \prec w$, \emph{and} $(iii)$ 
the edges $(u_1,v_1),(v_1,u_2),\ldots,(u_k,v_k),(v_k,w),(w,u_1)$ all exist in $G$ and hence there is a copy of $C_{2k+1}$ on vertices $\set{u_1,v_1,u_2,v_2,\ldots,u_k,v_k,w}$ in $G$. 
Note that under this definition and our definition of $\sC$, each copy of $C_{2k+1}$ correspond to exactly one profile $(\be,w)$ and vice versa. As such,
\begin{align}
	\sC = \sum_{\be \in \vE^k} \sum_{w \in N(\ustar_{\be})} \II\Paren{\textnormal{$(\be,w)$ forms a cycle $C_{2k+1}$ in $G$}}. \label{eq:cycle-profile}
\end{align}
\noindent	
\emph{Stars:} We represent a potential occurrence of a star $S_{\ell}$ in $G$ by $(v,\bw)$ where $v$ is the center of the star and $\bw =(w_1,\ldots,w_\ell)$ are the $\ell$ petals. We refer to $(v,\bw)$ as a \emph{profile} of $S_\ell$ in $G$. 
We say that ``the profile $(v,\bw)$ forms a star $S_{\ell}$ in $G$'' iff $(i)$ $\card{\bw} > 1$, \emph{or} $(ii)$ $(\ell=)\card{\bw} = 1$ and $v \prec w_1$; in both cases there is a copy of $S_\ell$ on  vertices $v,w_1,\ldots,w_{\ell}$.
Under this definition, each copy of $S_\ell$ corresponds to exactly one profile $(v,\bw)$. As such, 
\begin{align}
	\sS = \sum_{v \in V} \sum_{\bw \in N(v)^{\ell}} \II\Paren{\textnormal{$(v,\bw)$ forms a star $S_\ell$ in $G$}}. \label{eq:star-profile}
\end{align}
\noindent
\emph{Arbitrary subgraphs:} We represent a potential occurrence of $H$ in $G$ by an $(o+s)$-tuple 
$\bR:= ((\be_1,w_1),\ldots,(\be_o,w_o),(v_1,\bw_1),\ldots,(v_s,\bw_s))$ where $(\be_i,w_i)$ is a profile of the cycle $\CC_i$ in $\DD(H)$ and $(v_j,\bw_j)$ is a profile
of the star $\SS_j$. We refer to $\bR$ as a \emph{profile} of $H$ and say that ``the profile $\bR$ forms a copy of $H$ in $G$'' iff $(i)$ each profile forms a corresponding copy of $\CC_i$ or $\SS_j$ in $\DD(H)$, and $(ii)$ the remaining 
edges of $H$ between vertices specified by $\bR$ all are present in $G$ (note that by definition of the decomposition $\DD(H)$, all vertices of $H$ are specified by $\bR$). As such,
\begin{align}
	\sH = \sum_{\bR} \II\Paren{\textnormal{$\bR$ forms a copy of $H$ in $G$}} \cdot f(H), \label{eq:subgraph-profile}
\end{align}
for a fixed constant $f(H)$ depending only on $H$ as defined below. 
Let $\pi: V_H\to V_H$ be an automorphism of $H$. Let $C_1,\ldots, C_o$, $S_1,\ldots, S_s$ denote the cycles and stars in the decomposition of $H$. We say that $\pi$ is decomposition preserving if for every $i=1,\ldots, o$ cycle $C_i$ is mapped to a cycle of the 
same length and for every $i=1,\ldots, s$ $S_i$ is mapped to a star with the same number of petals. Let the number of decomposition preserving automorphisms of $H$ be denoted by $Z$, and define $f(H)=1/Z$. Define the quantity 
$\tsH := \sum_{\bR} \II\Paren{\textnormal{$\bR$ forms a copy of $H$ in $G$}}$ which is equal to $\sH$ modulo the scaling factor of $f(H)$. It is immediate that estimating $\sH$ and $\tsH$ are equivalent to each other and hence 
in the rest of the paper, with a slight abuse of notation, we use $\sH$ and $\tsH$ interchangeably.



\section{A Sublinear-Time Algorithm for Subgraph Counting}\label{sec:alg}

We now present our sublinear time algorithm for approximately counting number of any given arbitrary subgraph $H$ in an underlying graph $G$ and prove Theorem~\ref{thm:main-intro}. 
The main component of our algorithm is an unbiased estimator random variable for $\sH$ with low variance. The algorithm in Theorem~\ref{thm:main-intro} is then obtained by 
simply repeating this unbiased estimator in parallel enough number of times
(based on the variance) and outputting the average value of these estimators. 

\subsection{A Low-variance Unbiased Estimator for $\sH$}\label{sec:estimator}

We present a low-variance unbiased estimator for $\sH$ in this section. Our algorithm is a sampling based algorithm. In the following, we first introduce
two separate subroutines for sampling odd cycles ($\scestimator$) and stars ($\ssestimator$), and then use these components in conjunction with the decomposition we introduced in Section~\ref{sec:edge-cover}, 
to present our full algorithm. We should  right away clarify that $\scestimator$ and $\ssestimator$ are not exactly sampling a cycle or a star, but rather sampling a set of vertices and edges (in a non-uniform way) that can potentially
form a cycle or star in $G$, i.e., they sample a profile of these subgraphs defined in Section~\ref{sec:profiles}. 

\subsubsection*{The $\scestimator$ Algorithm}

We start with the following algorithm for sampling an odd cycle $C_{2k+1}$ for some $k \geq 1$. 
{This algorithm outputs a simple data structure, named the \emph{cycle-sampler tree}, that provides a convenient representation of the samples taken by our algorithm (see Definition~\ref{def:cst} immediately after the description of the algorithm). This data structure can be easily avoided when designing a cycle counting algorithm, but will be quite useful for reasoning about the recursive structure of our sampling algorithm for general graphs $H$.}

\begin{tbox}
	$\scestimator(G,C_{2k+1})$. 
	
	\begin{enumerate}
		\item\label{line1:sample-query} Sample $k$ directed edges $\be := \paren{\ve_1,\ldots,\ve_{k}}$ uniformly at random (with replacement) from $G$
		with the constraint that for $\ve_1 = (u_1,v_1)$, $u_1 \prec v_1$. 
		\item\label{line1:degree-query} Let $\ustar_{\be} := u_1$  and let $\dstar_{\be} := d_{\ustar_{\be}}$. 
		\item\label{line1:for-loop} For $i=1$ to $t_{\be}:= \ceil{{\dstar_{\be}}/{\sqrt{m}}}$:  Sample a vertex $w_i$ uniformly at random from $N(\ustar_{\be})$. 
	\item Let $\bw := (w_1,\ldots,w_{t_{\be}})$. Return the cycle-sampler tree $\TT(\be,\bw)$ (see Definition~\ref{def:cst}).
	\end{enumerate}
\end{tbox}

 \begin{definition}[Cycle-Sampler Tree]\label{def:cst}
 	The cycle-sampler tree $\TT(\be,\bw)$ for the tuple $(\be,\bw)$ sampled by $\scestimator(G,C_{2k+1})$ is the following 2-level tree $\TT$: 
	\begin{itemize}
		\item Each node $\alpha$ of the tree contains two attributes: $\Label{\alpha}$ which consists of some of the edges and vertices in $(\be,\bw)$, and an integer $\Value{\alpha}$.
		\item For the root $\alpha_r$ of $\TT$, $\Label{\alpha_r} := \be$ and $\Value{\alpha_r} := (2m)^{k}/2$. 
		
		\emph{(}$\Value{\alpha_r}$ is equal to the inverse
		of the probability that $\be$ is sampled by $\scestimator$\emph{)}. 
		\item The root $\alpha_r$ has $t_{\be}$ child-nodes in $\TT$ for a parameter $t_{\be}= \ceil{{\dstar_{\be}}/{\sqrt{m}}}$ (consistent with line~3 of $\scestimator(G,C_{2k+1})$ above). 
		\item For the $i$-th child-node $\alpha_i$ of root, $i\in [t_{\be}]$, $\Label{\alpha_i} := w_i$ and $\Value{\alpha_i} := \dstar_{\be}$ 
		
		\emph{(}$\Value{\alpha_i}$ is equal to the inverse
		of the probability that $w_i$ is sampled by $\scestimator$, conditioned on $\be$ being sampled\emph{)}.
	\end{itemize}
	Moreover, for each root-to-leaf path $\PP_i := (\alpha_r,\alpha_i)$ (for $i \in [t_{\be}]$), define $\Label{\PP_i} := \Label{\alpha_r} \cup \Label{\alpha_i}$ and $\Value{\PP_i} := \Value{\alpha_r} \cdot \Value{\alpha_i}$ 
	\emph{(}$\Label{\PP_i}$ is a profile of the cycle $C_{2k+1}$ as defined in Section~\ref{sec:profiles}\emph{)}.
 \end{definition}
 \noindent
 See Figure~\ref{fig:A} for an illustration of a cycle-sampler tree.  
 
$\scestimator$ can be implemented in our query model by using $k$ edge-sample queries (and picking the correct direction for $e_1$ based on $\prec$ and one of the two directions uniformly at random for the other edges) in Line~(\ref{line1:sample-query}), two degree queries in Line~(\ref{line1:degree-query}), and one neighbor query in Line~(\ref{line1:for-loop}). This
results in $O(k)$ queries in total for one iteration of the for-loop in Line~(\ref{line1:for-loop}). As such, the total query complexity of $\scestimator$ is $O(t_{\be})$ (recall that $k$ is a constant). It is also 
straightforward to verify that we can compute the cycle-sampler tree $\TT$ of an execution of $\scestimator$ with no further queries and in $O(t_{\be})$ time. 
We bound the query complexity of this algorithm by bounding the expected number of iterations in the for-loop. 

 \begin{lemma}\label{lem:scestimator-time}
 	For the parameter $t_{\be}$ in Line~(\ref{line1:for-loop}) of $\scestimator$, $\Ex\bracket{t_{\be}} = O(1)$. 
 \end{lemma}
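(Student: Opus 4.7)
The plan is to bound the expectation of $d^*_{\be}/\sqrt{m}$ using the ordering $\prec$ together with Proposition~\ref{prop:min-degree}. First I would observe that only $\ve_1$ affects $t_{\be}$: by construction $\ustar_{\be} = u_1$ where $\ve_1 = (u_1, v_1)$, so $d^*_{\be}$ depends only on the first sampled directed edge, and the remaining edges $\ve_2, \ldots, \ve_k$ are irrelevant for the calculation. Next, because the orientation of $\ve_1$ is chosen so that $u_1 \prec v_1$, and $\prec$ is defined to break ties by degree (smaller first) before lexicographic order, we have $d_{u_1} \leq d_{v_1}$. Consequently $d^*_{\be} = d_{u_1} = \min(d_{u_1}, d_{v_1})$.

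The next step is to recognize that conditioning a uniformly random directed edge on the orientation constraint $u_1 \prec v_1$ is equivalent to sampling an undirected edge $\{u_1, v_1\}$ uniformly from $E$ and then labeling its endpoints according to $\prec$. Thus
\[
\Ex[d^*_{\be}] \; = \; \frac{1}{m} \sum_{\{u,v\} \in E} \min(d_u, d_v).
\]
Applying Proposition~\ref{prop:min-degree}, the right-hand side is at most $\frac{1}{m} \cdot 5m\sqrt{m} = 5\sqrt{m}$.

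Finally, since $t_{\be} = \lceil d^*_{\be}/\sqrt{m} \rceil \leq d^*_{\be}/\sqrt{m} + 1$, linearity of expectation gives
\[
\Ex[t_{\be}] \; \leq \; \Ex[d^*_{\be}]/\sqrt{m} + 1 \; \leq \; 6 \; = \; O(1),
\]
as claimed. There is no real obstacle here: the whole argument is a one-line calculation once Proposition~\ref{prop:min-degree} is in hand, and the only subtlety worth flagging is that the ordering $\prec$ is what forces $d^*_{\be}$ to equal the minimum degree among the endpoints of the random edge rather than an arbitrary one.
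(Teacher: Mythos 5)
Your proof is correct and follows essentially the same approach as the paper's: both observe that the ordering constraint $u_1 \prec v_1$ forces $\dstar_{\be} = \min(d_{u_1}, d_{v_1})$ for a uniformly random edge, then apply Proposition~\ref{prop:min-degree} and the bound $\ceil{x} \leq x+1$. Your version is merely more explicit in spelling out why only $\ve_1$ matters and why the conditioning on orientation is equivalent to sampling a uniform undirected edge.
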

 \begin{proof}
 	By definition, $t_{\be} := \ceil{{\dstar_{\be}}/{\sqrt{m}}}$ for $\dstar_{\be} = \min(d_u,d_v)$ for an edge $e_1 = (u,v)$ chosen uniformly at random from $G$. As such, by Proposition~\ref{prop:min-degree}, 
	\begin{align*}
		\Ex\bracket{t_{\be}} = 1+O(m^{-1/2}) \cdot \frac{1}{m} \cdot \sum_{(u,v) \in E} \min(d_u,d_v) \Eq{Proposition~\ref{prop:min-degree}} O(m^{-3/2}) \cdot 5m\sqrt{m}= O(1). \qed
	\end{align*}
	
 \end{proof}

We now define a process for estimating the number of odd cycles in a graph using the information stored in the cycle-sampler tree and the $\scestimator$ algorithm. 
While we do not use this process in a black-box way in our main algorithm, abstracting it out makes the analysis of our main algorithm simpler to follow and more transparent, and serves as a warm-up for our main algorithm.

\paragraph{Warm-up: An Estimator for Odd Cycles.} Let $\TT:=\scestimator(G, C_{2k+1})$ be the output of an invocation of $\scestimator$. Note that the cycle-sampler tree $\TT$ is a random variable depending on the randomness of $\scestimator$. We define the 
random variable $X_i$ such that $X_i := \Label{\PP_i}$ for the $i$-th root-to-leaf path iff $\Label{\PP_i}$ forms a copy of $C_{2k+1}$ in $G$ and otherwise $X_i := 0$ (according to the definition of Section~\ref{sec:edge-cover}). 
We further define $Y := \frac{1}{t_{\be}} \cdot \sum_{i=1}^{t_{\be}} X_i$ (note that $t_{\be}$ is also a random variable). Our estimator algorithm can compute the value of these random variables
using the information stored in the tree $\TT$ plus additional $O(k) = O(1)$ queries for each of the $t_{\be}$ root-to-leaf path $\PP_i$ to detect whether $(\be,w_i)$ forms a copy of $H$ or not. Thus, the query
complexity and runtime of the estimator is still $O(t_{\be})$  (which in expectation is $O(1)$ by Lemma~\ref{lem:scestimator-time}). 
We now analyze its expectation and variance. 

 \begin{lemma}\label{lem:scestimator-output}
 	For the random variable $Y$ associated with $\scestimator(G,C_{2k+1})$, 
	\begin{align*}
		\expect{Y} &= (\sC), \qquad \var{Y} \leq (2m)^{k}\sqrt{m} \cdot \expect{Y}.
	\end{align*}
 \end{lemma}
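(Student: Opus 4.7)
The plan is to compute $\expect{Y}$ directly and then to bound $\var{Y}$ via the law of total variance~\eqref{eq:total-variance} by conditioning on the sampled edge-tuple $\be$. Throughout, I will write $Z_{\be} := |\{w \in N(u^*_{\be}) : (\be, w) \textnormal{ forms a cycle } C_{2k+1}\}|$ for the number of valid completions of $\be$, and I will read $X_i$ as $\Value{\PP_i} \cdot \II[(\be,w_i) \textnormal{ forms a cycle}]$, i.e.\ as an inverse-probability--reweighted indicator, since this is the natural unbiased estimator built from the cycle-sampler tree.

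For the expectation, conditionally on $\be$ each $w_i$ is uniform on the $d^*_{\be}$-element set $N(u^*_{\be})$, so $\expect{X_i \mid \be} = \tfrac{(2m)^k d^*_{\be}}{2} \cdot \tfrac{Z_{\be}}{d^*_{\be}} = \tfrac{(2m)^k}{2} Z_{\be}$. Because the $X_i$ are iid given $\be$, the same holds for $Y$. Averaging over the $(2m)^k/2$ equally-likely values of $\be$ and invoking the profile identity~\eqref{eq:cycle-profile} then yields $\expect{Y} = \sC$.

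The inner term $\expect{\var{Y \mid \be}}$ of the total variance admits a clean bound: given $\be$, $Y$ is the average of $t_{\be}$ iid copies of $X_1$, so $\var{Y \mid \be} \leq \expect{X_1^2 \mid \be}/t_{\be}$. A short computation gives $\expect{X_1^2 \mid \be} = (2m)^{2k} d^*_{\be} Z_{\be}/4$, and the key feature is that the rule $t_{\be} \geq d^*_{\be}/\sqrt{m}$ cancels the troublesome $d^*_{\be}$ factor, yielding $\var{Y \mid \be} \leq (2m)^{2k} \sqrt{m} Z_{\be}/4$. Averaging over $\be$ and re-using the formula for $\expect{Z_{\be}}$ obtained above produces $\expect{\var{Y \mid \be}} \leq (2m)^k \sqrt{m} \cdot \sC/2$, safely within the claimed bound.

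The hard part is the outer term $\var{\expect{Y \mid \be}} = \tfrac{(2m)^{2k}}{4} \var{Z_{\be}} \leq \tfrac{(2m)^{2k}}{4}\expect{Z_{\be}^2}$. Using the trivial bound $Z_{\be} \leq d^*_{\be}$ (every completion lies in $N(u^*_{\be})$) reduces this to controlling $\expect{Z_{\be} d^*_{\be}} = \tfrac{2}{(2m)^k}\sum_C d_{u_1(C)}$, where the sum is over cycles $C$ of length $2k+1$ in $G$ and $u_1(C)$ denotes the smallest-$\prec$ vertex of $C$. The main obstacle is therefore the combinatorial inequality $\sum_C d_{u_1(C)} = O(\sqrt{m} \cdot \sC)$. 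I plan to attack this by a charging argument: since $u_1(C)$ has the minimum degree on $C$ (by definition of $\prec$), one has $d_{u_1(C)} \leq \min(d_x,d_y)$ for \emph{every} edge $(x,y) \in C$. Averaging this inequality over the $2k+1$ edges of $C$ and swapping the order of summation gives
\[
	\sum_C d_{u_1(C)} \;\leq\; \tfrac{1}{2k+1} \sum_{(x,y) \in E} \min(d_x, d_y) \cdot T(x,y),
\]
where $T(x,y)$ counts cycles $C_{2k+1}$ through the edge $(x,y)$. Combining Proposition~\ref{prop:min-degree} on the edge-weighted sum with an amortized bound on $T(x,y)$ (using $\sum_{(x,y)} T(x,y) = (2k+1)\sC$ and, where necessary, the AGM estimate $\sC \leq m^{(2k+1)/2}$ applied to the $C_{2k+1}$-minus-an-edge subgraph) should close the gap. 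Plugging this back in, the outer variance contributes at most another $O((2m)^k \sqrt{m} \cdot \sC)$ term, which together with the inner variance bound gives the claimed $\var{Y} \leq (2m)^k \sqrt{m} \cdot \expect{Y}$.
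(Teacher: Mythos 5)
Your expectation computation and inner-variance bound are correct, and you are right that $\expect{Y\mid\be}=\tfrac{(2m)^k}{2}Z_{\be}$ (the paper's write-up asserts $\expect{Y\mid\be}=\ssC{\be}$, dropping this factor). The problem is the outer term. Passing from $Z_{\be}^2$ to $Z_{\be}\cdot\dstar_{\be}$ reduces the task to proving $\sum_C d_{u_1(C)}\le O(\sqrt m\cdot\sC)$, but that inequality is \emph{false}. Take $G$ to be a single triangle on $\set{a,b,c}$ with $d-2$ pendant edges attached to each of $a,b,c$; then $m=3d-3$, $\sC=1$, and $\sum_C d_{u_1(C)}=d$, while $\sqrt m\cdot\sC=\Theta(\sqrt d)$, so the ratio grows like $\sqrt d$. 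Averaging $d_{u_1(C)}\le\min(d_x,d_y)$ over the cycle edges and invoking Proposition~\ref{prop:min-degree} cannot rescue a false statement, so this plan does not close.

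The fix is a second, complementary upper bound on $Z_{\be}$: every $w$ contributing to $Z_{\be}$ satisfies $\ustar_{\be}\prec w$ (since $\ustar_{\be}$ must be the $\prec$-smallest vertex of the resulting cycle), hence $d_w\ge\dstar_{\be}$, and there are at most $2m/\dstar_{\be}$ vertices of degree at least $\dstar_{\be}$. Therefore $Z_{\be}\le\min(\dstar_{\be},\,2m/\dstar_{\be})\le\sqrt{2m}$, so $\expect{Z_{\be}^2}\le\sqrt{2m}\cdot\expect{Z_{\be}}=\tfrac{2\sqrt{2m}}{(2m)^k}\cdot\sC$, and consequently $\var{\expect{Y\mid\be}}\le\tfrac{(2m)^{2k}}{4}\cdot\tfrac{2\sqrt{2m}}{(2m)^k}\cdot\sC=O\big((2m)^k\sqrt m\cdot\sC\big)$. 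Combined with your inner-variance bound, this establishes the lemma up to the constant. Incidentally, the paper's own chain also needs this observation: once the dropped $(2m)^k/2$ factor is reinstated, the crude step $\sum_{\be}Z_{\be}^2\le\big(\sum_{\be}Z_{\be}\big)^2$ followed by the AGM bound $\sC\le m^{k+1/2}$ overshoots the target by a factor of $m^k$.
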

 \begin{proof}
 	We first analyze the expected value of $X_i$'s and then use this to bound $\expect{Y}$. For any $i \in [t_{\be}]$, we have, $
	X_i = \Value{\alpha_r} \cdot \Value{\alpha_i} \cdot \II\paren{\text{$\Label{\alpha_r}\cup \Label{\alpha_i}$ forms a copy of $C_{2k+1}$}}$, where, as per Definition~\ref{def:cst}, $\alpha_r$ is the root of $\TT$.
	As such, 
	\begin{align*}
		\Ex\bracket{X_i} &= \sum_{\be \in \vE^{k}} \sum_{w \in N(\ustar_{\be})} \Pr\paren{\Label{\alpha_r} = \be} \cdot \Pr\paren{\Label{\alpha_i} = w} \\
		&\hspace{4cm} \cdot \II\paren{(\be,w) \textnormal{ forms a copy of $C_{2k+1}$}} \cdot \Value{\alpha_r} \cdot \Value{\alpha_i} \\ 
		&= \frac{2}{(2m)^{k}} \cdot \sum_{\be} \paren{\frac{1}{\dstar_{\be}} \cdot \sum_{w} \II\paren{(\be,w) \textnormal{ forms a copy of $C_{2k+1}$}} \cdot \frac{(2m)^{k}}{2} \cdot \dstar_{\be}} \\
		&= \sum_{\be} \sum_{w} \II\paren{(\be,w) \textnormal{ forms a copy of $C_{2k+1}$}} \Eq{Eq~(\ref{eq:cycle-profile})} (\sC).
	\end{align*}
	As $\expect{Y} = \expect{X_i}$ for any $i \in [t]$ by linearity of expectation, we obtain the desired bound on $\expect{Y}$. 
	
	We now bound $\var{Y}$ using the fact that it is obtained by taking 
	average of $t_{\be}$ random variables that are independent \emph{after} we condition on the choice of $\be$. We formalize this as follows. 
	Note that for any two $i \neq j$, the random variables $X_i \mid \be$ and $X_j \mid \be$
	are independent of each other (even though $X_i$ and $X_j$ in general are correlated). By the law of total variance in Eq~(\ref{eq:total-variance}),
	\begin{align}
		\var{Y} &= \expect{\var{Y \mid \be}} + \var{\expect{Y \mid \be}}. \label{eq:var1}
	\end{align}
	We bound each term separately now. Recall that $Y := \frac{1}{t_{\be}} \sum_{i=1}^{t_{\be}} X_i$ and $X_i$'s are independent conditioned on $\be$. As such, 
	\begin{align*}
		\expect{\var{Y \mid \be}} &= \frac{2}{(2m)^{k}} \sum_{\be \in \vE^k} \var{Y \mid \be} = \frac{2}{(2m)^{k}} \sum_{\be} \frac{1}{t^2_{\be}} \cdot \sum_{i=1}^{t_{\be}} \var{X_i \mid \be} \tag{by conditional independence of $X_i$'s} \\
		&\leq\frac{2}{(2m)^{k}} \sum_{\be} \frac{1}{t_{\be}} \expect{X_1^2 \mid \be} \tag{as distribution of all $X_i$'s are the same}.
	\end{align*}
	Hence, it suffices to calculate $\expect{X_1^2 \mid \be}$. We have, 
	\begin{align*}
		\expect{X^2_1 \mid \be} &= \sum_{w \in N(\ustar)} \Pr\paren{\Label{\alpha_1} = w} \cdot \II\paren{(\be,w) \textnormal{ forms a copy of $C_{2k+1}$}} \cdot \paren{\Value{\alpha_r} \cdot \Value{\alpha_1}}^2 \\
		&= \frac{1}{\dstar_{\be}} \cdot \sum_{w} \II\paren{(\be,w) \textnormal{ forms a copy of $C_{2k+1}$}} \cdot \paren{\frac{(2m)^{k}}{2} \cdot \dstar_{\be}}^2 \\
		&\leq \frac{(2m)^{2k}}{4} \cdot \dstar_{\be} \cdot \sum_{w} \II\paren{(\be,w) \textnormal{ forms a copy of $C_{2k+1}$}} \\
		&\hspace{-8pt}\Eq{Eq~(\ref{eq:cycle-profile})} \frac{(2m)^{2k}}{4} \cdot \dstar_{\be}  \cdot \ssC{\be},
	\end{align*}
	where $\ssC{\be}$ denotes the number of copies of $C_{2k+1}$ containing the sub-profile $\be = (\ve_1,\ldots,\ve_k)$. By plugging in this bound in the above equation, we have,
	\begin{align}
		\expect{\var{Y \mid \be}}  &\leq \frac{2}{(2m)^k} \sum_{\be \in \vE^k} \frac{1}{t_{\be}} \cdot \frac{(2m)^{2k}}{4} \cdot \dstar_{\be}  \cdot \ssC{\be} \notag \\
		&\leq \frac{(2m)^{k}\sqrt{m}}{2}\sum_{\be} \ssC{\be} = \frac{(2m)^{k}\sqrt{m}}{2} \cdot (\sC), \label{eq:var2}
	\end{align}
	by the choice of $t_{\be} \geq \dstar_{\be}/\sqrt{m}$. We now bound the second term in Eq~(\ref{eq:var1}). Note that by 
	the by proof of $\expect{Y} = (\sC)$ above, we also have $\expect{Y \mid \be} = \ssC{\be}$. As such, 
	\begin{align*}
		\var{\expect{Y \mid \be}}  &= \var{\ssC{\be}} \leq \expect{\ssC{\be}^2} = \frac{2}{(2m)^{k}} \sum_{\be} \ssC{\be}^2 \\
		&\leq \frac{2}{(2m)^{k}} \cdot \paren{\sum_{\be}\ssC{\be}}^2 = \frac{2}{(2m)^{k}} \paren{\sC}^2 
		\leq \sqrt{m} \cdot \paren{\sC},
	\end{align*}
	where the last inequality is by AGM bound in Lemma~\ref{lem:subgraph-AGM} which states that $\paren{\sC} \leq m^{k}\sqrt{m}$ (as $\rho(C_{2k+1}) = k+1/2$). Plugging in this bound in the second term of Eq~(\ref{eq:var1}) and using Eq~(\ref{eq:var2}) for the first term yields:
	\begin{align*}
		\var{Y} &= \expect{\var{Y \mid \be}} + \var{\expect{Y \mid \be}} \\
		&\leq \frac{(2m)^{k}\sqrt{m}}{2} \cdot (\sC) + \sqrt{m} \cdot \paren{\sC} \leq (2m)^{k}\sqrt{m} \cdot (\sC) = (2m)^{k} \sqrt{m} \cdot \expect{Y},
	\end{align*}
	where the equality is by the bound on $\expect{Y}$ proven in the first part. 
 \end{proof}

 \subsubsection*{The $\ssestimator$ Algorithm}
 
 We now give an algorithm for sampling a star $S_{\ell}$ with $\ell$ petals. 
 {Similar to $\scestimator$, this algorithm also outputs a simple data structure, named the \emph{star-sampler tree}, that provides a convenient representation of the samples taken by our algorithm (see Definition~\ref{def:sst}, immediately after the description of the algorithm). This data structure can be easily avoided when designing a star counting algorithm, but will be quite useful for reasoning about the recursive structure of our sampling algorithm for general graphs $H$.}

 \begin{tbox}
	$\ssestimator(G,S_\ell)$. 
	
	\begin{enumerate}
		\item\label{line2:sample-query} Sample a vertex $v \in V$ chosen with probability proportional to its degree in $G$ (i.e., for any vertex $u \in V$, $\Pr\paren{\text{$u$ is chosen as the vertex $v$}} = d_u/2m$).  
		\item\label{line2:neighbor-query} Sample $\ell$ vertices $\bw:=(w_1,\ldots,w_{\ell})$ from $N(v)$ uniformly at random (without replacement). 
		\item Return the star-sampler tree $\TT(v,\bw)$ (see Definition~\ref{def:sst}). 
	\end{enumerate}
\end{tbox}
 
  \begin{definition}[Star-Sampler Tree]\label{def:sst}
 	The star-sampler tree $\TT(v,\bw)$ for the tuple $(v,\bw)$ sampled by $\ssestimator(G,S_{\ell})$ is the following 2-level tree $\TT$ (with the same attributes as in Definition~\ref{def:cst}) 
	with only two nodes: 
	\begin{itemize}
		\item For the root $\alpha_r$ of $\TT$, $\Label{\alpha_r} := v$ and $\Value{\alpha_r} := 2m/d_v$. 
		
		\emph{(}$\Value{\alpha_r}$ is equal to the inverse
		of the probability that $v$ is sampled by $\ssestimator$\emph{)}. 
		
		\item The root $\alpha_r$ has exactly one child-node $\alpha_l$ in $\TT$ with $\Label{\alpha_l} = \bw = (w_1,\ldots,w_{\ell})$ and $\Value{\alpha_l} = {{d_v}\choose{\ell}}$.
		
		\emph{(}$\Value{\alpha_l}$ is equal to the inverse
		of the probability that $\bw$ is sampled by $\ssestimator$, conditioned on $v$ being sampled\emph{)}. 
	\end{itemize}
	Moreover, for the root-to-leaf path $\PP := (\alpha_r,\alpha_l)$, we define $\Label{\PP} := \Label{\alpha_r} \cup \Label{\alpha_l}$ and $\Value{\PP} := \Value{\alpha_r} \cdot \Value{\alpha_l}$. \emph{(}$\Label{\PP}$ is a representation of the star $S_{\ell}$ as defined in Section~\ref{sec:profiles}\emph{)}.
 \end{definition}
 \noindent
 See Figures~\ref{fig:B} and~\ref{fig:C} for an illustration of star-sampler trees. 
 
$\ssestimator$ can be implemented in our query model by using one edge-sample query in Line~(\ref{line2:sample-query}) and then picking one of the endpoints uniformly at random, 
a degree query to determine the degree of $v$, and $\ell$ neighbor queries in Line~(\ref{line2:neighbor-query}), resulting in $O(\ell)$ queries in total. 
It is also straightforward to verify that we can compute the star-sampler tree $\TT$ of an execution of $\ssestimator$ with no further queries and in $O(1)$ time.

We again define a process for estimating the number of stars in a graph using the information stored in the star-sampler tree and the $\ssestimator$ algorithm, as a warm-up to our main result in the next section. 

\paragraph{Warm-up: An Estimator for Stars.} The star-sampler tree $\TT$ is a random variable depending on the randomness of $\ssestimator$. We define the 
random variable $X$ such that $X := \Value{\PP}$ for the root-to-leaf path of $\TT$ iff $\Label{\PP}$ forms a copy of $S_{\ell}$ in $G$ and otherwise $X := 0$.  Our estimator algorithm can compute the value of this random variable
using only the information stored in the tree $\TT$ with no further queries to the graph (by simply checking if all $w_i$'s in $\bw$ are distinct). As such, the query
complexity and runtime of the estimator algorithm is still $O(1)$. We now prove,

\begin{lemma}\label{lem:ssestimator-output}
	For the random variable $X$ associated with $\ssestimator(G,S_\ell)$, 
	\begin{align*}
		\expect{X} &= (\sS), \qquad 
		\var{X} \leq 2m^{\ell} \cdot \expect{X}.
	\end{align*}
\end{lemma}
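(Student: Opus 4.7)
The plan is to mimic the template used for Lemma~\ref{lem:scestimator-output}, exploiting that $\Value{\PP}$ on the unique root-to-leaf path of the star-sampler tree is by construction the reciprocal of the probability that $\ssestimator$ produces the profile $(v,\bw)$. Concretely, for any profile, $\Pr[\,v\text{ chosen}\,]\cdot\Pr[\bw\text{ chosen}\mid v]=\frac{d_v}{2m}\cdot\frac{1}{\binom{d_v}{\ell}}$, while $\Value{\PP}=\Value{\alpha_r}\cdot\Value{\alpha_l}=\frac{2m}{d_v}\binom{d_v}{\ell}$, so the product of these two quantities equals $1$.

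For the expectation, I would expand
\[
\Ex[X]=\sum_{v\in V}\sum_{\bw}\Pr[v]\cdot\Pr[\bw\mid v]\cdot\Value{\PP}\cdot\II\bigl((v,\bw)\text{ forms }S_\ell\bigr),
\]
and use the identity above to collapse $\Pr\cdot\Value=1$, leaving $\Ex[X]=\sum_{v,\bw}\II(\cdot)=\sS$ by Eq.~\eqref{eq:star-profile}.

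For the variance, I would start with $\Var[X]\le\Ex[X^2]$ and apply the same expansion, but now carrying an extra factor of $\Value{\PP}$:
\[
\Ex[X^2]=\sum_{v}\sum_{\bw}\II\bigl((v,\bw)\text{ forms }S_\ell\bigr)\cdot\frac{2m}{d_v}\binom{d_v}{\ell}=\sum_v \sS(v)\cdot\frac{2m}{d_v}\binom{d_v}{\ell},
\]
where $\sS(v)$ denotes the number of copies of $S_\ell$ whose profile has center $v$. Since every copy of $S_\ell$ (for $\ell\ge 2$) has a unique center and for $\ell=1$ the ordering $v\prec w_1$ picks out a unique endpoint, $\sum_v \sS(v)=\sS$.

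It remains to bound the per-vertex coefficient $\frac{2m}{d_v}\binom{d_v}{\ell}$. Using $\binom{d_v}{\ell}\le d_v^{\ell}/\ell!$ and the elementary observation that the $d_v$ edges incident to $v$ are all distinct edges of $G$, so $d_v\le m$, one gets $\frac{2m}{d_v}\binom{d_v}{\ell}\le \frac{2m\cdot d_v^{\ell-1}}{\ell!}\le \frac{2m^\ell}{\ell!}\le 2m^\ell$. Combining, $\Ex[X^2]\le 2m^\ell\cdot\sS=2m^\ell\cdot\Ex[X]$, which yields the desired variance bound. The only slightly subtle step is the bookkeeping that turns the $d_v$-dependent coefficient into a clean $m^\ell$ factor; once $d_v\le m$ is noted, everything else is routine algebra.
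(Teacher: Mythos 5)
Your proof is correct and follows essentially the same route as the paper: the expectation argument uses the exact identity $\Pr\cdot\Value = 1$, and the variance bound proceeds by computing $\Ex[X^2]$, cancelling one factor of $\Pr\cdot\Value$, and then bounding the surviving coefficient $\frac{2m}{d_v}\binom{d_v}{\ell}$ by $2m^\ell$ via $\binom{d_v}{\ell}\le d_v^\ell$ and $d_v\le m$. The only differences from the paper's proof are cosmetic: you group the sum by the center vertex via the auxiliary notation $\sS(v)$, and you carry the sharper $\binom{d_v}{\ell}\le d_v^\ell/\ell!$ before discarding the $\ell!$; both changes are harmless and the resulting bound is identical.
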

\begin{proof}
	Firstly, we have $X = \Value{\alpha_r} \cdot \Value{\alpha_l} \cdot \II\paren{\text{$\Label{\alpha_r}\cup \Label{\alpha_l}$ forms a copy of $S_{\ell}$}}$. As such, 
	\begin{align*}
	\expect{X} &=  \sum_{v \in V}\sum_{\substack{\bw  \in N(v)^{\ell}}}\Pr\paren{\Label{\alpha_r}=v} \cdot \Pr\paren{\Label{\alpha_l} = \bw} \\
		 &\hspace{4.5cm} \cdot \II(\text{$(v,\bw)$ forms a copy of $S_{\ell}$}) \cdot \Value{\alpha_r} \cdot \Value{\alpha_l}	\\
		&=	\sum_{v}\frac{d_v}{2m} \cdot \sum_{\bw} \frac{1}{{{d_v}\choose{\ell}}}  \cdot \II(\text{$(v,\bw)$ forms a copy of $S_{\ell}$}) \cdot (2m/d_v) \cdot {{d_v}\choose{\ell}} \\
		&= \sum_{v}\sum_{\bw} \II(\text{$(v,\bw)$ forms a copy of $S_{\ell}$}) = (\sS).
	\end{align*}
	This proves the desired bound on the exception. We now bound $\var{X}$.
	\begin{align*}
		\var{X} &\leq \expect{X^2} =  \sum_{v \in V}\sum_{\substack{\bw  \in N(v)^{\ell}}}\Pr\paren{\Label{\alpha_r}=v} \cdot \Pr\paren{\Label{\alpha_l} = \bw} \\
		 &\hspace{4.5cm} \cdot \II(\text{$(v,\bw)$ forms a copy of $S_{\ell}$}) \cdot \paren{\Value{\alpha_r} \cdot \Value{\alpha_l}}^2 \\
		&= \sum_{v }\frac{d_v}{2m} \cdot \sum_{\bw} \frac{1}{{{d_v}\choose{\ell}}}  \cdot \II(\text{$(v,\bw)$ forms a copy of $S_{\ell}$}) \cdot \paren{(2m/d_v) \cdot {{d_v}\choose{\ell}}}^2 \\
		&= \sum_{v } \sum_{\bw} \II(\text{$(v,\bw)$ forms a copy of $S_{\ell}$}) \cdot (2m/d_v) \cdot {{d_v}\choose{\ell}} \\
		&\leq \sum_{v } \sum_{\bw} \II(\text{$(v,\bw)$ forms a copy of $S_{\ell}$}) \cdot 2m \cdot {d_v}^{\ell-1} \tag{since ${{d_v}\choose{\ell}} \leq d_v^{\ell}$}\\
		&\leq 2m^{\ell} \cdot \sum_{v} \sum_{\bw} \II(\text{$(v,\bw)$ forms a copy of $S_{\ell}$}) \tag{since $d_v \leq m$} \\
		&= 2m^{\ell} \cdot (\sS) = 2m^{\ell} \cdot \expect{X},
	\end{align*}
	by the bound on $\expect{X}$ in the first part. 
\end{proof}

\begin{remark}\label{rem:star-variance}
	As we pointed out earlier, the bounds achieved by our Theorem~\ref{thm:main-intro} for counting stars are suboptimal in the light of the results in~\cite{AliakbarpourBGP18}. In Appendix~\ref{app:star-variance}, we show that 
	in fact our estimator in this section---using \emph{a different analysis} which is similar to that of~\cite{AliakbarpourBGP18}---also matches the optimal bounds achieved by~\cite{AliakbarpourBGP18}. 
	This suggests that even for the case of stars our algorithm in Theorem~\ref{thm:main-intro} is still optimal even though the general bounds in the theorem statement are not. 
	We note that our main estimator algorithm relies on the particular analysis of the estimator
	for stars presented in this section as the alternate analysis does not seem to compose with the rest of the argument.	
\end{remark}

\subsubsection*{The Estimator Algorithm for Arbitrary Subgraphs}

We now present our main estimator for the number of occurrences of an arbitrary subgraph $H$ in $G$, denoted by $(\sH)$. Recall the decomposition $\DD(H):=\set{\CC_1,\ldots,\CC_o,\SS_1,\ldots,\SS_s}$ of $H$ introduced in 
Section~\ref{sec:edge-cover}. Our algorithm creates a \emph{subgraph-sampler tree $\TT$} (a generalization of cycle-sampler and star-sampler trees in Definitions~\ref{def:cst} and~\ref{def:sst}) 
and use it to estimate $(\sH)$. We define the subgraph-sampler tree $\TT$ and the algorithm $\ssampler(G,H)$ that creates it simultaneously: 

\paragraph{Subgraph-Sampler Tree.} The subgraph-sampler tree $\TT$ is a ${z}$-level tree  for $z:=(2o+2s)$ returned by $\ssampler(G,H)$. The algorithm \ssampler constructs $\TT$ as follows (see Figure~\ref{fig:samplers} for an illustration).

\emph{Sampling Odd Cycles.} In $\ssampler(G,H)$, we first run $\scestimator(G,\CC_1)$ and initiate $\TT$ to be its output cycle-sampler tree. 
For every (current) leaf-node $\alpha$ of $\TT$, we run $\scestimator(G,\CC_2)$  \emph{independently} to obtain a cycle-sampler tree $\TT_{\alpha}$ (we say that $\alpha$ \emph{started} the sampling of $\TT_{\alpha}$). 
We then extend the tree $\TT$ with two new layers by connecting each leaf-node $\alpha$ to the root of $\TT_{\alpha}$ that started its sampling. This creates a $4$-level tree $\TT$. We continue like this for $o$ steps, each time 
appending the tree obtained by $\scestimator(G,\CC_j)$ for $j \in [o]$, to the (previous) leaf-node that started this sampling. This results in a $(2o)$-level tree. Note that the nodes in the tree $\TT$ can have different degrees as the number of leaf-nodes in 
the cycle-sampler tree is not necessarily the same always (not even for two different trees associated with one single $\CC_j$ through different calls to $\scestimator(G,\CC_j)$). 

\emph{Sampling Stars.} Once we iterated over all odd cycles of $\DD(H)$, we switch to processing stars $\SS_1,\ldots,\SS_s$. The approach is identical to the previous part. Let $\alpha$ be 
a (current) leaf-node of $\TT$. We run $\ssestimator(G,\SS_1)$ to obtain a star-sampler tree $\TT_{\alpha}$ and connect $\alpha$ to $\TT_{\alpha}$ to extend the levels of tree by $2$ more. We continue like
this for $s$ steps, each time appending the tree obtained by $\ssestimator(G,\SS_j)$ for $j \in [s]$, to the (former) leaf-node that started this sampling. This results in a $z$-level tree $\TT$. 
Note that all nodes added when sampling stars have exactly one child-node (except for the leaf-nodes) as by Definition~\ref{def:sst}, star-sampler trees always contain only two nodes. 

\emph{Labels and Values.} Each node $\alpha$ of $\TT$ is again given two attributes, $\Label{\alpha}$ and $\Value{\alpha}$, which are defined to be exactly the same attributes in the corresponding cycle-sampler 
or star-sampler tree that was used to define these nodes (recall that each node of $\TT$ is ``copied'' from a node in either a cycle-sampler or a star-sampler tree). Finally, for each root-to-leaf path $\PP$ in $\TT$, 
we define $\Label{\PP} := \bigcup_{\alpha \in \PP} \Label{\alpha}$  and $\Value{\PP} := \prod_{\alpha \in \PP} \Value{\alpha}$. In particular, 
$\Label{\PP} := ((\be_1,w_1),\ldots,(\be_o,w_o),(v_1,\bw_1),\ldots,(v_s,\bw_s))$ by definition of labels of cycle-sampler and star-sampler trees. As such $\Label{\PP}$ is a 
representation of the subgraph $H$ as defined in Section~\ref{sec:profiles}. By making $O(1)$ additional pair-queries 
to query all the remaining edges of this representation of $H$ we can determine whether $\Label{\PP}$ forms a copy of $H$ or not.

This concludes the description of $\ssampler(G,H)$ and its output subgraph-sampler tree $\TT$. We start analyzing this algorithm by bounding its query complexity. 

\begin{figure}[t!]
    \centering
\begin{tabular}[b]{cc}
    \begin{tabular}[b]{c}
      \begin{subfigure}[t]{0.4\columnwidth}
      \centering
           \begin{tikzpicture}[ auto ,node distance =1cm and 2cm , on grid , semithick , state/.style ={ circle ,top color =white , bottom color = white , draw, black , text=black}, every node/.style={inner sep=0,outer sep=0}]

\node[state, circle, black, line width=0.25mm, minimum height=35pt, minimum width=35pt] (r){$e_1,e_2$};
\node[state, circle, black, line width=0.25mm, minimum height=35pt, minimum width=35pt] (a1)[below left= 45pt and 45pt of r]{$w_1$};
\node[state, circle, black, line width=0.25mm, minimum height=35pt, minimum width=35pt] (a2)[below=45pt of r]{$w_2$};
\node[state, circle, black, line width=0.25mm, minimum height=35pt, minimum width=35pt] (a3)[below right= 45pt and 45pt of r]{$w_3$};

\node[inner sep=5pt, draw, dashed, ForestGreen, fit=(r) (a1) (a2) (a3)] {};

\draw (r) to (a1);
\draw (r) to (a2);
\draw (r) to (a3);
\end{tikzpicture}
        \caption{A cycle-sampler tree for $C_5$.}
        \label{fig:A}
      \end{subfigure} \\ 
      \begin{subfigure}[b]{0.4\columnwidth}
      \centering
     \begin{tikzpicture}[ auto ,node distance =1cm and 2cm , on grid , semithick , state/.style ={ circle ,top color =white , bottom color = white , draw, black , text=black}, every node/.style={inner sep=0,outer sep=0}]
\node[state, circle, black, line width=0.25mm, minimum height=35pt, minimum width=35pt] (b1){$v_{1}$};
\node[state, circle, black, line width=0.25mm, minimum height=35pt, minimum width=35pt] (b2)[below =45pt of b1]{$w_{1}$};

\node[inner sep=5pt, draw, dotted, red, fit=(b1) (b2), line width=0.25mm] {};

\draw (b1) to (b2);

\end{tikzpicture}
        \caption{A star-sampler tree for $S_1$.}
        \label{fig:B}
     \end{subfigure}      \\ 
          \begin{subfigure}[t]{0.4\columnwidth}
      \centering
           \begin{tikzpicture}[ auto ,node distance =1cm and 2cm , on grid , semithick , state/.style ={ circle ,top color =white , bottom color = white , draw, black , text=black}, every node/.style={inner sep=0,outer sep=0}]

\node[state, circle, black, line width=0.25mm, minimum height=35pt, minimum width=35pt] (b1){$v_{1}$};
\node[state, circle, black, line width=0.25mm, minimum height=35pt, minimum width=35pt] (b2)[below =45pt of b1]{$w_{1},w_{2}$};

\node[inner sep=5pt, draw, dotted, red, fit=(b1) (b2), line width=0.25mm] {};

\draw (b1) to (b2);
\end{tikzpicture}
        \caption{A star-sampler tree for $S_2$.}
        \label{fig:C}
        \end{subfigure}
    \end{tabular}
    &
    \begin{subfigure}[b]{0.4\columnwidth}
    \centering
  \begin{tikzpicture}[ auto ,node distance =1cm and 2cm , on grid , semithick , state/.style ={ circle ,top color =white , bottom color = white , draw, black , text=black}, every node/.style={inner sep=0,outer sep=0}]

\node[state, circle, black, line width=0.25mm, minimum height=35pt, minimum width=35pt] (r){$e_1,e_2$};
\node[state, circle, black, line width=0.25mm, minimum height=35pt, minimum width=35pt] (a1)[below left= 45pt and 45pt of r]{};
\node[state, circle, black, line width=0.25mm, minimum height=35pt, minimum width=35pt] (a2)[below=45pt of r]{};
\node[state, circle, black, line width=0.25mm, minimum height=35pt, minimum width=35pt] (a3)[below right= 45pt and 45pt of r]{$w_1$};

\node[inner sep=5pt, draw, dashed, ForestGreen, fit=(r) (a1) (a2) (a3)] {};

\draw (r) to (a1);
\draw (r) to (a2);
\draw[blue, line width=0.5mm]  (r) to (a3);

\node[state, circle, black, line width=0.25mm, minimum height=35pt, minimum width=35pt] (b1)[below =45pt of a1]{};
\node[state, circle, black, line width=0.25mm, minimum height=35pt, minimum width=35pt] (b2)[below =45pt of b1]{};

\node[inner sep=5pt, draw, dotted, red, fit=(b1) (b2), line width=0.25mm] {};

\node[state, circle, black, line width=0.25mm, minimum height=35pt, minimum width=35pt] (c1)[below = 45pt of b2]{};
\node[state, circle, black, line width=0.25mm, minimum height=35pt, minimum width=35pt] (c2)[below = 45pt of c1]{};

\node[inner sep=5pt, draw, dotted, red, fit=(c1) (c2), line width=0.25mm] {};

\draw (a1) to (b1);
\draw (b1) to (b2);
\draw (b2) to (c1);
\draw (c1) to (c2);

\node[state, circle, black, line width=0.25mm, minimum height=35pt, minimum width=35pt] (b11)[below =45pt of a2]{};
\node[state, circle, black, line width=0.25mm, minimum height=35pt, minimum width=35pt] (b12)[below =45pt of b11]{};

\node[inner sep=5pt, draw, dotted, red, fit=(b11) (b12), line width=0.25mm] {};

\node[state, circle, black, line width=0.25mm, minimum height=35pt, minimum width=35pt] (c11)[below = 45pt of b12]{};
\node[state, circle, black, line width=0.25mm, minimum height=35pt, minimum width=35pt] (c12)[below = 45pt of c11]{};

\node[inner sep=5pt, draw, dotted, red, fit=(c11) (c12), line width=0.25mm] {};

\draw (a2) to (b11);
\draw (b11) to (b12);
\draw (b12) to (c11);
\draw (c11) to (c12);

\node[state, circle, black, line width=0.25mm, minimum height=35pt, minimum width=35pt] (b21)[below =45pt of a3]{$v_{1}$};
\node[state, circle, black, line width=0.25mm, minimum height=35pt, minimum width=35pt] (b22)[below =45pt of b21]{$w_{2}$};

\node[inner sep=5pt, draw, dotted, red, fit=(b21) (b22), line width=0.25mm] {};

\node[state, circle, black, line width=0.25mm, minimum height=35pt, minimum width=35pt] (c21)[below = 45pt of b22]{$v_{2}$};
\node[state, circle, black, line width=0.25mm, minimum height=35pt, minimum width=35pt] (c22)[below = 45pt of c21]{$w_{3},w_{4}$};

\node[inner sep=5pt, draw, dotted, red, fit=(c21) (c22), line width=0.25mm] {};

\draw[blue, line width=0.5mm] (a3) to (b21);
\draw[blue, line width=0.5mm]  (b21) to (b22);
\draw[blue, line width=0.5mm]  (b22) to (c21);
\draw[blue, line width=0.5mm]  (c21) to (c22);
\end{tikzpicture}

      \caption{A subgraph-sampler tree for the subgraph $H$ which is decomposed in $\DD(H)$ to a $C_3$ cycle (length $3$), and two stars $S_1$ (one petal) and $S_2$ (two petals). The labels of some nodes are omitted in the figure.}
      \label{fig:D}
    \end{subfigure}
  \end{tabular}
   \caption{Illustration of the sampler-subgraph $\TT$ for the subgraph $H$ of Figure~\ref{fig:decomposition}. The (blue) thick line in part~(d) shows a root-to-leaf path $\PP$ with 
 $
  \Value{\PP} = (m^{2}) \cdot (\dstar_{(e_1,e_2)}) \cdot (\frac{2m}{d_{v_1}}) \cdot {{d_{v_1}}\choose{1}} \cdot (\frac{2m}{d_{v_2}}) \cdot {{d_{v_2}}\choose{2}}. 
$
The variable $X$ of $\PP$ is equal to $\Value{\PP}$ iff the profile $((e_1,e_2,w_1),(v_1,w_2),(v_2,w_2,w_3))$ forms a copy of $H$ in $G$. 
  }
 \label{fig:samplers}

\end{figure}

\begin{lemma}\label{lem:sampler-time}
	The expected query complexity and running time of $\ssampler$ is $O(1)$. 
\end{lemma}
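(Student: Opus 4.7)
The plan is to bound the expected size of the subgraph-sampler tree $\TT$ inductively level-by-level, and then read off the expected query and time complexity from the size of $\TT$. First, I would recall two basic facts about the building blocks of $\ssampler$. Each invocation of $\ssestimator(G,\SS_j)$ uses $O(1)$ queries deterministically and produces a star-sampler tree with a unique leaf (Definition~\ref{def:sst}). Each invocation of $\scestimator(G,\CC_i)$ uses $O(t_\be)$ queries and produces $t_\be$ leaves, where $\Ex\bracket{t_\be} = O(1)$ by Lemma~\ref{lem:scestimator-time}. Also, because the size of $H$ is a fixed constant, the decomposition $\DD(H) = \set{\CC_1,\ldots,\CC_o,\SS_1,\ldots,\SS_s}$ contains only $o+s = O(1)$ components.

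For the key inductive step, let $L_j$ denote the random number of leaves of $\TT$ after the first $j$ components of $\DD(H)$ have been incorporated, with $L_0 := 1$. When the $(j+1)$-st component is a star, $L_{j+1} = L_j$ since every star-sampler tree has exactly one leaf. When the $(j+1)$-st component is a cycle $\CC_i$, each of the current $L_j$ leaves independently spawns its own $\scestimator(G,\CC_i)$ call with fresh randomness, so $L_{j+1} = \sum_{i=1}^{L_j} t_\be^{(i)}$, where the $t_\be^{(i)}$ are i.i.d.\ copies of $t_\be$ and are independent of $L_j$. By the tower property,
\begin{equation*}
\Ex\bracket{L_{j+1}} \;=\; \Ex\bracket{L_j} \cdot \Ex\bracket{t_\be} \;=\; O(1) \cdot \Ex\bracket{L_j}.
\end{equation*}
Iterating over the $O(1)$ components of $\DD(H)$ gives $\Ex\bracket{L_j} = O(1)$ for every $j \in \set{0,1,\ldots,o+s}$.

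To conclude, I would observe that the total number of queries made by $\ssampler(G,H)$ is at most a constant times the total number of nodes in $\TT$ (each cycle-sampler call contributes $O(t_\be)$ queries and creates $t_\be + 1$ new nodes, while each star-sampler call contributes $O(1)$ queries and creates $2$ new nodes), plus $O(1)$ pair queries per root-to-leaf path used to verify whether $\Label{\PP}$ forms a copy of $H$. The total node count is $O\paren{\sum_{j=0}^{o+s} L_j}$, whose expectation is $O(1)$ since each summand has expectation $O(1)$ and there are $O(1)$ summands; the number of root-to-leaf paths is $L_{o+s}$ with $\Ex\bracket{L_{o+s}} = O(1)$. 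The same bound holds for running time, since each sampler invocation and each pair query takes $O(1)$ additional bookkeeping time beyond the queries already accounted for. The main subtlety to watch out for is in justifying the tower-property step, which requires that every sampler invocation uses randomness independent of the already-constructed part of $\TT$; this is explicit in the description of $\ssampler$ and ensures the required conditional independence.
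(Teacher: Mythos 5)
Your proof is correct and follows essentially the same approach as the paper: bound the expected number of nodes in $\TT$ by observing that each cycle component multiplies the expected leaf count by $\Ex\bracket{t_\be} = O(1)$ (Lemma~\ref{lem:scestimator-time}) and stars do not branch, using the independence of each node's sampling call to multiply expectations across levels, and then note that the query and time cost is proportional to tree size plus $O(1)$ per root-to-leaf path. Your level-by-level induction via the tower property is a slightly more explicit bookkeeping of what the paper expresses as the product $O(z) \cdot \expect{t_1}\cdots\expect{t_o}$, but the argument is the same.
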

\begin{proof}
	As was the case for $\scestimator$ and $\ssestimator$, in the $\ssampler$ also the query complexity of the algorithm is within a constant factor of number of nodes in the subgraph-sampler tree $\TT$ that it returns. Hence, we 
	only need to bound the number of nodes in $\TT$. 
	
	Let $\LL_1,\ldots,\LL_z$ denote the set of nodes in layers $1$ to $z$ of $\TT$. $\LL_1$ contains only the root $\alpha_r$ of $\TT$. Let $\be_r:=\Label{\alpha_r}$. 
	By definition of the cycle-sampler tree that forms the first two layers of $\TT$, the number of child-nodes of $\alpha_r$ is $t_{\be_r}$ (defined in Line~(\ref{line1:for-loop}) of $\scestimator$). 
	As such ${\card{\LL_2}} = t_{\be_1}$. The nodes in any even layer of $\TT$ have only a single child-node by construction, hence $\card{\LL_3} = \card{\LL_2}$. Now, 
	for each node $\alpha$ in $\LL_3$ with label $\be_{\alpha} := \Label{\alpha}$, the number of child-nodes is $t_{\be_\alpha}$ (the number of child-nodes
	of a node in an even layer is always one by construction). Hence, $\card{\LL_4} = \sum_{\alpha \in \LL_3} t_{\be_{\alpha}}$. By continuing like this, we obtain that for each layer $\LL_{2i}$ for $i \leq o$, 
	$\card{\LL_{2i}} = \sum_{\alpha \in \LL_{2i-1}} t_{\be_{\alpha}}$. After this, we reach the layers corresponding to star-sampler subgraphs, in which every non-leaf node has exactly one child-node, and hence 
	$\card{\LL_{j}} = \card{\LL_{2o}}$ for all $j \geq 2o$. Moreover note that each odd-layer node of $\TT$ \emph{independently} starts the sampling of its subtree (even independently of its parent-nodes). 
	
	Define $t_1,\ldots,t_o$ as the random variables denoting the number of leaf-nodes in the cycle-sampler trees for $\CC_1,\ldots \CC_o$ in the decomposition $\DD(H)$. 
	By Lemma~\ref{lem:scestimator-time}, $\expect{t_i} = O(1)$ for all $i \in [o]$. 
	By the above discussion, we have, 
	\begin{align*}
		\expect{\card{\LL_1 \cup \ldots \cup \LL_z}} &= O(z) \cdot \expect{t_1} \cdot \expect{t_2} \cdot \ldots \cdot \expect{t_o} + O(z) \Eq{Lemma~\ref{lem:scestimator-time}} O(1)^{O(z)} = O(1),
	\end{align*}
	as $z = O(1)$ since size of $H$ is constant. Finally, note that running time of the algorithm is at most a constant factor larger than its query complexity. 
\end{proof}

We are now ready to present our estimator algorithm using $\ssampler$ and the subgraph-sampler tree $\TT$ it outputs. 

\paragraph{An Estimator for Arbitrary Subgraphs.} Note that as before the subgraph-sampler tree $\TT$ itself is a random variable depending on the randomness of $\ssampler$. 
For any root-to-leaf path $\PP_i := \alpha_1,\ldots,\alpha_z$ of $\TT$, we define the random variable $X_{i}$ such that $X_{i} := \Value{\PP_i}$ iff $\Label{\PP_i}$ forms a copy of $H$ in $G$ and 
otherwise $X_{i} :=0$.  We further define $Y := (\frac{1}{t} \sum_{i=1}^{t} X_i)$, where $t$ is the number of leaf-nodes of $\TT$ (which itself is a random variable). These random variables can all be computed
from $\TT$ and $\ssampler$ with at most $O(1)$ further pair-queries per each root-to-leaf path $\PP$ of the tree to determine if indeed $\Label{\PP}$ forms a copy of $H$ in $G$ or not. As such, query complexity and runtime of 
this algorithm is proportional to $\ssampler$ (which in expectation is $O(1)$ by Lemma~\ref{lem:sampler-time}). In the following two lemmas, we show that $Y$ is a low-variance unbiased estimator of $(\sH)$. To continue, we first need some notation.

\paragraph{Notation.} For any node $\alpha$ in $\TT$, we use $\TT_{\alpha}$ to denote the sub-tree of $\TT$ rooted at $\alpha$. For a leaf-node $\alpha$, we define a random variable $Y_{\alpha}$ which is $\Value{\alpha}$ iff
for the root-to-leaf path $\PP$ ending in $\alpha$, $\Label{\PP}$ forms a copy of $H$ in $G$ and otherwise $Y_{\alpha}$ is $0$. For an internal node $\alpha$ in $\TT$ with $t$ child-nodes $\alpha_1,\ldots,\alpha_t$, 
we define $Y_{\alpha} = \Value{\alpha} \cdot \paren{\frac{1}{t} \cdot \sum_{i=1}^{t} Y_i}$. It is easy to verify that $Y_{\alpha_r}$ for the root $\alpha_r$ of $\TT$ is the same as the estimator random variable $Y$ defined earlier. 
Furthermore, for a node $\alpha$ in level $\ell$ of $\TT$, we define $\bL{\alpha} := (\Label{\alpha_1},\Label{\alpha_2},\ldots,\Label{\alpha_{\ell-1}})$, where $\alpha_1,\ldots,\alpha_{\ell-1}$ forms the path from the root of $\TT$ 
to the parent of $\alpha$. 

We analyze the expected value and the variance of the estimator in the following two lemmas. 
\begin{lemma}\label{lem:sampler-expect}
	For the random variable $Y$ for $\ssampler(G,H)$, $\expect{Y} = (\sH)$. 
\end{lemma}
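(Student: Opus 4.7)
The plan is to define, for any partial profile $\bL$ that can arise as the ancestor-labels of some node of $\TT$, the quantity $N(\bL)$ to be the number of profiles $\bR$ of $H$ that agree with $\bL$ on the components already specified by $\bL$ and that form a copy of $H$ in $G$. I would then prove by induction on the depth of $\TT$, working bottom-up, that for every node $\alpha$,
\[
\Ex\bracket{Y_\alpha \mid \bL{\alpha}} \;=\; N(\bL{\alpha}).
\]
Applied at the root $\alpha_r$, where $\bL{\alpha_r}$ is empty, this immediately gives $\Ex\bracket{Y} = N(\emptyset) = \tsH = \sH$ by Eq~(\ref{eq:subgraph-profile}) and the identification of $\sH$ with $\tsH$ noted at the end of Section~\ref{sec:profiles}.

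For the inductive step, suppose $\alpha$ is the root of the $i$-th sampler tree (at odd level $2i-1$), and assume the claim for all nodes at deeper levels. I would handle the cycle and star cases in parallel; take the cycle case for concreteness, so $\CC_i = C_{2k_i+1}$ and $\alpha$ has label $\be$ drawn via $\scestimator(G,\CC_i)$, children $\alpha_1,\ldots,\alpha_{t_\be}$ at level $2i$ with labels $w_1,\ldots,w_{t_\be}$ chosen i.i.d.\ uniformly from $N(\ustar_\be)$, and each $\alpha_j$ has a single child $\beta_j$ that starts an independent sampler sub-tree for the remaining components. Expanding $Y_\alpha = \Value{\alpha} \cdot (1/t_\be) \sum_j \Value{\alpha_j} \cdot Y_{\beta_j}$ and conditioning on $(\bL{\alpha}, \be)$, the summands $\Value{\alpha_j} \cdot Y_{\beta_j}$ for $j = 1,\ldots,t_\be$ are exchangeable, so the $(1/t_\be)$-averaging cancels the sum of $t_\be$ identically distributed terms and leaves $\Value{\alpha}\cdot \Ex\bracket{\Value{\alpha_1}\cdot Y_{\beta_1} \mid \bL{\alpha},\be}$. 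Summing this over $w_1 \in N(\ustar_\be)$ (each with probability $1/\dstar_\be$) and applying the induction hypothesis to $Y_{\beta_1}$, and then summing over $\be$ (with probability $2/(2m)^{k_i}$), the Value factors $\Value{\alpha} = (2m)^{k_i}/2$ and $\Value{\alpha_1} = \dstar_\be$ exactly cancel the sampling probabilities, yielding
\[
\Ex\bracket{Y_\alpha \mid \bL{\alpha}} \;=\; \sum_{\be} \sum_{w_1} N\paren{\bL{\alpha} \cup \set{(\be,w_1)}} \;=\; N(\bL{\alpha})
\]
by the definition of $N$. The star case is analogous and in fact simpler since each star-sampler tree has only a single leaf, so no averaging is needed.

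The base case is when $\alpha$ is the root of the last sampler tree; then the recursive term $Y_{\beta_1}$ above is replaced by the indicator at the leaf, and the same telescoping cancellation, identical in form to the computations in Lemmas~\ref{lem:scestimator-output} and~\ref{lem:ssestimator-output}, yields $\Ex\bracket{Y_\alpha \mid \bL{\alpha}} = N(\bL{\alpha})$ directly.

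The main obstacle is rigorously justifying the conditional independence used in the inductive step: conditioned on $(\bL{\alpha}, \be)$, one must show that the children labels $w_1,\ldots,w_{t_\be}$ are i.i.d.\ and that the subtrees rooted at the grandchildren $\beta_j$ are i.i.d.\ as well; this follows from the fact that each sampler sub-tree is invoked independently in the construction of $\ssampler$. Once this exchangeability is established, the remainder of the proof is a mechanical chaining of the algebraic cancellations already present in the warm-up analyses of $\scestimator$ and $\ssestimator$.
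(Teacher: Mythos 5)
Your proposal is correct and matches the paper's proof essentially verbatim: the quantity $N(\bL{\alpha})$ you define is the paper's $\ssH{\bL{\alpha}}$, and you run the same bottom-up induction over odd-level nodes with the same telescoping cancellation of $\Value{\cdot}$ factors against sampling probabilities, reducing the base case to the warm-up Lemmas~\ref{lem:scestimator-output} and~\ref{lem:ssestimator-output}. The conditional-independence concern you flag is handled in the paper by the observation that each odd-layer node starts the sampling of its subtree independently of its siblings and ancestors, which is exactly the justification you anticipated.
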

\begin{proof}
	We prove this lemma inductively by showing that for any node $\alpha$ in an \emph{odd layer} of $\TT$,
	\begin{align*}
		\expect{Y_{\alpha} \mid \bL{\alpha}} = \ssH{\bL{\alpha}},
	\end{align*} 
	where $\ssH{\bL{\alpha}}$ denotes the number of copies of $H$ in $G$ that contain  the vertices and edges specified by $\bL{\alpha}$ (according to the decomposition $\DD(H)$). $\expect{Y_{\alpha} \mid \bL{\alpha}}$
	measures the value of $Y_{\alpha}$ after we fix the rest of the tree $\TT$ and let the sub-tree $\TT_{\alpha}$ be chosen randomly as in $\ssampler$. 
	
	The base case of the induction, i.e., for vertices in the last odd layer of $\TT$ follows exactly as in the proofs of Lemmas~\ref{lem:scestimator-output} and~\ref{lem:ssestimator-output} (as will also become evident shortly) and hence
	we do not repeat it here. We now prove the induction hypothesis. Fix a vertex $\alpha$ in an odd layer $\ell$. We consider two cases based on
	whether $\ell < 2o$ (hence $\alpha$ is root of a cycle-sampler tree) or $\ell > 2o$ (hence $\alpha$ is root of a star-sampler tree).
	
	\noindent
	\emph{Case of $\ell < 2o$}. In this case, the sub-tree $\TT_{\alpha}$ in the next two levels is a cycle-sampler tree, hence,  
	\begin{align*}
		\expect{Y_{\alpha} \mid \bL{\alpha}} &= \sum_{\be} \Pr\paren{\Label{\alpha} = \be} \cdot \Value{\alpha} \cdot \paren{\frac{1}{t_{\be}} \sum_{i=1}^{t_{\be}} \expect{Y_{\alpha_i} \mid \bL{\alpha},\be}} \tag{here, $\alpha_i$'s are 
		child-nodes of $\alpha$}\\
		&= \sum_{\be} {\frac{1}{t_{\be}} \sum_{i=1}^{t_{\be}} \expect{Y_{\alpha_i} \mid \bL{\alpha},\be}} \tag{as by definition, $\Value{\alpha} = \Pr\paren{\Label{\alpha} = \be}^{-1}$}
	\end{align*}
	Note that each $\alpha_i$ has exactly one child-node, denoted by $\beta_i$. As such, 
	\begin{align*}
		\expect{Y_{\alpha} \mid \bL{\alpha}} &= \sum_{\be} {\frac{1}{t_{\be}} \sum_{i=1}^{t_{\be}} \expect{Y_{\alpha_i} \mid \bL{\alpha},\be}}  \\
		&= \sum_{\be} {\frac{1}{t_{\be}} \sum_{i=1}^{t_{\be}} \sum_{w} \Pr\paren{\Label{\alpha_i}=w} \cdot \Value{\alpha_i} \cdot \expect{Y_{\beta_i} \mid \bL{\alpha},\be,w}} \\
		&= \sum_{\be}\frac{1}{t_{\be}} \sum_{i=1}^{t_{\be}} \sum_{w} \expect{Y_{\beta_i} \mid \bL{\beta_i}} \tag{by definition $\Value{\alpha_i} = \Pr\paren{\Label{\alpha_i} = w}^{-1}$ and $\bL{\beta_i} = \bL{\alpha},(\be,w)$} \\
		&= \sum_{\be}\frac{1}{t_{\be}} \sum_{i=1}^{t_{\be}} \sum_{w} \ssH{\bL{\beta_i}} = \sum_{\be}\frac{1}{t_{\be}} \sum_{i=1}^{t_{\be}} \sum_{w} \ssH{\bL{\alpha},(\be,w)} \tag{by induction hypothesis for odd-layer nodes $\beta_i$'s} \\
		&=\sum_{\be}\sum_{w} \ssH{\bL{\alpha},(\be,w)} = \ssH{\bL{\alpha}}.
	\end{align*}
	This concludes the proof of induction hypothesis in this case. Note that this proof was basically the same proof for the expectation bound of the estimator for cycle-sampler tree in Lemma~\ref{lem:scestimator-output}.  
	
	\emph{Case of $\ell > 2o$}. In this case, the sub-tree $\TT_{\alpha}$ in the next two levels is a star-sampler tree. By the same analogy made in the proof of the previous part and Lemma~\ref{lem:scestimator-output}, 
	the proof of this part also follows directly from the proof of Lemma~\ref{lem:ssestimator-output} for star-sampler trees. We hence omit the details. 
	
	We can now finalize the proof of Lemma~\ref{lem:sampler-expect}, by noting that for the root $\alpha_r$ of $\TT$, $\bL{\alpha_r}$ is the empty-set and hence, $\expect{Y} = \expect{Y_{\alpha_r} \mid \bL{\alpha_r}}$, 
	which by induction is equal to $(\sH)$. 
\end{proof}
\noindent
Recall that $\rho(H)$ is the fractional edge-cover number of $H$ and it is related to $\DD(H)$ through Eq~(\ref{eq:agm}). 

\begin{lemma}\label{lem:sampler-variance}
	For the random variable $Y$ for $\ssampler(G,H)$, $\var{Y} = O(m^{\rho(H)}) \cdot \expect{Y}$. 
\end{lemma}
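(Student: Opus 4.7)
The plan is to prove an inductive strengthening of the lemma: for every odd-layer node $\alpha$ of $\TT$,
\begin{equation*}
\var{Y_\alpha \mid \bL{\alpha}} \;\leq\; c_H\cdot m^{\rho_\alpha}\cdot \ssH{\bL{\alpha}},
\end{equation*}
where $\rho_\alpha$ denotes the sum of $\rhoC_i$ and $\rhoS_j$ over the cycles and stars in $\DD(H)$ whose samplers are rooted at $\alpha$ or at a descendant of $\alpha$, and $c_H$ is a constant depending only on $H$. Applying this to the root $\alpha_r$, where $\bL{\alpha_r}$ is empty, $\ssH{\bL{\alpha_r}} = (\sH) = \expect{Y}$ by Lemma~\ref{lem:sampler-expect}, and $\rho_{\alpha_r} = \rho(H)$ by Eq.~\eqref{eq:agm}, yields the lemma.

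The base case concerns the deepest odd-layer nodes, which are the roots of the sampler for the \emph{last} element of $\DD(H)$. Conditioned on $\bL{\alpha}$, the random variable $Y_\alpha$ is exactly the cycle or star estimator from the warm-ups in Section~\ref{sec:estimator}, except that a profile only contributes when it extends $\bL{\alpha}$ to a copy of $H$ in $G$. The computations of Lemmas~\ref{lem:scestimator-output} and~\ref{lem:ssestimator-output} then carry through line-by-line, with $\sC$ and $\sS$ replaced by $\ssH{\bL{\alpha}}$ and with the final AGM step replaced by Lemma~\ref{lem:subgraph-AGM} applied to the relevant residual subgraph.

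For the inductive step, I would focus on an odd-layer $\alpha$ that is the root of a cycle-sampler for $\CC_{i^*} = C_{2k+1}$ (the star case is analogous, and slightly simpler). Writing the children of $\alpha$ as $\alpha_1,\ldots,\alpha_{t_{\be}}$ and the unique child of each $\alpha_i$ as $\beta_i$, one has $Y_\alpha = \Value{\alpha}\cdot\tfrac{1}{t_{\be}}\sum_i \Value{\alpha_i}\cdot Y_{\beta_i}$. I apply the law of total variance~\eqref{eq:total-variance} twice, first conditioning on $\be = \Label{\alpha}$, and, within each resulting summand, again on $w = \Label{\alpha_i}$. Because the $\beta_i$-subtrees are drawn independently given $\be$, the innermost expected-variance piece reduces to a single $\var{Y_{\beta_i}\mid\bL{\beta_i}}$ scaled by $\Value{\alpha_i}^2/t_{\be}$, to which the induction hypothesis on $\beta_i$ applies directly. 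The two variance-of-expectation pieces both simplify via $\Ex[Y_{\beta_i}\mid \bL{\alpha},\be,w] = \ssH{(\bL{\alpha},(\be,w))}$ from the proof of Lemma~\ref{lem:sampler-expect}, and are bounded using $\Var X \leq \Ex[X^2] \leq (\max X)\cdot\Ex[X]$ combined with Lemma~\ref{lem:subgraph-AGM} applied to the appropriate residual subgraph.

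The main obstacle is the algebraic bookkeeping at the inductive step: one must verify that the $1/t_{\be}$ averaging (using $t_{\be}\ge\dstar_{\be}/\sqrt{m}$) cancels a factor of $\dstar_{\be}$ hidden inside $\Value{\alpha_i}^2$, producing exactly the multiplicative $m^{\rhoC_{i^*}} = m^{k+1/2}$ blowup over the inductive budget $m^{\rho_{\beta_i}}\cdot\ssH{\bL{\beta_i}}$, exactly as in the proof of Lemma~\ref{lem:scestimator-output}; in the star case the analogous cancellation uses $\binom{d_v}{\ell}\le d_v^\ell$ together with $d_v\le m$ to produce a factor of $m^{\rhoS_{j^*}} = m^{\ell}$. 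Lemma~\ref{lem:subgraph-AGM} applied to the appropriate residual subgraph then guarantees that the two variance-of-expectation contributions also fit within $m^{\rho_\alpha}\cdot\ssH{\bL{\alpha}}$. A bounded constant blowup per level is absorbed into $c_H$ since $H$ has constant size and $\DD(H)$ therefore has $O(1)$ elements.
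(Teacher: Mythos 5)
Your proposal is correct and follows essentially the same route as the paper's proof: the same inductive strengthening (your $\rho_\alpha$ is the paper's $\rhoG{\ell}$), the same two-stage application of the law of total variance at each odd-layer node, the same use of Lemma~\ref{lem:subgraph-AGM} to bound the variance-of-expectation terms, and the same cancellation of $\dstar_{\be}$ via $t_{\be}\ge\dstar_{\be}/\sqrt{m}$ to produce the $m^{k+1/2}$ (resp.\ $m^\ell$) blowup per level. The only cosmetic difference is that the paper tracks the per-level constant as an explicit power $2^{2z-2\ell}$ while you absorb it into a single $c_H$; both are fine since $z=O(1)$.
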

\begin{proof}
	We bound $\var{Y}$ using a similar inductive proof as in Lemma~\ref{lem:sampler-expect}. Recall the parameters $\rhoC_1,\ldots,\rhoC_o$ and $\rhoS_1,\ldots,\rhoS_s$ associated respectively with the cycles $\CC_1,\ldots,\CC_o$
	and stars $\SS_1,\ldots,\SS_s$ of the decomposition $\DD(H)$. For simplicity of notation, for any $i \in [o+s]$, we define $\rhoG{i}$ as follows:
	\begin{align*}
		&\textnormal{for all $i \leq o$, $\rhoG{i} := \sum_{j=i}^{o} \rhoC_j + \sum_{j=1}^{s} \rhoS_j$,}\qquad 
		&\textnormal{for all $o < i \leq o+s$, $\rhoG{i} := \sum_{j=i-o}^{s} \rhoS_j$.}
	\end{align*}
	We inductively show that, 
	for any node $\alpha$ in an \emph{odd layer} $2\ell-1$ of $\TT$,
	\begin{align*}
		\var{Y_{\alpha} \mid \bL{\alpha}} \leq 2^{2z-2\ell} \cdot m^{\rhoG{\ell}} \cdot \ssH{\bL{\alpha}},
	\end{align*} 
	where $\ssH{\bL{\alpha}}$ denotes the number of copies of $H$ in $G$ that contain the  vertices and edges specified by $\bL{\alpha}$ (according to the decomposition $\DD(H)$). 
	
	The induction is from the leaf-nodes of the tree to the root. 
	The base case of the induction, i.e., for vertices in the last odd layer of $\TT$ follows exactly as in the proofs of Lemmas~\ref{lem:scestimator-output} and~\ref{lem:ssestimator-output} (as will also become evident shortly) and hence
	we do not repeat it here. We now prove the induction hypothesis. Fix a vertex $\alpha$ in an odd layer $2\ell-1$. We consider two cases based on
	whether $\ell \leq o$ (hence $\alpha$ is root of a cycle-sampler tree) or $\ell > o$ (hence $\alpha$ is root of a star-sampler tree).
	
	\noindent
	\emph{Case of $\ell \leq o$}. In this case, the sub-tree $\TT_{\alpha}$ in the next two levels is a cycle-sampler tree corresponding to the odd cycle $\CC_\ell$ of $\DD(H)$. Let the number of edges in $C_{\ell}$ be $(2k+1)$ (i.e., $\CC_{\ell} = C_{2k+1}$) 
	Let $\be$ denote the label of the $\alpha$. 
	By the  law of total variance in Eq~(\ref{eq:total-variance}) 
	\begin{align}
		\var{Y_{\alpha} \mid \bL{\alpha}} &= \expect{\var{Y_{\alpha} \mid \be} \mid \bL{\alpha}} + \var{\expect{Y_{\alpha} \mid \be} \mid  \bL{\alpha}}. \label{eq:svar1}
	\end{align}
	We start by bounding the second term in Eq~(\ref{eq:svar1}) which is easier. By the inductive proof of Lemma~\ref{lem:sampler-expect}, we also have, 
	$\expect{Y_{\alpha} \mid \bL{\alpha},\be} = \ssH{\bL{\alpha},\be}$. As such, 
	\begin{align}
		\var{\expect{Y_{\alpha} \mid \be} \mid  \bL{\alpha}} &= \var{\ssH{\bL{\alpha},\be} \mid \bL{\alpha}} \leq \expect{\ssH{\bL{\alpha},\be}^2 \mid \bL{\alpha}} \notag \\
		&= \sum_{\be} \Pr\paren{\Label{\alpha} = \be} \cdot \ssH{\bL{\alpha},\be}^2 =  \frac{1}{m^{{k}}} \sum_{\be} \ssH{\bL{\alpha},\be}^2 \tag{$\Pr\paren{\Label{\alpha}=\be}=1/m^{k}$ by definition of $\scestimator$} \\
		&\leq \frac{1}{m^{{k}}} \Paren{\sum_{\be} \ssH{\bL{\alpha},\be}}^2 = \frac{1}{m^{{k}}} {\ssH{\bL{\alpha}}}^2 \notag \\
		&\leq m^{\rhoG{\ell}} \cdot \ssH{\bL{\alpha}}. \label{eq:svar2}
	\end{align}
	The reason behind the last equality is that $\ssH{\bL{\alpha}}$ is at most equal to the number of copies of the subgraph of $H$ consisting of $\CC_{\ell},\ldots,\CC_{o},\SS_1,\ldots,\SS_{s}$, which by 
	Lemma~\ref{lem:subgraph-AGM} is at most $m^{\rhoG{\ell}}$ by definition of $\rhoG{\ell}$. We now bound the first and the main term in Eq~(\ref{eq:svar1}),
	\begin{align*}
		\expect{\var{Y_{\alpha} \mid \be} \mid \bL{\alpha}} &= \sum_{\be}\Pr\paren{\Label{\alpha}=\be} \cdot \var{Y_{\alpha} \mid \be,\bL{\alpha}} \\
		&= \sum_{\be} \frac{1}{m^k} \cdot m^{2k} \cdot \frac{1}{t_{\be}^2} \cdot \sum_{i=1}^{t_{\be}} \var{Y_{\alpha_i} \mid \be,\bL{\alpha}}, \tag{here $\alpha_i$'s are child-nodes of $\alpha$} 
	\end{align*}
	where the final equality holds because $Y_{\alpha_i}$'s are independent conditioned on $\be,\bL{\alpha}$ and since  $Y_{\alpha}$ is by definition $m^{k}$ 
	times the average of $Y_{\alpha_i}$'s. Moreover, note that distribution of all $Y_{\alpha_i}$'s are the same. Hence, by canceling the terms, 
	\begin{align}
		\expect{\var{Y_{\alpha} \mid \be} \mid \bL{\alpha}} &= m^{k} \cdot  \sum_{\be} \frac{1}{t_{\be}} \cdot  \var{Y_{\alpha_1} \mid \be,\bL{\alpha}}, \label{eq:svar3}
	\end{align}
	We thus only need to bound $\var{Y_{\alpha_1} \mid \be,\bL{\alpha}}$. Recall that $\alpha_1$ corresponds to a leaf-node in a cycle-sampler tree and hence its label is a vertex $w$ from the neighborhood of 
	$\ustar_{\be}$ as defined in $\scestimator$. We again use the  law of total variance in Eq~(\ref{eq:total-variance}) to obtain,
	\begin{align}
		\var{Y_{\alpha_1} \mid \be,\bL{\alpha}} = \expect{\var{Y_{\alpha_1} \mid w} \mid \be,\bL{\alpha}} + \var{\expect{Y_{\alpha_1} \mid w} \mid \be,\bL{\alpha}} \label{eq:svar4}
	\end{align}
	For the first term, 
	\begin{align*}
		\expect{\var{Y_{\alpha_1} \mid w} \mid \be,\bL{\alpha}} &= \sum_{w \in N(\ustar_{\be})} \Pr\paren{\Label{\alpha_1} = w} \cdot \var{Y_{\alpha_1} \mid w ,\be,\bL{\alpha}} \\
		&=  \sum_{w} \frac{1}{\dstar_{\be}} \cdot \paren{\dstar_{\be}}^2 \cdot \var{Y_{\beta_1} \mid w ,\be,\bL{\alpha}},
	\end{align*}
	where $\beta_1$ is the unique child-node of $\alpha_1$ and so $Y_{\alpha_1} = \Value{\alpha_1} \cdot Y_{\beta_1}$, while conditioned on $\be$, $\Value{\alpha_1} = \dstar_{\be}$. 
	Moreover, as $\bL{\beta_1} = (\bL{\alpha},\be,w)$, and by canceling the terms,
	\begin{align}
		\expect{\var{Y_{\alpha_1} \mid w} \mid \be,\bL{\alpha}} &= \sum_{w} \dstar_{\be} \cdot \var{Y_{\beta_1} \mid \bL{\beta_1}} \notag \\
		&\leq \sum_{w}  \dstar_{\be} \cdot 2^{2z-2\ell-2} \cdot m^{\rhoG{(\ell+1)}} \cdot \ssH{\bL{\beta_1}}, \label{eq:svar5}
	\end{align}
	where the inequality is by induction hypothesis for the odd-level node $\beta_1$. We now bound the second term in Eq~(\ref{eq:svar4}) as follows,
	\begin{align}
		\var{\expect{Y_{\alpha_1} \mid w} \mid \be,\bL{\alpha}} &\leq \expect{\Paren{\expect{Y_{\alpha_1} \mid w}}^2 \mid \be,\bL{\alpha}} \notag \\
		&= \sum_{w} \Pr\paren{\Label{\alpha_1} = w} \cdot \Paren{\expect{Y_{\alpha_1} \mid w,\be,\bL{\alpha}}}^2 \notag \\
		&= \sum_{w} \frac{1}{\dstar_{\be}} \cdot \paren{\dstar_{\be}}^2 \cdot \Paren{\expect{Y_{\beta_1} \mid w,\be,\bL{\alpha}}}^2 \notag \\
		&= \sum_{w} \dstar_{\be} \cdot \Paren{\expect{Y_{\beta_1} \mid \bL{\beta_1}}}^2 
		= \sum_{w} \dstar_{\be} \cdot \ssH{\bL{\beta_1}}^2 \notag \\
		&\leq \sum_{w} \dstar_{\be} \cdot m^{\rhoG{(\ell+1)}} \cdot \ssH{\bL{\beta_1}}.\label{eq:svar6}
	\end{align}
	Here, the second to last equality holds by the inductive proof of Lemma~\ref{lem:sampler-expect}, and the last equality is because $\ssH{\bL{\beta_1}} \leq m^{\rhoG{(\ell+1)}}$ by Lemma~\ref{lem:subgraph-AGM}, 
	as $\ssH{\bL{\beta_1}}$ is at most equal to the total number of copies of a subgraph of $H$ on $\CC_{\ell+1},\ldots,\CC_{o},\SS_1,\ldots,\SS_s$ (and by definition of $\rhoG{(\ell+1)}$). 
	We now plug in Eq~(\ref{eq:svar5}) and Eq~(\ref{eq:svar6}) in Eq~(\ref{eq:svar4}), 
	\begin{align*}
		\var{Y_{\alpha_1} \mid \be,\bL{\alpha}} \leq \sum_{w} \dstar_{\be} \cdot \paren{2^{2z-2\ell-2} \cdot m^{\rhoG{(\ell+1)}} \cdot \ssH{\bL{\beta_1}}  + m^{\rhoG{(\ell+1)}} \cdot \ssH{\bL{\beta_1}}}.
	\end{align*}
	We now in turn plug this in Eq~(\ref{eq:svar3}), 
	\begin{align*}
		\expect{\var{Y_{\alpha} \mid \be} \mid \bL{\alpha}} &\leq m^{k}\sum_{\be} \frac{1}{t_{\be}}  \sum_{w} \dstar_{\be} \cdot  \paren{2^{2z-2\ell-2} \cdot m^{\rhoG{(\ell+1)}} \cdot \ssH{\bL{\beta_1}} + 
		m^{\rhoG{(\ell+1)}} \cdot \ssH{\bL{\beta_1}}} \\
		&\leq m^{k}\sqrt{m} \cdot \sum_{\be} \sum_{w} {2^{2z-2\ell-1} \cdot m^{\rhoG{(\ell+1)}} \cdot \ssH{\bL{\beta_1}}} \tag{as $t_{\be} \geq \dstar_{\be}/\sqrt{m}$} \\
		&\leq {2^{2z-2\ell-1} \cdot m^{\rhoG{\ell}} \cdot \sum_{\be} \sum_{w}  \ssH{\bL{\beta_1}}} \tag{as $\rhoC_\ell = k+1/2$ and $\rhoG{\ell} = \rhoC_{\ell} + \rhoG{(\ell+1)}$ by definition} \\
		&= 2^{2z-2\ell-1} \cdot m^{\rhoG{\ell}} \cdot  \ssH{\bL{\alpha}}  \tag{as $\bL{\beta_1} = (\bL{\alpha}, \be,w)$}.
	\end{align*}
	Finally, by plugging in this and Eq~(\ref{eq:svar2}) in Eq~(\ref{eq:svar1}), 
	\begin{align*}
		\var{Y_{\alpha} \mid \bL{\alpha}} &=  2^{2z-2\ell-1} \cdot m^{\rhoG{\ell}} \cdot  \ssH{\bL{\alpha}}  + m^{\rhoG{\ell}} \cdot  \ssH{\bL{\alpha}} \leq 2^{2z-2\ell} \cdot m^{\rhoG{\ell}} \cdot \ssH{\bL{\alpha}},
	\end{align*}
	finalizing the proof of induction step in this case. We again remark that this proof closely followed the proof for the variance of the estimator for cycle-sampler tree in Lemma~\ref{lem:scestimator-output}.  
	
	\emph{Case of $\ell > o$}. In this case, the sub-tree $\TT_{\alpha}$ in the next two levels is a star-sampler tree. By the same analogy made in the proof of the previous case and Lemma~\ref{lem:scestimator-output}, 
	the proof of this part also follows the proof of Lemma~\ref{lem:ssestimator-output} for star-sampler trees. We hence omit the details. 
	
	To conclude, we have that $\var{Y} = \var{Y_{\alpha_r} \mid \bL{\alpha_r}} = O(m^{\rho(H)}) \cdot (\sH) = O(m^{\rho(H)}) \cdot \expect{Y}$ as $Y = Y_{\alpha_r}$ for the root $\alpha_r$ of $\TT$, 
	$\bL{\alpha_r} = \emptyset$, $(\sH) = \expect{Y}$ by Lemma~\ref{lem:sampler-expect}, and $z = O(1)$. 
\end{proof}
 
 \newcommand{\est}{\ensuremath{\textnormal{\sc est}}}
 
 \subsection{An Algorithm for Estimating Occurrences of Arbitrary Subgraphs}\label{sec:final-alg}
 
 We now use our estimator algorithm from the previous section to design our algorithm for estimating the occurrences of an arbitrary subgraph $H$ in $G$. In the following theorem, we assume that the algorithm
 has knowledge of $m$ and also a lower bound on the value of $\sH$; these assumptions can be lifted easily as we describe afterwards. 
 
 \begin{theorem}\label{thm:alg-estimate}
There exists a sublinear time algorithm that uses degree, neighbor, pair, and edge sample queries and given a precision parameter $\eps \in (0,1)$, an explicit access to a constant-size graph $H(V_H,E_H)$, a query access to the input graph $G(V,E)$, 
the number of edges $m$ in $G$, and a lower bound $h \leq \sH$, with high probability outputs a $(1 \pm \eps)$-approximation to $\sH$ using: 
	\begin{align*}
		O\Paren{\min\set{m,\frac{m^{\rho(H)}}{h} \cdot \frac{\log{n}}{\eps^2}}} \textnormal{ queries and } O\Paren{\frac{m^{\rho(H)}}{h} \cdot \frac{\log{n}}{\eps^2}} \textnormal{ time,} 
	\end{align*}
	in the worst-case. 
\end{theorem}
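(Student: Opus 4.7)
The plan is to use the unbiased estimator $Y$ from Section~\ref{sec:estimator} as a building block and apply Chebyshev's inequality followed by a median-of-means amplification. Specifically, I will run $\ssampler(G,H)$ independently $N := \Theta\!\left(m^{\rho(H)}/(h \eps^2)\right)$ times to obtain i.i.d.\ copies $Y_1,\ldots,Y_N$ of $Y$, and let $Z := \frac{1}{N}\sum_{i=1}^N Y_i$. By Lemma~\ref{lem:sampler-expect}, $\expect{Z} = \sH$, and by Lemma~\ref{lem:sampler-variance} together with independence, $\var{Z} = \var{Y}/N = O(m^{\rho(H)} \cdot \sH)/N$. Applying Chebyshev's inequality (Proposition~\ref{prop:cheb}) with deviation $\eps \cdot \sH$ yields
\begin{align*}
\Pr\!\left(\card{Z - \sH} \geq \eps \sH\right) \leq \frac{\var{Z}}{\eps^2 \sH^2} = O\!\left(\frac{m^{\rho(H)}}{N \eps^2 \sH}\right) \leq O\!\left(\frac{h}{\sH}\right) \leq \tfrac{1}{3},
\end{align*}
for a sufficiently large constant in the definition of $N$, since $h \leq \sH$.

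To boost the success probability from a constant to $1 - 1/\poly(n)$, I will repeat the above estimator $\Theta(\log n)$ times independently in parallel and output the median of the resulting values; a standard Chernoff bound guarantees that the median lies in the $(1\pm\eps)$-window with high probability. This pushes the total number of invocations of $\ssampler$ to $\Theta\!\left(m^{\rho(H)} \log n / (h \eps^2)\right)$. By Lemma~\ref{lem:sampler-time}, each call uses $O(1)$ queries and time in expectation, yielding the claimed \emph{expected} complexity directly; I will then convert expected cost into the stated worst-case cost by truncating any individual invocation that uses more than $c \log n$ queries for a suitable constant $c$. Markov's inequality applied to the random tree size (which has constant expectation by Lemma~\ref{lem:sampler-time}) shows that truncation occurs on any fixed call with probability $O(1/\log n)$; hence by a union bound the expected number of truncated calls is negligible, and the truncation event on even one call out of $\poly(\log n, 1/\eps) \cdot m^{\rho(H)}/h$ calls only introduces an additive bias that is easily absorbed into the failure probability (rerunning the outer median loop $O(\log n)$ times compensates).

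Finally, I will handle the $\min\{m, \cdot\}$ factor in the query complexity by the standard dichotomy: if $m^{\rho(H)} \log n/(h\eps^2) > m$, the algorithm instead queries every edge of $G$ using $O(m)$ degree and neighbor queries (reading the entire adjacency list), enumerates all copies of $H$ exactly, and returns $\sH$. Otherwise it runs the sampling-based procedure above. This branching is legitimate because $m$ is given as input. The main obstacle I expect is the clean passage from expected to worst-case query and time bounds --- the naive Markov argument only bounds the probability of a single bad call, so I will need to be careful to charge the contribution of atypically large $\ssampler$ trees against a single reserved ``failure'' event in the outer median amplification, ensuring that when the entire algorithm succeeds it does so within the stated worst-case budget. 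All remaining steps (verifying expectations, variances, and plugging into Chebyshev) follow directly from Lemmas~\ref{lem:sampler-time}, \ref{lem:sampler-expect}, and \ref{lem:sampler-variance}.
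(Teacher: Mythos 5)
Your overall strategy mirrors the paper's almost exactly: average $\Theta(m^{\rho(H)}/(h\eps^2))$ i.i.d.\ copies of the unbiased estimator $Y$ produced by $\ssampler$, apply Chebyshev via Lemmas~\ref{lem:sampler-expect} and \ref{lem:sampler-variance}, amplify to high probability with $\Theta(\log n)$ parallel repetitions and a median, and fall back to reading the whole graph when the sampling budget exceeds $m$. All of that is correct and matches the paper.

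The genuine gap is in the passage from expected to worst-case cost. You truncate each \emph{individual call} to $\ssampler$ at $c\log n$ queries, observe by Markov that any single call is truncated with probability $O(1/\log n)$, and conclude that ``by a union bound the expected number of truncated calls is negligible.'' That step is wrong. The total number of calls across the outer median loop is $\Theta\!\left(\frac{m^{\rho(H)}\log n}{h\eps^2}\right)$, so with per-call truncation probability $\Theta(1/\log n)$ the expected number of truncations is $\Theta\!\left(\frac{m^{\rho(H)}}{h\eps^2}\right)$, potentially as large as $m$ --- nowhere near negligible. Each truncation alters the distribution of the corresponding $Y_i$, so the sample means concentrate around the wrong value and the median-of-means inherits that bias; this cannot be ``absorbed into the failure probability.'' Separately, even if correctness held, capping each of the $\Theta\!\left(\frac{m^{\rho(H)}\log n}{h\eps^2}\right)$ calls at $c\log n$ queries gives a worst-case bound that is a $\log n$ factor above the one claimed in the theorem. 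The right cut is at the level of an entire Chebyshev estimator (one inner batch of $N$ calls): each of the $O(\log n)$ parallel executions has expected cost $O\!\left(\frac{m^{\rho(H)}}{h\eps^2}\right)$ by Lemma~\ref{lem:sampler-time}, so by Markov it exceeds a constant multiple of that with probability at most a small constant; an execution is ``bad'' (Chebyshev fails or it is truncated) with probability at most a constant strictly below $1/2$, and Chernoff over the $O(\log n)$ executions makes the median correct with high probability while bounding the worst-case cost by $O(\log n)\cdot O\!\left(\frac{m^{\rho(H)}}{h\eps^2}\right)$ as required. You do gesture at this (``charge \ldots against a single reserved failure event in the outer median amplification''), which is indeed the fix, but your concrete per-call parameterization does not implement it.
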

\begin{proof}
	Fix a sufficiently large constant $c > 0$. We run $\ssampler(G,H)$ for $k := \frac{c \cdot m^{\rho(H)}}{\eps^2 \cdot h}$ time independently in parallel to obtain estimates $Y_1,\ldots,Y_k$ and let $Z:= \frac{1}{k} \sum_{i=1}^{k} Y_i$. 
	By Lemma~\ref{lem:sampler-expect}, $\expect{Z} = (\sH)$. Since $Y_i$'s are independent, we also have 
	\begin{align*}
	\var{Z} = \frac{1}{k^2} \sum_{i=1}^{k} \var{Y_i} \leq \frac{1}{k} \cdot O(m^{\rho(H)}) \cdot \expect{Z} \leq \frac{\eps^2}{10} \cdot \expect{Z}^2,
	\end{align*}
	by Lemma~\ref{lem:sampler-variance}, and by choosing the constant $c$ sufficiently larger than the constant in the O-notation of this lemma, together with the fact that $h \leq (\sH) = \expect{Z}$. By Chebyshev's inequality 
	(Proposition~\ref{prop:cheb}), 
	\begin{align*}
		\Pr\paren{\card{Z-\expect{Z}} \geq \eps \cdot \expect{Z}} \leq \frac{\var{Z}}{\eps^2 \cdot \expect{Z}^2} \leq \frac{1}{10},
	\end{align*}
	by the bound above on the variance. This means that with probability $0.9$, this algorithm outputs a $(1\pm \eps)$-approximation of $\sH$. Moreover, the expected query complexity and running time of this algorithm is 
	$O(k)$ by Lemma~\ref{lem:sampler-time}, which is $O(\frac{m^{\rho(H)}}{\eps^2})$ (if $k \geq m$, we simply query all edges of the graph and solve the problem using an offline enumeration algorithm). 
	To extend this result to a high probability bound and also making the guarantee of query complexity and run-time in the worst-case, 
	we simply run this algorithm $O(\log{n})$ times in parallel and stop each execution that uses more than $10$ times queries than the expected query complexity of the above algorithm.
\end{proof}

The algorithm in Theorem~\ref{thm:alg-estimate} assumes the knowledge of $h$ which is a lower bound on $(\sH)$. However, this assumption can be easily removed by making a geometric search 
on $h$ starting from $m^{\rho(H)}/2$ which is (approximately) the largest value for $(\sH)$ all the way down to $1$ in factors of $2$, and stopping the search once the estimates returned for a guess of $h$ became consistent with $h$ itself.  
This only increases the query complexity and runtime of the algorithm by $\polylog{(n)}$ factors. As this part is quite standard, we omit the details and instead refer the interested reader to~\cite{EdenLRS15,EdenRS18}. 
This concludes the proof of our main result in Theorem~\ref{thm:main-intro} from the introduction. 
 
 \subsection{Extension to the Database Join Size Estimation Problem}\label{sec:extensions}
 
As pointed out earlier in the paper, the database join size estimation for binary relations can be modeled by the subgraph estimation problem where the subgraph $H$ and the underlying graph $G$ are additionally \emph{edge-colored} and we are only interested in counting the copies of $H$ in $G$ with matching colors on the edges. In this abstraction, the edges of the graph $G$ correspond to the entries of the database, and the color of edges determine the relation of the entry. 
 
 We formalize this variant of the subgraph counting problem in the following. In the \emph{colorful} subgraph estimation problem, we are given a subgraph $H(V_H,E_H)$ with a coloring function $c_H: E_H \rightarrow \IN$ 
and query access to a graph $G(V,E)$ along with a coloring function $c_G: E \rightarrow \IN$. The set of allowed queries to $G$ contains the degree queries, pair queries, neighbor queries, and edge-sample queries as before, with a simple change that
whenever we query an edge (through the last three types of queries), the color of the edge according to $c_G$ is also revealed to the algorithm. 
Our goal is to estimate the number of copies of $H$ in $G$ with matching colors, i.e., the \emph{colorful} copies of $H$. 

It is immediate to verify that our algorithm in this section can be directly applied to the colorful subgraph estimation problem with the only difference that when testing whether a subgraph forms a copy of $H$ in $G$, we in fact check whether 
this subgraph forms a colorful copy of $H$ in $G$ instead. The analysis of this new algorithm is exactly as in the case of the original algorithm with the only difference that we switch the parameter $\sH$ to $\sH_c$ that only 
counts the number of copies of $H$ with the same colors in $G$. To summarize, we obtain an algorithm with $\Os(\frac{m^{\rho(H)}}{\sH_c})$ query and time complexity for the colorful subgraph counting problem, which can in turn solves the database join size 
estimation problem for binary relations.

\renewcommand{\alg}{\mathcal{A}}

\newcommand{\ystar}{\ensuremath{y^{*}}}

\newcommand{\estar}{\ensuremath{e^{*}}}

\newcommand{\fstar}{\ensuremath{f^{*}}}

\newcommand{\bsefr}{\bm{0}}

\newcommand{\Disj}{\ensuremath{\textnormal{\textsf{Disj}}}\xspace}

\section{Lower Bounds}\label{sec:lower}

In this section, we prove two separate lower bounds that demonstrate the optimality of Theorem~\ref{thm:main-intro} in different scenarios. Our first lower bound in Section~\ref{sec:lb-odd-cycles} establishes tight bounds for estimating the 
number of \emph{odd cycles}. This result implies that in addition to cliques (that were previously proved~\cite{EdenR18b}; see also 
in~\cite{EdenLRS15,EdenRS18}), our algorithm in Theorem~\ref{thm:main-intro} also achieve optimal bounds for odd cycles. 
Next, in Section~\ref{sec:lb-db-join}, we target the more general problem of database join size estimation 
for which we argued that our Theorem~\ref{thm:main-intro} continues to hold. We show that for this more general problem, our algorithm in Theorem~\ref{thm:main-intro} is in fact optimal \emph{for all} choices of the subgraph 
$H$.

\subsection{A Lower Bound for Counting Odd Cycles}\label{sec:lb-odd-cycles}

We prove that the bound achieved by Theorem~\ref{thm:main-intro} for any odd cycle $C_{2k+1}$ is optimal. 

\begin{theorem}\label{thm:lb-odd}
	For any $k \geq 1$, any algorithm $\alg$ that can output any multiplicative-approximation to the number of copies of the odd cycle $C_{2k+1}$ in a given graph $G(V,E)$ with probability at least $2/3$ 
	requires $\Omega(\frac{m^{k+\frac{1}{2}}}{\sC})$ queries to $G$. 
\end{theorem}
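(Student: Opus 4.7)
The plan is to apply Yao's minimax principle. I would construct two distributions $\mathcal{D}_Y$ and $\mathcal{D}_N$ over $m$-edge graphs such that a draw from $\mathcal{D}_Y$ contains $\Theta(T)$ copies of $C_{2k+1}$ while a draw from $\mathcal{D}_N$ contains at most $T/10$ copies, and then argue that any deterministic algorithm making $q = o(m^{k+1/2}/T)$ queries in the given model (degree, neighbor, pair, and edge-sample) cannot distinguish the two with probability $\geq 2/3$. The overall template is the one used for the clique lower bounds of \cite{EdenRS18,EdenR18b}, adapted to the cyclic structure of $C_{2k+1}$.

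For the construction, I would work on a $(2k+1)$-partite vertex set $V_0 \cup \cdots \cup V_{2k}$ with $|V_i| = n$, and place a random bipartite graph of expected degree $d$ between each pair of consecutive groups $V_i, V_{i+1 \bmod (2k+1)}$ to form a base graph $G_0$. The parameters $n,d$ are tuned so that $(2k+1)\cdot n\cdot d = \Theta(m)$ and a second-moment calculation yields $o(T)$ natural $(2k+1)$-cycles in $G_0$. The distribution $\mathcal{D}_N$ simply outputs $G_0$, while $\mathcal{D}_Y$ outputs $G_0$ together with $T$ planted copies of $C_{2k+1}$, each obtained by sampling a uniformly random tuple $(v_0, \ldots, v_{2k}) \in V_0 \times \cdots \times V_{2k}$ and inserting the missing cyclic edges $v_i v_{i+1 \bmod (2k+1)}$. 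This guarantees a constant multiplicative gap between $\#C_{2k+1}$ under the two distributions with high probability. For indistinguishability, I would bound the total variation between the transcript distributions of a deterministic $q$-query algorithm $\mathcal{A}$ under $\mathcal{D}_Y$ and $\mathcal{D}_N$ by decomposing over query types: pair and neighbor queries each hit a given planted edge with small probability ($O(1/n^2)$ and $O(1/(nd))$ respectively), while edge-sample queries hit any individual planted edge with probability $O(1/m)$. Crucially, witnessing a single planted edge does not by itself certify a planted $C_{2k+1}$, since planted edges are statistically indistinguishable from base-graph edges in their local neighborhoods under the uniform placement of planted vertex tuples.

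The main obstacle is handling the edge-sample query, which unlike pair and neighbor queries returns a uniformly random edge and therefore can directly reveal planted edges; a naïve analysis would yield only $\Omega(m/T)$. Lifting this to $\Omega(m^{k+1/2}/T)$ requires exploiting the fractional edge cover $\rho(C_{2k+1}) = k+1/2$: detecting a planted $C_{2k+1}$ essentially forces the algorithm to witness a fractional-edge-cover's worth of the cycle's edges, because each individual edge carries only partial information about the cyclic structure and the uniform placement of planted vertices ensures that completing the cycle requires further dedicated queries. Formalizing this via a coupling or KL-divergence bound on the transcript distributions—carefully accounting for the correlations between revealed planted edges that belong to the same cycle, and ensuring that neighbor-query responses in $\mathcal{D}_Y$ can be coupled to those in $\mathcal{D}_N$ whenever the queried vertex is not already touched by a revealed planted edge—is the technical heart of the argument. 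Once this distributional bound is in place, Yao's principle converts it into the claimed worst-case lower bound of $\Omega(m^{k+1/2}/T)$ queries.
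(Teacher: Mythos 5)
Your proposal takes a genuinely different route from the paper, and as written it has a real gap. The paper does \emph{not} use Yao's minimax principle over planted random instances. Instead it reduces from two-party set disjointness: the players jointly define a graph $G_{X,Y}$ on $k+1$ layers $V^1,\ldots,V^{k+1}$ of size $K$, with complete bipartite graphs between $V^i$ and $V^{i+1}$ for $i\geq 2$, and with edges incident to $V^1$ and $V^2$ controlled by the bits $X_{i,j}\wedge Y_{i,j}$. The crucial structural fact is that if the inputs are disjoint, $G_{X,Y}$ is bipartite (consecutive layers alternate sides) and hence $\sC=0$; if they intersect in one coordinate, there is one edge inside $V^1$ and one inside $V^2$, which together with the complete bipartite layers yields $\Omega(K^{2k-1})=\Omega(m^{k-1/2})$ copies of $C_{2k+1}$. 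Moreover the degree of every vertex is fixed independently of the input, so degree queries are free and edge-sample queries reduce to picking a vertex weighted by its (known) degree and issuing a random neighbor query. Each remaining pair/neighbor query costs $O(1)$ bits of communication, so the $\Omega(K^2)=\Omega(m)$ communication lower bound for disjointness yields an $\Omega(m)$ query lower bound, which reparametrizes to $\Omega(m^{k+1/2}/\sC)$. The gap in the paper's Yes/No instances is thus $0$ versus $\Omega(m^{k-1/2})$, so \emph{any} multiplicative approximation must distinguish them — there is no need to argue about approximate gaps at all.

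Your proposal leaves the hard part undone: the distributional indistinguishability bound you call ``the technical heart of the argument'' is precisely what must be proved, and the intuition you offer in its place — that detecting a planted $C_{2k+1}$ ``forces the algorithm to witness a fractional-edge-cover's worth of the cycle's edges'' — is not a proof and is not obviously correct. The fractional edge cover enters through the AGM bound on the \emph{number} of potential cycles (and hence the parameter regime $T\approx m^{k-1/2}$ in which an $\Omega(m)$ bound becomes $\Omega(m^{k+1/2}/T)$), not through any claim about how many edges of a cycle an algorithm must see. There are also unaddressed construction issues: $\mathcal{D}_Y$ has up to $(2k+1)T$ more edges and a slightly shifted degree distribution than $\mathcal{D}_N$, both of which leak information if not compensated; the base graph $G_0$ itself contains cycles whose count must be shown to concentrate well below $T$; and the planted edges may combine with base edges to form spurious cycles whose count must be controlled. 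Finally, there is a genuine tension in your parameters: if $d$ is small enough that $d^{2k+1}=o(T)$ (controlling base-graph cycles), then a simple edge-sample-plus-local-exploration attack costs only about $m\cdot d^{2k}/T$ queries, which for this choice of $d$ falls short of $m^{k+1/2}/T$; if $d$ is large enough to make that attack costly, the base graph already contains too many cycles. It is not clear this tension can be resolved without a substantially more delicate argument. The paper's disjointness construction circumvents all of this by making the No instance exactly bipartite and by fixing the degree sequence, so none of these subtleties arise.
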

\noindent
Our proof of Theorem~\ref{thm:lb-odd} uses communication complexity in the two player communication model of Yao~\cite{Yao79}. Proving query complexity lower bounds using communication complexity tools in different scenarios 
has a rich history (see, e.g.~\cite{BlaisBM11,BlaisCG17,EdenR18b} and references therein), and was nicely formulated by Eden and Rosenbaum in a recent work~\cite{EdenR18b} for graph estimation problems. 

We prove Theorem~\ref{thm:lb-odd} using a reduction from the \emph{set disjointness} problem in communication complexity. In the set disjointness problem, there are two players Alice and Bob that are given
a bit-string $X \in \set{0,1}^{N}$ and $Y \in \set{0,1}^{N}$, respectively; their goal is to determine whether there exists an index $i \in [N]$ such that $X_i \wedge Y_i = 1$, by communicating a small number of bits between each other (the players have 
access to a {shared} source of random bits, called the \emph{public randomness}). 
A celebrated result in communication complexity, first proved by~\cite{KalyanasundaramS92} and further refined in~\cite{Razborov92,Bar-YossefJKS02}, states that communication complexity of this problem, the minimum
number of bits of communication needed to solve this problem with probability at least $2/3$, is $\Omega(N)$. This lower bound continues to hold even for the special case where we are promised that there exists at 
most one index $i$ such that $X_i \wedge Y_i = 1$. 

\subsection*{The Reduction from Set Disjointness}

For simplicity of exposition, we consider the following variant of set disjointness in which the input to Alice and Bob are two-dimensional arrays $X_{i,j}$ and $Y_{i,j}$ for $(i,j) \in ([{K}] \times [{K}]) \setminus \bigcup_{i'\in [K]} \{(i', i')\}$; the goal now is to 
determine whether there exists $(i,j)$ such that $X_{i,j} \wedge Y_{i,j} = 1$ under the promise that \emph{at most} one such index may exist. We refer to this problem as $\Disj(X,Y)$. It is immediate to verify that communication complexity of $\Disj$ is $\Omega(K^2)$ using a straightforward reduction from the original set disjointness problem (under the aformentioned promise) with $N := K \cdot (K-1)$. 

Fix any algorithm $\alg$ for counting the number of copies of $C_{2k+1}$ to within any multiplicative-approximation factor. We use $\alg$ to design a communication protocol $\Prot_{\alg}$ for solving $\Disj(X,Y)$ for an appropriately 
chosen value of $K$ such that communication cost of the new protocol is within a constant factor of the query complexity of $\alg$. 
To do this, the players construct a graph $G_{X,Y}(V,E)$ (corresponding to inputs $X,Y$ of Alice and Bob) 
\emph{implicitly} and run $\alg$ on $G_{X,Y}$ by answering the queries of $\alg$ on $G_{X,Y}$ through communicating with each together. The graph $G_{X,Y}$ is (implicitly) constructed as follows (see Figure~\ref{fig:lb-odd} for an illustration): 

\begin{enumerate}
	\item Partition the set of vertices $V$ into $(k+1)$ layers $V^1,\ldots,V^{k+1}$  each of size $K$. 
	\item For every $1 < i < k+1$, connect every vertex in layer $V^i$ to every vertex in layer $V^{i+1}$. 
	\item For every $(i,j) \in ([{K}] \times [{K}]) \setminus \bigcup_{i' \in [K]} \set{(i',i')}$, if $X_{i,j} \wedge Y_{i,j} = 1$, there exists an edge $(u^1_i,v^1_j)$ for $u^1_i,v^1_j \in V^1$ and another edge $(u^2_i,v^2_j)$ for $u^2_i,v^2_j \in V^2$.
	\item For every $(i,j) \in (i,j) \in ([{K}] \times [{K}]) \setminus \bigcup_{i\in [K]} \{(i, i)\}$, if $X_{i,j} \wedge Y_{i,j} = 0$, there exists an edge $(u^1_i,v^2_j)$ for $u^1_i\in V^1$ and $v^2_j \in V^2$ and another edge $(u^1_j,v^2_i)$ for $u^1_j \in V^1$ and $v^2_i \in V^2$. 
\end{enumerate}
\noindent
The following figure illustrates the graph $G_{X,Y}$ for the case of $C_{7}$.

\begin{figure}[h!]
    \centering
    \subcaptionbox{When $X_{i,j} \wedge Y_{i,j} = 1$.}[0.45\textwidth]{

    \begin{tikzpicture}[ auto ,node distance =1cm and 2cm , on grid , semithick , state/.style ={ circle ,top color =white , bottom color = white , draw, black , text=black}, every node/.style={inner sep=0,outer sep=0}]

\node[state, ellipse, black, minimum height=100pt, minimum width=25pt] (V1){};
\node[right,fill=white,opacity=0, text opacity=1] (tV1)[above=2.2cm of V1]{$V^1$};
\node[state, ellipse, black, minimum height=100pt, minimum width=25pt] (V2) [right = 2cm of V1]{};
\node[right,fill=white,opacity=0, text opacity=1] (tV2)[above=2.2cm of V2]{$V^2$};
\node[state, ellipse, black,  minimum height=100pt, minimum width=25pt] (V3)[right= 2cm of V2]{};
\node[right,fill=white,opacity=0, text opacity=1] (tV3)[above=2.2cm of V3]{$V^2$};
\node[state, ellipse, black,  minimum height=100pt, minimum width=25pt] (V4)[right= 2cm of V3]{};
\node[right,fill=white,opacity=0, text opacity=1] (tV4)[above=2.2cm of V4]{$V^4$};

\node[state, circle, black, line width=0.25mm, minimum height=7pt, minimum width=7pt] (u1)[above = 0.75cm of V1]{$u^1_i$};
\node[state, circle, black, line width=0.25mm, minimum height=7pt, minimum width=7pt] (v1)[above = -0.75cm of V1]{$v^1_j$};

\node[state, circle, black, line width=0.25mm, minimum height=7pt, minimum width=7pt] (u2)[above = 0.75cm of V2]{$u^2_i$};
\node[state, circle, black, line width=0.25mm, minimum height=7pt, minimum width=7pt] (v2)[above = -0.75cm of V2]{$v^2_j$};

\node[inner sep=5pt, draw, dashed, ForestGreen, fit=(u1) (v2) (u2) (v1)] {};

\tikzset{decoration={snake,amplitude=.4mm,segment length=2mm, post length=0mm,pre length=0mm}}

\draw[line width=0.5mm, decorate] (V2) to (V3);
\draw[line width=0.5mm, decorate] (V4) to (V3);

\draw[line width=0.5mm, blue] (u1) to (v1);
\draw[line width=0.5mm, blue] (u2) to (v2);


\end{tikzpicture}

}
\hspace{0.5cm}   \subcaptionbox{When $X_{i,j} \wedge Y_{i,j} = 0$.}[0.45\textwidth]{

    \begin{tikzpicture}[ auto ,node distance =1cm and 2cm , on grid , semithick , state/.style ={ circle ,top color =white , bottom color = white , draw, black , text=black}, every node/.style={inner sep=0,outer sep=0}]

\node[state, ellipse, black, minimum height=100pt, minimum width=25pt] (V1){};
\node[right,fill=white,opacity=0, text opacity=1] (tV1)[above=2.2cm of V1]{$V^1$};
\node[state, ellipse, black, minimum height=100pt, minimum width=25pt] (V2) [right = 2cm of V1]{};
\node[right,fill=white,opacity=0, text opacity=1] (tV2)[above=2.2cm of V2]{$V^2$};
\node[state, ellipse, black,  minimum height=100pt, minimum width=25pt] (V3)[right= 2cm of V2]{};
\node[right,fill=white,opacity=0, text opacity=1] (tV3)[above=2.2cm of V3]{$V^2$};
\node[state, ellipse, black,  minimum height=100pt, minimum width=25pt] (V4)[right= 2cm of V3]{};
\node[right,fill=white,opacity=0, text opacity=1] (tV4)[above=2.2cm of V4]{$V^4$};

\node[state, circle, black, line width=0.25mm, minimum height=7pt, minimum width=7pt] (u1)[above = 0.75cm of V1]{$u^1_i$};
\node[state, circle, black, line width=0.25mm, minimum height=7pt, minimum width=7pt] (v1)[above = -0.75cm of V1]{$v^1_j$};

\node[state, circle, black, line width=0.25mm, minimum height=7pt, minimum width=7pt] (u2)[above = 0.75cm of V2]{$u^2_i$};
\node[state, circle, black, line width=0.25mm, minimum height=7pt, minimum width=7pt] (v2)[above = -0.75cm of V2]{$v^2_j$};

\node[inner sep=5pt, draw, dashed, ForestGreen, fit=(u1) (v2) (u2) (v1)] {};

\tikzset{decoration={snake,amplitude=.4mm,segment length=2mm, post length=0mm,pre length=0mm}}

\draw[line width=0.5mm, decorate] (V2) to (V3);
\draw[line width=0.5mm, decorate] (V4) to (V3);

\draw[line width=0.5mm, blue] (u1) to (v2);
\draw[line width=0.5mm, blue] (v1) to (u2);


\end{tikzpicture}

}

   \caption{Illustration of the graph $G_{X,Y}$ for the odd cycle $C_7$ and the role of $X_{i,j}$ and $Y_{i,j}$ for some index $(i,j)$ in the choice of edges in $G_{X,Y}$.}
 \label{fig:lb-odd}

\end{figure}
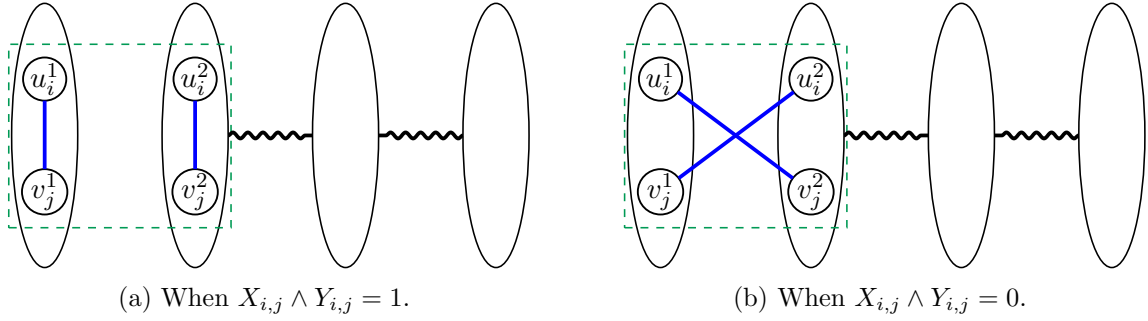


\begin{proposition}\label{prop:reduction}
	For any $X,Y$ with the promise that at most one index $(i,j)$ have $X_{i,j} \wedge Y_{i,j}=1$, in the graph $G_{X,Y}(V,E)$ constructed above: 
	\begin{enumerate}[label=(\roman*)]
		\item The degrees of all vertices are fixed independent of the choice of $X,Y$. 
		\item If for all indices $(i,j)$, $X_{i,j} \wedge Y_{i,j} = 0$, then $\sC = 0$. 
		\item If there exists a unique index $(i,j)$ such that $X_{i,j} \wedge Y_{i,j} = 1$, then $\sC = \Omega\paren{K^{2k-1}}$. 
	\end{enumerate}
\end{proposition}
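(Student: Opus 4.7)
The plan is to verify the three properties by direct combinatorial analysis, exploiting the fact that the construction is designed so that $(X,Y)$ only alters which of two natural edge-pairs is placed into the $V^1 \cup V^2$ region, while keeping the coarse structure (degree sequence and bipartite skeleton) fixed.

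For part~(i), the key observation is that each index $(i,j)$ contributes exactly two edges: in step~3 they are $\{u^1_i,v^1_j\}$ and $\{u^2_i,v^2_j\}$, while in step~4 they are $\{u^1_i,v^2_j\}$ and $\{u^1_j,v^2_i\}$. In either case the four vertices $u^1_i,v^1_j \in V^1$ and $u^2_i,v^2_j \in V^2$ each gain exactly one incidence, so the degree of every vertex in $V^1\cup V^2$ depends only on the set of valid indices, not on $X,Y$. Degrees in $V^3,\ldots,V^{k+1}$ are determined entirely by step~2. For part~(ii), when every $X_{i,j}\wedge Y_{i,j}=0$ no step~3 edges appear and all edges of $G_{X,Y}$ go between consecutive layers, so the graph is bipartite (odd- versus even-indexed layers) and contains no odd cycle, giving $\sC=0$.

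For part~(iii), the unique $(i^*,j^*)$ with $X_{i^*,j^*}\wedge Y_{i^*,j^*}=1$ creates exactly two intra-layer edges, $\{u^1_{i^*},v^1_{j^*}\}$ in $V^1$ and $\{u^2_{i^*},v^2_{j^*}\}$ in $V^2$. Under the layer-parity bipartition from~(ii), both are intra-part edges, so any odd cycle must use an odd number of them and, since only two are available, must use exactly one. I will lower-bound the number of cycles that use the intra-$V^1$ edge (the intra-$V^2$ case is symmetric). Such a cycle consists of the intra-edge together with a length-$2k$ walk from $v^1_{j^*}$ back to $u^1_{i^*}$ using only inter-layer edges. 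Restricting to the ``mountain'' walks $V^1\to V^2\to V^3\to\cdots\to V^{k+1}\to V^k\to\cdots\to V^2\to V^1$, each transition between $V^i$ and $V^{i+1}$ with $i\geq 2$ admits $K$ choices by step~2, while the two $V^1\leftrightarrow V^2$ transitions admit $K-1$ choices each: under the promise, $u^1_{i^*}$ (resp.\ $v^1_{j^*}$) is adjacent to every vertex of $V^2$ except $u^2_{i^*}$ (resp.\ $v^2_{j^*}$), since for every other $V^2$-index $\ell$ at least one of $(i^*,\ell),(\ell,i^*)$ has $X\wedge Y=0$ and so triggers step~4. This produces $\Omega(K^{2k-1})$ mountain walks.

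The main obstacle is ensuring a constant fraction of these walks yield \emph{simple} cycles. The mountain walk visits each of $V^2,\ldots,V^k$ twice, so I must discard walks whose two visits to some layer land on the same vertex; each such coincidence fixes one free coordinate, so a union bound over the $k-1$ relevant layers gives at most $O(K^{2k-2})$ non-simple walks, which is of strictly lower order. Finally, each simple $C_{2k+1}$ arises from at most a constant number of mountain-walk descriptions (from the choice of orientation and starting edge), and so the lower bound $\sC=\Omega(K^{2k-1})$ follows.
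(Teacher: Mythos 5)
Your proof is correct, and parts~(i) and~(ii) match the paper's argument (you give a bit more detail for~(i), which the paper treats as immediate from the construction). For part~(iii) you take a mildly different route: you count ``mountain'' walks and then argue that non-simple walks are $O(K^{2k-2})$ and that each simple cycle corresponds to only $O(1)$ walks. The paper instead counts simple cycles directly by choosing $2$ distinct vertices from $V^2\setminus\{u^2_{i^*},v^2_{j^*}\}$, $2$ distinct vertices from each of $V^3,\ldots,V^{k}$, and one vertex from $V^{k+1}$, obtaining $\binom{K-2}{2}\binom{K}{2}^{k-2}K=\Omega(K^{2k-1})$ cycles through the intra-$V^1$ edge; by selecting distinct vertices per layer up front, it sidesteps your simplicity and over-counting bookkeeping. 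Both routes yield the same bound.

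One small slip in your adjacency analysis for part~(iii): when $X_{i^*,j^*}\wedge Y_{i^*,j^*}=1$, step~3 replaces the step~4 edges $(u^1_{i^*},v^2_{j^*})$ and $(u^1_{j^*},v^2_{i^*})$, so the $V^1$--$V^2$ edges that disappear are these two. Hence $u^1_{i^*}$ is missing the neighbor $v^2_{j^*}$ (the ``cross'' vertex), not $u^2_{i^*}$ as you state, and symmetrically $v^1_{j^*}$ is missing $v^2_{i^*}=u^2_{i^*}$. This is exactly what motivates the paper's exclusion of $\{u^2_{i^*},v^2_{j^*}\}$ from the $V^2$ choices. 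It does not affect your count, since either way $K-O(1)$ choices remain at each $V^1\leftrightarrow V^2$ transition, so $\sC=\Omega(K^{2k-1})$ still holds.
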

\begin{proof}
	We prove each part separately: 
	\begin{enumerate}[label=$(\roman*)$]
		\item Follows immediately from the construction (see also Figure~\ref{fig:lb-odd}). 
		\item In this case, all edges of the graph are between $V^i$ and $V^{i+1}$ for some $1 \leq i < k+1$. As such, $G_{X,Y}$ is a bipartite graph with vertices in even layers in one side of the bipartition and the vertices in odd layers in the other side. 
		This means that in this case $G_{X,Y}$ has no odd cycle.  
		\item In this case, there exists a single edge $(u^1_i,v^1_j)$ inside $V^1$. By picking any pair of distinct vertices from $V^2 \setminus \set{u^2_i,v^2_j}$, 
		any pair of distinct vertices from $V^3, \ldots, V^{k}$, a single vertex from $V^{k+1}$, and the vertices $u^1_i,v^1_i$ incident on this edge, we obtain a unique copy of $C_{2k+1}$ in $G_{X,Y}$ (here, we used the assumption that the only edges missing
		between $V^1$ and $V^2$ are $(u^1_i,v^2_j)$ and $(u^1_j,v^2_i)$ by the assumption that at most one index $(i,j)$ has $X_{i,j} \wedge Y_{i,j}=1$).
		As such, 
		\begin{align*}
			\sC = 1 \cdot {{K-2}\choose{2}} \cdot {{K}\choose{2}}^{k-2} \cdot K \geq \paren{\frac{K}{4}}^2 \cdot \paren{\frac{K}{2}}^{2k-4} \cdot K = \Omega(K^{2k-1}),
		\end{align*}
		as $k$ is a constant. 
	\end{enumerate}
	This concludes the proof of Proposition~\ref{prop:reduction}. 
\end{proof}

Now let $\alg$ be a query algorithm for finding any multiplicative-approximation to $C_{2k+1}$ on graphs $G_{X,Y}$ constructed above. By the first part of Proposition~\ref{prop:reduction}, we can safely assume that $\alg$ knows degrees of all vertices in 
$G_{X,Y}$ as degrees of all vertices are always the same. Moreover, any edge-sample query performed by $\alg$ can be instead performed by first sampling one of the vertices proportional to its degree (as all degrees 
are known to $\alg$) and then making a random neighbor query on this vertex. As such, we assume without loss of generality that $\alg$ only performs neighbor and pair queries. 
We now show how to design the protocol $\Prot_{\alg}$ by simulating $\alg$ on the graph $G_{X,Y}$. 
\begin{tbox}
\textbf{The protocol $\Prot_{\alg}$.}

\begin{enumerate}[leftmargin=20pt]
	\item Alice and Bob use public randomness as the random bits needed by $\alg$. 
	\item For every query performed by $\alg$, the players determine the answer to the query on $G_{X,Y}$ as follows, update the state of $\alg$ consistently, and continue to the next query.
	\begin{itemize}[leftmargin=10pt]
		\item \emph{Pair query $(u,v)$:} If $u = u^1_i \in V^1$ and $v=v^2_j \in V^2$ (or vice versa), Alice communicates $X_{i,j}$ to Bob and Bob sends $Y_{i,j}$ to Alice. After this both players can determine the answer to this query by checking whether
		$X_{i,j} \wedge Y_{i,j} = 1$ or not. They do the same when both $u,v$ are in $V^1$ or are in $V^2$. In any other case, the answer to the query is		
		independent of the input to players and they can answer the query with no communication. 
		\item \emph{Neighbor query $(u,j)$:} Suppose $u = u^1_i \in V^1$. If $i=j$, then the answer to the query is $v^2_j \in V^2$. Otherwise, Alice and Bob communicate $X_{i,j}$ and $Y_{i,j}$ and both players determine $X_{i,j} \wedge Y_{i,j}$. If  
		$X_{i,j} \wedge Y_{i,j}=0$, the answer to the query is $v^2_j \in V^2$
		and otherwise it is $v^1_j$ in $V^1$. This is done similarly for when $u = u^2_i \in V^2$. In any other case, the answer to the query 		
		is independent of the input to players and they can answer the query with no communication. 
	\end{itemize}
	\item At the end, if the answer returned by $\alg$ is non-zero, they return that there exists some index $(i,j)$ such that $X_{i,j} \wedge Y_{i,j} = 1$ and otherwise they output no such index exists. 
\end{enumerate}
\end{tbox}

\subsection*{Proof of Theorem~\ref{thm:lb-odd}}

We now prove the correctness of the protocol $\Prot_{\alg}$ in the previous part and establish Theorem~\ref{thm:lb-odd}. 

\begin{proof}[Proof of Theorem~\ref{thm:lb-odd}]
	Let $\alg$ be any query algorithm for counting $C_{2k+1}$ with probability of success at least $2/3$, and let $\Prot_{\alg}$ be the protocol created based on $\alg$. 
	By Proposition~\ref{prop:reduction}, for any input $X,Y$ to $\Disj(X,Y)$ that satisfies the required promise, the graph $G(X,Y)$ contains a copy of $C_{2k+1}$ iff there exists an index $(i,j)$ such that $X_{i,j} \wedge Y_{i,j} = 1$. 
	As such, the output of $\alg$ on $G_{X,Y}$ (whenever correct) is non-zero iff there exists an index $(i,j)$ such that $X_{i,j} \wedge Y_{i,j} = 1$. As the answer returned to each query of $\alg$ in the protocol $\Prot_{\alg}$ is consistent with the 
	underlying graph $G_{X,Y}$, Alice and Bob can simulate $\alg$ on $G_{X,Y}$ correctly and hence their output would be correct with probability at least $2/3$. Additionally, simulating each query access of $\alg$ requires $O(1)$ communication by players
	hence communication cost of $\Prot_{\alg}$ is within constant factor of query complexity of $\alg$. 
	
	Note that the number of edges in the graph $G_{X,Y}$ is $m = \Theta(K^2)$. By the lower bound of $\Omega(K^2)$ on the communication complexity of $\Disj$, we obtain that query cost of $\alg$ needs to be $\Omega(K^2) = \Omega(m)$.
	On the other hand, the last part of Proposition~\ref{prop:reduction} implies that the number of copies of $C_{2k+1}$ in $G$ is $\Omega(m^{k-\frac{1}{2}})$. By re-parametrizing the lower bound of $\Omega(m)$ on the query 
	complexity of $\alg$, we obtain that $\alg$ needs to make at least $\Omega(\frac{m^{k+\frac{1}{2}}}{\sC})$, finalizing the proof. 
\end{proof}

\subsection{A Lower Bound for Database Join Size Estimation}\label{sec:lb-db-join}

Recall the colorful subgraph counting problem (the abstraction of database join size estimation problem) from Section~\ref{sec:extensions}. We prove the following theorem in this section. 

\begin{theorem}\label{thm:lb-colorful}
	For any subgraph $H(V_H,E_H)$ which contains at least one edge,
	suppose $\alg$ is an algorithm for the colorful subgraph estimation problem that given $H$, a coloring $c_H: E_H \rightarrow \IN$, and query access to $G(V,E)$ with $m$ edges and coloring function 
	$c_G:E \rightarrow \IN$, can output a multiplicative-approximation to the number of colorful copies of $H$ in $G$ with probability at least $2/3$.  Then, $\alg$
	requires $\Omega(\frac{m^{\rho(H)}}{\sH_{c}})$ queries, where $\sH_c$ is the number of colorful copies of $H$ in $G$. The lower bound 
	continues to hold even if the number of colors used by $c_H$ and $c_G$ is at most two. 
\end{theorem}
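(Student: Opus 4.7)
The plan is to give a communication-complexity reduction from the set disjointness problem $\Disj$, following the style of the proof of Theorem~\ref{thm:lb-odd}. Let $\alg$ be any multiplicative-approximation algorithm for the colorful subgraph estimation problem on $H$. Alice and Bob, holding inputs $X,Y$ of a $\Disj$ instance of size $N$ under the ``at most one violation'' promise, will implicitly define a two-colored graph $G_{X,Y}$ on which they simulate $\alg$ at a cost of $O(1)$ bits of communication per query. The construction will ensure that (i) $m$ and $\sH_c$ are determined only by $N$ (up to constants) with $N = \Theta(m^{\rho(H)}/\sH_c)$, (ii) every vertex of $G_{X,Y}$ has a fixed degree independent of $X,Y$ so that edge-sample queries can be simulated via degree-weighted vertex sampling plus a neighbor query as in Theorem~\ref{thm:lb-odd}, and (iii) every pair/neighbor query reveals at most one bit of each player's input. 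Combined with the $\Omega(N)$ communication lower bound for $\Disj$, this will yield the claimed $\Omega(m^{\rho(H)}/\sH_c)$ query-complexity lower bound.

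The construction of $G_{X,Y}$ uses the decomposition $\DD(H) = \set{\CC_1,\ldots,\CC_o,\SS_1,\ldots,\SS_s}$ of Lemma~\ref{lem:subgraph-decomposition}, installing a ``gadget'' for each component. Each gadget blows up the vertices of its component into sets whose sizes are tuned according to the LP-dual vertex-packing solution to the edge-cover LP~(\ref{lp:ec}) (scaled by a parameter $K$), and each pair of adjacent blow-up sets is connected by a complete bipartite graph, \emph{except} that in one distinguished gadget one pair of blow-up sets is connected by the $(X,Y)$-dependent bipartite construction from Theorem~\ref{thm:lb-odd} (inter-layer edges when $X_{i,j}\wedge Y_{i,j}=0$, replaced by intra-layer edges when $X_{i,j}\wedge Y_{i,j}=1$). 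The non-decomposition edges of $H$ (those with $x^*_e=0$) are realized as complete bipartite graphs between the corresponding blow-ups of the gadgets they link. The rigid block structure of the gadgets, together with the vertex-disjointness of the components in $\DD(H)$, ensures that the count of colorful copies of $H$ in $G_{X,Y}$ factorizes multiplicatively over the gadgets (up to the automorphism factor $f(H)$ from Section~\ref{sec:profiles}); in particular, the distinguished gadget contributes $0$ copies in the disjoint case (analogously to part~$(ii)$ of Proposition~\ref{prop:reduction}) and $\Omega(K^{2\rho(\mathrm{comp})-1})$ copies in the unique-violation case (analogously to part~$(iii)$), while every non-distinguished gadget contributes its tight-AGM count by Eq~(\ref{eq:agm}) and Lemma~\ref{lem:subgraph-AGM}.

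The two-color restriction is handled by designating a single edge $e^*\in E_H$ inside the distinguished component of $\DD(H)$ corresponding to the $(X,Y)$-dependent pair of blow-up sets, coloring $e^*$ with color $1$ and every other edge of $H$ with color $2$ in $c_H$, and correspondingly coloring all $(X,Y)$-dependent edges of $G_{X,Y}$ with color $1$ and every other edge with color $2$ in $c_G$. A colorful copy of $H$ in $G_{X,Y}$ is then forced to place $e^*$ on a color-$1$ edge in the distinguished gadget, which, together with the block structure, forces the copy to follow the decomposition and confirms the multiplicative count. The main obstacle I anticipate is the quantitative tuning of the blow-up sizes and the choice of $N$ so that the identity $N = \Theta(m^{\rho(H)}/\sH_c)$ holds for \emph{every} $H$; concretely, one must design a star-gadget analogue of the Theorem~\ref{thm:lb-odd} construction with a clean ``$0$ vs.\ $\Omega(K^{2\rho(\mathrm{comp})-1})$'' dichotomy, $\Theta(K^2)$ $(X,Y)$-dependent edges, and fixed degrees, since Theorem~\ref{thm:lb-odd} only directly supplies the odd-cycle case. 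A secondary subtlety will be to verify that the ``complete bipartite'' connectors used for non-decomposition edges do not create any spurious colorful copies of $H$ that inflate the disjoint-case count above $0$ (which the vertex-disjointness of components in $\DD(H)$ together with the two-color assignment should rule out up to the automorphism factor $f(H)$).
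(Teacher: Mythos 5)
Your proposal takes a genuinely different --- and substantially more complicated --- route than the paper's, and the gaps you flag would need more than tuning to close. The paper's construction exploits the colorful setting in a way your plan overlooks: once colors are available, the hidden object that $\alg$ must locate can be a single \emph{recolored} edge rather than a structurally \emph{rewired} bipartite gadget, and this collapses the entire argument into one construction that works uniformly for every $H$, with no decomposition at all. Concretely, the paper takes an optimal solution $y^*$ of the dual LP~(\ref{lp:is}), blows up each $a_i \in V(H)$ to a block $V_i$ of size $m^{y^*_{a_i}}$, places a complete bipartite graph between $V_i$ and $V_j$ for every edge $(a_i,a_j) \in E(H)$, colors every edge of $G$ with color $0$, and in the Yes distribution $\FG_1$ recolors a single uniformly random edge between the blocks of a tight constraint $f^*=(a,b)$ (where $y^*_a+y^*_b=1$, and $c_H$ puts its unique color-$1$ edge on $f^*$) to color $1$. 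The edge set is identical across Yes and No, so all degrees are fixed and degree/neighbor/edge-sample queries carry no information; distinguishing amounts to locating a colored needle among $m$ candidate edges, which is argued directly (Lemma~\ref{lem:qc-fg}) to cost $\Omega(m)$ pair queries without any communication-complexity detour, and $\sH_c = m^{\rho(H)-1}$ follows from complementary slackness (Proposition~\ref{prop:dist}). No decomposition $\DD(H)$, no per-component gadgets, and no star analogue of Proposition~\ref{prop:reduction} are needed --- the odd-cycle edge-swapping of Theorem~\ref{thm:lb-odd} exists precisely to compensate for the \emph{absence} of colors there.

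Three concrete problems with your plan as written. First, you have no gadget when the distinguished component of $\DD(H)$ is a star; if $\DD(H)$ contains only stars (e.g., $H$ a single edge, a path, or any bipartite $H$), there is no odd-cycle component to carry the $(X,Y)$-dependency, and your writeup does not say what to do. This is not a tuning subtlety but a missing half of the argument. Second, your claimed Yes-case count of $\Omega(K^{2\rho(\mathrm{comp})-1})$ has the wrong exponent: for $C_{2k+1}$ with $\rho=k+\tfrac{1}{2}$, Proposition~\ref{prop:reduction}(iii) gives $\Omega(K^{2k-1})=\Omega(K^{2\rho-2})$, not $\Omega(K^{2k})$, so the bookkeeping behind your identity $N=\Theta(m^{\rho(H)}/\sH_c)$ is off by a $\sqrt{m}$ factor. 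Third, your coloring rule (``all $(X,Y)$-dependent edges of $G_{X,Y}$ get color~$1$'') assigns color~$1$ to \emph{every} edge incident to the first two layers, while $H$ carries exactly one color-$1$ edge; but every odd-cycle copy from Proposition~\ref{prop:reduction}(iii) enters $V^1$ on one inter-layer edge, traverses the intra-layer edge, and leaves $V^1$ on another inter-layer edge, so it uses at least three color-$1$ edges and is never colorful --- under your coloring the Yes-case count collapses to zero. The fix is to color only the intersection-witnessing intra-layer edges with color~$1$, at which point the color alone identifies the needle and the bipartite swap machinery is doing no work, which is exactly the observation that lets the paper dispense with all of it.
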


Recall the fractional edge-cover LP in of Section~\ref{sec:edge-cover} (see LP~(\ref{lp:ec})). The following linear program for fractional independent-set is the dual to the edge-cover LP (and hence by LP duality has the same optimal value): 
\begin{align}
\rho(H)\quad = \quad&\textnormal{maximize} && \hspace{-2.5cm} \textnormal{$\sum_{a \in V(H)}y_a$} \notag \\
&\textnormal{subject to} && \hspace{-2.25cm} \textnormal{$y_a + y_b \leq 1$ for all edges $(a,b) \in E(H)$.} \label{lp:is}
\end{align}
\noindent
Throughout this section, we fix an optimal solution $\ystar$ of LP~(\ref{lp:is}). We use $\ystar$ to design two distributions $\FG_0$ and $\FG_1$ on graphs $G$ with $O(m)$ edges\footnote{For simplicity of exposition, we let the graphs contain $O(m)$ edges instead of exactly $m$ edges (but provide the algorithm with the exact number of edges in the graph); a simple rescaling of the bound immediately proves the lower bound for the case of graphs with exactly $m$ edges as well.} such that any graph $G$ sampled from $\FG_0$, denoted by $G \sim \FG_0$, contains no colorful copy of $H$ (for a specific 
coloring of $H$ to be described later), while any $G \sim \FG_1$ contains many colorful copies of $H$. We then prove that any algorithm that makes only a small number of queries to the underlying graph cannot distinguish between graphs sampled 
from $\FG_0$ and $\FG_1$, concluding the proof. 

\subsection*{Distributions $\FG_0$ and $\FG_1$}

We first define the coloring $c_H$ of $H$. Let $\fstar:= (a,b)$ be any arbitrary edge in $H$ such that $\ystar_a + \ystar_b = 1$, i.e., is a tight constraint for $\ystar$ in LP~(\ref{lp:is}). By optimality of $\ystar$ and as $H$ is not a singleton vertex, such
an edge $\fstar$ always exists. We now define $c_H(\fstar) := 1$ and $c_H(f) := 0$ for any other edge $f \in E(H) \setminus \set{\fstar}$. 

We now define the distribution $\FG_0$. In fact, distribution $\FG_0$ has all its mass on a single graph $G_0$ with coloring $c_{G_0}$ which contains no colorful copy of $H$ (under the coloring $c_H$ defined above).  
Suppose $H$ has $k \geq 2$ vertices denoted by $V(H) := \set{a_1,\ldots,a_k}$. The graph $G_0$ is constructed as follows. Firstly, the vertices of $G_0$ are partitioned into $k$ sets $V(G_0) := V_1 \cup \ldots \cup V_k$ with $\card{V_i} = m^{\ystar_{a_i}}$. 
Then for any edge $(a_i,a_j) \in E(H)$, we connect all vertices in $V_i$ to all vertices in $V_j$ in $G_0$. Finally, the coloring $c_{G_0}$ of $G_0$ simply assigns the color $0$ to \emph{all} edges in $G_0$. See Figure~\ref{fig:lb-db} for an illustration.

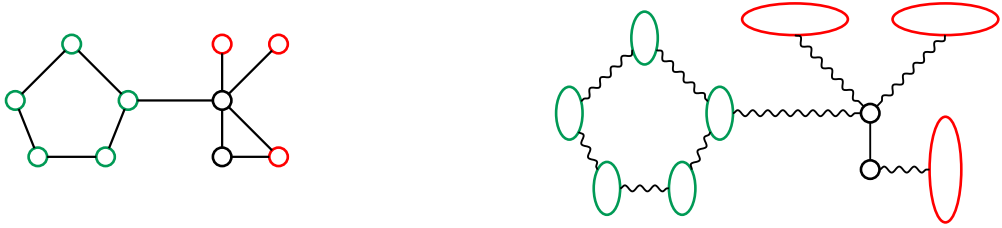
\begin{figure}[h!]
    \centering
    \subcaptionbox{The subgraph $H$. The number next to each vertex $a$ denotes $\ystar_a$.}[0.35\textwidth]{

    \begin{tikzpicture}[ auto ,node distance =1cm and 2cm , on grid , semithick , state/.style ={ circle ,top color =white , bottom color = white , draw, black , text=black}, every node/.style={inner sep=0,outer sep=0}]

\node[state, circle, ForestGreen, line width=0.35mm, minimum height=7pt, minimum width=7pt] (a1){};
\node[right,fill=white,opacity=0, text opacity=1] (ta1)[above=0.3cm of a1]{\textcolor{ForestGreen}{$0.5$}};

\node[state, circle, ForestGreen, line width=0.35mm, minimum height=7pt, minimum width=7pt] (a2) [below left=0.75cm and 0.75cm of a1]{};
\node[right,fill=white,opacity=0, text opacity=1] (ta2)[above left=0.3cm and 0.1cm of a2]{\textcolor{ForestGreen}{$0.5$}};
\node[state, circle, ForestGreen, line width=0.35mm, minimum height=7pt, minimum width=7pt] (a3) [below right=0.75cm and 0.75cm of a1]{};
\node[right,fill=white,opacity=0, text opacity=1] (ta3)[above right=0.3cm and 0.1cm of a3]{\textcolor{ForestGreen}{$0.5$}};
\node[state, circle, ForestGreen, line width=0.35mm, minimum height=7pt, minimum width=7pt] (a4) [below right=0.75cm and 0.3cm of a2]{};
\node[right,fill=white,opacity=0, text opacity=1] (ta4)[ left=0.4cm of a4]{\textcolor{ForestGreen}{$0.5$}};
\node[state, circle, ForestGreen, line width=0.35mm, minimum height=7pt, minimum width=7pt] (a5) [below left=0.75cm and 0.3cm of a3]{};
\node[right,fill=white,opacity=0, text opacity=1] (ta5)[ right=0.4cm of a5]{\textcolor{ForestGreen}{$0.5$}};

\node[state, circle, white, line width=0.35mm, minimum height=7pt, minimum width=7pt] (x) [below left=0.75cm and 0.3cm of a5]{};

\node[state, circle, black, line width=0.35mm, minimum height=7pt, minimum width=7pt](b1) [below right=0.75cm and 2cm of a1]{};
\node[right,fill=white,opacity=0, text opacity=1] (tb1)[above left=0.25cm and 0.2cm of b1]{$0$};
\node[state, circle, black, line width=0.35mm, minimum height=7pt, minimum width=7pt] (b2) [below=0.75cm of b1]{};
\node[right,fill=white,opacity=0, text opacity=1] (tb2)[left=0.3cm of b2]{$0$};
\node[state, circle, red, line width=0.35mm, minimum height=7pt, minimum width=7pt](b3) [right=0.75cm of b2]{};
\node[right,fill=white,opacity=0, text opacity=1] (tb3)[right=0.3cm of b3]{\textcolor{red}{$1$}};
\node[state, circle, red, line width=0.35mm, minimum height=7pt, minimum width=7pt] (b4) [above=0.75cm of b1]{};
\node[right,fill=white,opacity=0, text opacity=1] (tb4)[above=0.3cm of b4]{\textcolor{red}{$1$}};
\node[state, circle, red, line width=0.35mm, minimum height=7pt, minimum width=7pt] (b5) [right=0.75cm of b4]{};
\node[right,fill=white,opacity=0, text opacity=1] (tb5)[above=0.3cm of b5]{\textcolor{red}{$1$}};

\draw[line width=0.3mm](a1) to (a2);
\draw[line width=0.3mm] (a1) to (a3);
\draw[line width=0.3mm] (a2) to (a4);
\draw[line width=0.3mm] (a4) to (a5);
\draw[line width=0.3mm] (a5) to (a3);

\draw[line width=0.3mm] (b1) to (b2);
\draw[line width=0.3mm] (b1) to (b3);
\draw[line width=0.3mm] (b2) to (b3);
\draw[line width=0.3mm] (b1) to (b4);
\draw[line width=0.3mm] (b1) to (b5);

\draw[line width=0.3mm] (a3) to (b1);

\end{tikzpicture}

}
\hspace{0.5cm}   \subcaptionbox{The graph $G_0$ of $\FG_0$. The number next to each block of vertices denotes the size of the block. }[0.55\textwidth]{

    \begin{tikzpicture}[ auto ,node distance =1cm and 2cm , on grid , semithick , state/.style ={ circle ,top color =white , bottom color = white , draw, black , text=black}, every node/.style={inner sep=0,outer sep=0}]

\node[state, ellipse, ForestGreen, line width=0.35mm, minimum height=20pt, minimum width=10pt] (a1){};
\node[right,fill=white,opacity=0, text opacity=1] (ta1)[above=0.6cm of a1]{\textcolor{ForestGreen}{$\sqrt{m}$}};
\node[state, ellipse, ForestGreen, line width=0.35mm, minimum height=20pt, minimum width=10pt] (a2) [below left=1cm and 1cm of a1]{};
\node[right,fill=white,opacity=0, text opacity=1] (ta2)[above left=0.6cm and 0.2cm of a2]{\textcolor{ForestGreen}{$\sqrt{m}$}};
\node[state, ellipse, ForestGreen, line width=0.35mm, minimum height=20pt, minimum width=10pt] (a3) [below right=1cm and 1cm of a1]{};
\node[right,fill=white,opacity=0, text opacity=1] (ta3)[above right=0.6cm and 0.2cm of a3]{\textcolor{ForestGreen}{$\sqrt{m}$}};
\node[state, ellipse, ForestGreen, line width=0.35mm, minimum height=20pt, minimum width=10pt] (a4) [below right=1cm and 0.5cm of a2]{};
\node[right,fill=white,opacity=0, text opacity=1] (ta4)[ left=0.6cm of a4]{\textcolor{ForestGreen}{$\sqrt{m}$}};
\node[state, ellipse, ForestGreen, line width=0.35mm, minimum height=20pt, minimum width=10pt] (a5) [below left=1cm and 0.5cm of a3]{};
\node[right,fill=white,opacity=0, text opacity=1] (ta5)[ right=0.6cm of a5]{\textcolor{ForestGreen}{$\sqrt{m}$}};

\node[state, circle, black, line width=0.35mm, minimum height=7pt, minimum width=7pt](b1) [right=2cm of a3]{};
\node[right,fill=white,opacity=0, text opacity=1] (tb1)[below right=0.25cm and 0.2cm of b1]{$1$};
\node[state, ellipse, black, line width=0.35mm, minimum height=7pt, minimum width=7pt] (b2) [below=0.75cm of b1]{};
\node[right,fill=white,opacity=0, text opacity=1] (tb2)[left=0.3cm of b2]{$1$};
\node[state, ellipse, red, line width=0.35mm, minimum height=40pt, minimum width=12pt](b3) [right=1cm of b2]{};
\node[right,fill=white,opacity=0, text opacity=1] (tb3)[right=0.45cm of b3]{\textcolor{red}{${m}$}};
\node[state, ellipse, red, line width=0.35mm, minimum height=12pt, minimum width=40pt] (b4) [above left =1.25cm and 1cm of b1]{};
\node[right,fill=white,opacity=0, text opacity=1] (tb4)[above=0.45cm of b4]{\textcolor{red}{${m}$}};
\node[state, ellipse, red, line width=0.35mm, minimum height=12pt, minimum width=40pt] (b5) [right=2cm of b4]{};
\node[right,fill=white,opacity=0, text opacity=1] (tb5)[above=0.45cm of b5]{\textcolor{red}{${m}$}};

  \tikzset{decoration={snake,amplitude=.4mm,segment length=2mm,
                       post length=0mm,pre length=0mm}}
                       
\draw[line width=0.25mm, decorate] (a1) to (a2);
\draw[line width=0.25mm, decorate] (a1) to (a3);
\draw[line width=0.25mm, decorate] (a2) to (a4);
\draw[line width=0.25mm, decorate] (a4) to (a5);
\draw[line width=0.25mm, decorate] (a5) to (a3);

\draw[line width=0.25mm,decorate] (a3) to (b1);

\draw[line width=0.25mm,decorate] (b4.south) to (b1);

\draw[line width=0.25mm,decorate] (b5.south) to (b1);

\draw[line width=0.25mm] (b1) -- (b2);

\draw[line width=0.25mm, decorate] (b2) -- (b3);


\end{tikzpicture}

}

   \caption{Illustration of the graph $G_0$ in distribution $\FG_0$.}
 \label{fig:lb-db}

\end{figure}

The distribution $\FG_1$ is constructed similarly (but on a larger support). Let $G_0$ be the single graph constructed by $\FG_0$. Any graph $G \sim \FG_1$ is constructed as follows: we first let $G = G_0$ and then choose a single edge $\estar$ uniformly
at random from the edges between $V_i$ and $V_j$ where $(i,j)$ is chosen such that $\fstar = (a_i,a_j)$ (recall the definition of edge $\fstar$ above). We then change the color $c_G(\estar) = 1$ (all other edges are still assigned the color $0$). 
This concludes the description of distributions $\FG_0$ and $\FG_1$. 
We now present basic properties of these distributions. 

\begin{proposition}\label{prop:dist}
	For the two distributions $\FG_0$ and $\FG_1$:
	\begin{enumerate}[label=(\roman*)]
		\item\label{p:dist1} Every graph $G$ sampled from $\FG_0$ or $\FG_1$ contains $\Theta(m)$ edges. 
		\item\label{p:dist2} The graph $G_0 \sim \FG_0$ contains no colorful copies of $H$, while any graph $G \sim \FG_1$ contains $m^{\rho(H)-1}$ colorful copies of $H$. 
		\item\label{p:dist3} For a graph $G \sim \FG_1$, the edge $\estar$ is chosen uniformly at random among the $m$ edges between $V_i$ and $V_j$. 
	\end{enumerate}
\end{proposition}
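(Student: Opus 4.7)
The plan is to verify each of the three claims by direct inspection of the construction, leveraging the optimal dual solution $\ystar$ and the block-biclique structure of $G_0$. First I would handle part~(i) by counting the edges of $G_0$ one block-pair at a time. For each edge $(a_i,a_j) \in E(H)$, the complete bipartite graph between $V_i$ and $V_j$ contributes $\card{V_i}\cdot\card{V_j} = m^{\ystar_{a_i}+\ystar_{a_j}}$ edges; since $\ystar$ is feasible for LP~(\ref{lp:is}), this is at most $m$. Summing over the $O(1)$ edges of $H$ gives $\card{E(G_0)} = O(m)$, while the tight edge $\fstar = (a,b)$ alone contributes $m^{\ystar_a+\ystar_b} = m$, establishing the matching lower bound. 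Adding one extra edge in $\FG_1$ does not disturb the asymptotic count.

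Next, I would prove part~(ii). Every edge of $G_0$ is colored $0$, whereas $c_H(\fstar) = 1$, so no subgraph of $G_0$ can ever be a colorful copy of $H$. For $G \sim \FG_1$, the unique color-$1$ edge is $\estar$, so any colorful copy of $H$ must map $\fstar$ onto $\estar$. This forces the two endpoints of $\fstar$ into the blocks $V_i,V_j$ hosting $\estar$, and then the block-biclique structure of $G$ forces every vertex $a_\ell \in V(H)$ to be mapped into the corresponding block $V_\ell$ (since the only edges of $G$ lie between blocks indexed by endpoints of edges of $H$). The number of such extensions is
\[
\prod_{\ell \notin \{i,j\}} \card{V_\ell} \;=\; m^{\sum_\ell \ystar_{a_\ell} - \ystar_{a_i} - \ystar_{a_j}} \;=\; m^{\rho(H)-1},
\]
using the tightness $\ystar_a+\ystar_b = 1$ together with LP-duality, $\sum_\ell \ystar_{a_\ell} = \rho(H)$. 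Each such extension does yield a valid colorful copy by completeness of the bicliques.

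Part~(iii) is then immediate from the construction of $\FG_1$, since $\estar$ is sampled uniformly from the edges between $V_i$ and $V_j$, and by tightness of $\fstar$ the number of such edges is $m^{\ystar_a+\ystar_b} = m$.

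The main obstacle is the delicate bookkeeping in part~(ii) of distinguishing between \emph{embeddings} of $H$ and \emph{copies} of $H$ as unlabeled subgraphs: in principle, automorphisms of $H$ could introduce a multiplicative constant into the count. However, because $c_H$ assigns color $1$ to $\fstar$ alone, every color-preserving automorphism of $H$ must fix $\fstar$ setwise, so any such correction depends only on $H$ (a constant) and is harmless for the asymptotic lower bound derived from this proposition in the sequel.
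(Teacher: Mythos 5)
Your proof takes essentially the same route as the paper's, and parts (i) and (iii) match it step for step; your explicit remark that the tight edge $\fstar$ alone contributes $m^{\ystar_a+\ystar_b}=m$ edges makes the $\Omega(m)$ half of the $\Theta(m)$ bound cleaner than the paper (which only argues the $O(m)$ direction explicitly). In part (ii), however, your parenthetical justification --- that ``the block-biclique structure of $G$ forces every vertex $a_\ell\in V(H)$ to be mapped into the corresponding block $V_\ell$'' --- is not actually correct: since different blocks can be joined to the same neighboring block, a colorful copy of $H$ may reuse a block. For example, if $H$ is the path $a_1$--$a_2$--$a_3$ with $\fstar=(a_1,a_2)$, the optimal dual puts $\card{V_1}=\card{V_3}=m$ and $\card{V_2}=1$, and a colorful copy can send $a_3$ back into $V_1$ rather than into $V_3$, roughly doubling the count. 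This, like the automorphism concern you flag at the end, only changes $\sH_c$ by a multiplicative constant depending solely on $H$, so the asymptotic content $\sH_c = \Theta(m^{\rho(H)-1})$ survives and Theorem~\ref{thm:lb-colorful}, which only needs an $\Omega$-bound, is unaffected. Note that the paper's own proof has the same implicit imprecision (it asserts the count is exactly $\prod_{a_i\notin\{a,b\}}\card{V_i}$) without flagging it, so you are at least more careful in acknowledging that a harmless constant-factor correction may be lurking --- you just attribute it to the wrong source.
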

\begin{proof} We prove each part separately below. 
	\begin{enumerate}[label=($\roman*$)]
	\item The number of edges sampled from the distributions $\FG_0$ and $\FG_1$ is the same, hence it only suffices to prove the bound for the (unique) graph $G_0$ in the support of $\FG_0$. For any edge $(a_i,a_j)$ in $H$, 
	we have a bipartite clique between $V_i$ and $V_j$ in $G_0$, hence resulting in $\card{V_i} \cdot \card{V_j} = m^{\ystar_{a_i}} \cdot m^{\ystar_{a_j}} \leq m$ edges in $G$, where the final inequality is because $\ystar$ is a feasible
	solution of LP~(\ref{lp:is}). As such, the number of edges in $G_0$ is $O(m)$ as size of $H$ is constant. 
	\item There is no edge with color $1$ in $G_0 \sim \FG_0$, while $H$ has an edge with color $1$ and hence $G_0$ contains no colorful copy of $H$. On the other hand, in any graph $G \sim \FG_1$, we can create a copy of $H$
	by mapping each vertex $a_i$ of $V(H)$ which is not incident to $\fstar$ to any arbitrary vertex in $V_i$ and then maping the edge $\fstar$ of $H$ to $\estar$ in $G$. Suppose $\fstar=(a,b)$. The total number of colorful copies of $H$ in $G$ is 
	then $\prod_{a_i \in V(H) \setminus \set{a,b}} \card{V_i} = m^{\sum_{a_i \in V(H) \setminus \set{a,b}} \ystar_{a_i}} = m^{\rho(H)-1}$ as $\sum_{a_i \in V(H)} \ystar_{a_i} = \rho(H)$ and $\ystar_{a} + \ystar_{b} = 1$. 
	\item The fact that $\estar$ is chosen uniformly at random is by definition of distribution $\FG_1$. The total number of edges between $V_i$ and $V_j$ where $\estar$ is chosen from is $\card{V_i} \cdot \card{V_j} = m^{\ystar_{a_i} + \ystar_{a_j}} = m$
	by the choice of $\fstar = (a_i,a_j)$. 
	\end{enumerate}
	This concludes the proof of Proposition~\ref{prop:dist}.
\end{proof}

\subsection*{Query Complexity of Distinguishing $\FG_0$ and $\FG_1$}

We now prove that any query algorithm that can distinguish between instances sampled from $\FG_0$ and $\FG_1$ requires $\Omega(m)$ queries, proving the following lemma. 

\begin{lemma}\label{lem:qc-fg}
	Define the distribution $\FG := \frac{1}{2} \cdot \FG_0 + \frac{1}{2} \cdot \FG_1$. Suppose $\alg$ is any algorithm that given a graph $G \sim \FG$ with probability at least $2/3$ determines whether it belongs to (the support of) $\FG_0$ or $\FG_1$. 
	 Then $\alg$ needs to make $\Omega(m)$ queries to the graph. 
\end{lemma}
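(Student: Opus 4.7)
The plan is to observe that the only difference between $G_0$ and a graph $G \sim \FG_1$ is a single edge $\estar$ whose color is $1$ instead of $0$; the underlying \emph{graph structure} (vertex set, edge set, degrees, neighbor lists in their canonical ordering) is identical, and the coloring differs on exactly one edge, chosen uniformly at random among the $m$ edges of the bipartite clique between $V_i$ and $V_j$ (by Proposition~\ref{prop:dist}\ref{p:dist3}). We then couple the two cases so that on input $G_0$ and on input $G \sim \FG_1$ the algorithm $\alg$ uses the same random coins and sees the exact same answer to each query unless that query happens to ``expose'' $\estar$, i.e.\ causes the oracle to return $\estar$ together with its color.

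The core of the argument will be to show that each of the four query types exposes $\estar$ with probability only $O(1/m)$, where the probability is taken over the random choice of $\estar \sim \FG_1$, independently of any prior history of the algorithm. For a degree query the answer is determined by $G_0$ alone (degrees do not depend on the coloring), so the probability is $0$. For a pair query $(u,v)$ chosen by $\alg$, the answer includes $\estar$'s color only if $\{u,v\}=\estar$, which happens with probability at most $1/m$ since $\estar$ is uniform over a set of $m$ candidate edges. For a neighbor query $(u,i)$ the answer is a fixed vertex $w$ determined by $G_0$, and its color is the color of $(u,w)$, which equals $1$ only if $(u,w)=\estar$, again with probability at most $1/m$. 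Finally, an edge-sample query returns a uniformly random edge among the $\Theta(m)$ edges of $G$, and equals $\estar$ with probability $\Theta(1/m)$.

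Combining these observations by a union bound over the $q$ queries made by $\alg$, the probability that the transcript seen on input $G_0$ differs from the transcript seen on input $G\sim\FG_1$ is at most $O(q/m)$. Whenever the transcripts agree, $\alg$ produces the same output in both cases, so the total variation distance between the distribution of $\alg$'s output on $\FG_0$ and on $\FG_1$ is at most $O(q/m)$. On the other hand, a standard reduction shows that any algorithm distinguishing the two distributions with success probability at least $2/3$ must induce output distributions of total variation distance at least $1/3$. Hence $q = \Omega(m)$, as desired.

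The only genuine subtlety will be to argue that the bound $O(1/m)$ per query continues to hold \emph{adaptively}: the algorithm may choose its $t$-th query as a function of the answers to the first $t-1$ queries, so formally one needs to condition on the history so far and note that, conditioned on ``$\estar$ has not yet been exposed,'' the posterior distribution of $\estar$ remains uniform over a subset of at least $m - (t-1)$ candidate edges, so the conditional probability of exposure at step $t$ is still $O(1/m)$. Carrying this adaptive bookkeeping carefully (for instance via a standard exposure martingale) is the only nonroutine step; everything else is a direct consequence of the structure of $\FG_0$ and $\FG_1$ established in Proposition~\ref{prop:dist}.
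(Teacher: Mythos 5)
Your proof is correct and rests on the same core observation as the paper's: the two distributions differ only in the color of a single edge $\estar$ that is uniform over $m$ candidates, so each of the $q$ queries can expose $\estar$ with probability $O(1/m)$, and a union bound (carried out carefully under adaptivity by conditioning on the not-yet-exposed event) forces $q=\Omega(m)$. The packaging differs slightly---the paper first reduces all query types to pair queries, invokes the easy direction of Yao's minimax principle to fix a deterministic algorithm, and then computes the all-zero-transcript probability explicitly via the telescoping product $\prod_{i}\bigl(1-\tfrac{1}{m-i+1}\bigr)=1-\ell/m$, whereas you phrase the same bound as a coupling argument bounding total variation distance between output distributions---but the substance is identical.
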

\begin{proof}	
	We assume that $\alg$ knows the partitioning of vertices of $G$ into $V_1,\ldots,V_{\card{V(H)}}$ and is hence even aware of the set of edges in $G$ (but not their colors); this can only strengthen our lower bound. 
	
	Assume $\fstar = (a_i,a_j)$ and note that the only difference between the graphs in $\FG_0$ and $\FG_1$ is that the latter graphs have an edge $\estar$ between $V_i$ and $V_j$ that is colored $1$ instead of $0$. 
	This implies that the only ``useful'' queries performed by $\alg$ are pair queries between vertices $u \in V_i$ and $v \in V_j$ (degree queries can be
	answered without querying the graph; neighbor queries can be simulated by a pair query as the set of neighbors are all known in advance; edge-sample queries can also be performed by pair queries by sampling one of the known edges uniformly at random 
	and then querying the edge to determine its color). 
	
	Suppose towards a contradiction that $\alg$ is an algorithm (possibly randomized) that given a graph $G \sim \FG$ uses $o(m)$ queries and 
	can determine whether $G$ belongs to $\FG_0$ or $\FG_1$ with probability at least $2/3$. By fixing the randomness of this algorithm
	and an averaging argument (namely, the easy direction of Yao's minimax principle~\cite{Yao83}), we obtain a deterministic algorithm $\alg'$ that uses the same number of queries as $\alg$ and output the correct answer with probability $2/3$, where
	the probability is now only taken over the randomness of the distribution $\FG$. 
	
	Let $Q := (q_1,q_2,\ldots,q_\ell)$ for $\ell = o(m)$ determines the (potentially adaptively chosen) set of queries performed by $\alg'$ before it outputs the answer. Since the set of edges in the graph are already known to $\alg$, 
	the only interesting part of the answer to each query $q_i$ is whether the color
	of the edge queried by $q_i$ is $0$ or $1$. With a slight abuse of notation, we write $q_i = 1$ if the color of the edge queried by $q_i$ is $1$ and $q_i = 0$ otherwise. 
	
	Notice that since $\alg'$ is a deterministic algorithm, the next query $q_i$ is determined solely based on the answer to queries $q_1,\ldots,q_{i-1}$. Let $\bsefr_k$ denote the vector of all zeros of length $k$. As a result, 
	\begin{align*}
		\Pr_{G \sim \FG_1} \Bracket{q_i =1 \mid (q_1,\ldots,q_{i-1}) = \bsefr_{i-1}} = \frac{1}{m-i+1}. 
	\end{align*}
	This is because, conditioned on all $(q_1,\ldots,q_{i-1}) = \bsefr_{i-1}$, the next query chosen by $\alg'$ is fixed beforehand and is only based on the knowledge that the $i-1$ edges queried so far cannot be $\estar$. As $\estar$ is chosen uniformly at random
	from a set of $m$ edges (by Part~\ref{p:dist3} of Proposition~\ref{prop:dist}), the bound above holds (note that we assumed without loss of generality that $\alg'$ does not query an  edge more than once). As a result of this, 
	we have, 
	\begin{align}
		\Pr_{G \sim \FG_1} \Bracket{(q_1,\ldots,q_{\ell}) = \bsefr_\ell} = \frac{m-1}{m} \cdot \frac{m-2}{m-1} \cdot \ldots \cdot \frac{m-\ell}{m-\ell-1} = 1- \frac{\ell}{m}. \label{eq:ell/m}
	\end{align}
	\noindent
	Let $O(q_1,\ldots,q_\ell) \in \set{0,1}$ denote the output of $\alg'$ based on the answers given to the queries $q_1,\ldots,q_{\ell}$. We have, 
	\begin{align}
		\Pr_{G \sim \FG} \Bracket{\text{$\alg'$ is correct on $G$}} &= \frac{1}{2} \cdot \Pr_{G \sim \FG_0} \Bracket{O(q_1,\ldots,q_{\ell}) = 0} + \frac{1}{2} \cdot \Pr_{G \sim \FG_1} \Bracket{O(q_1,\ldots,q_{\ell}) = 1} \label{eq:alg'-lb}
	\end{align}
	The second term in RHS above can be upper bounded by,
	\begin{align*}
		\Pr_{G \sim \FG_1} \Bracket{O(q_1,\ldots,q_{\ell}) = 1} &\leq \Pr_{G \sim \FG_1} \Bracket{(q_1,\ldots,q_{\ell}) = \bsefr_\ell} \cdot \Pr_{G \sim \FG_1} \Bracket{O(q_1,\ldots,q_{\ell}) = 1 \mid (q_1,\ldots,q_{\ell}) = \bsefr_\ell} \\
		&\hspace{2.5cm} + \Paren{1-\Pr_{G \sim \FG_1} \Bracket{(q_1,\ldots,q_{\ell}) = \bsefr_\ell}} \\
		&= \paren{1-\frac{\ell}{m}} \cdot \Pr_{G \sim \FG_1} \Bracket{O(q_1,\ldots,q_{\ell}) = 1 \mid (q_1,\ldots,q_{\ell}) = \bsefr_\ell} + \frac{\ell}{m},
	\end{align*}	
	by Eq~(\ref{eq:ell/m}). Plugging in this bound in Eq~(\ref{eq:alg'-lb}) implies that, 
	\begin{align*}
		\Pr_{G \sim \FG} \Bracket{\text{$\alg'$ is correct on $G$}} &\leq \frac{1}{2} \cdot \Pr_{G \sim \FG_0} \Bracket{O(q_1,\ldots,q_{\ell}) = 0} \\
		&\hspace{2.5cm} + \frac{1}{2} \cdot \Pr_{G \sim \FG_1} \Bracket{O(q_1,\ldots,q_{\ell}) = 1 \mid (q_1,\ldots,q_{\ell}) = \bsefr_\ell} + \frac{\ell}{2m}.
	\end{align*}
	We argue that either $\Pr_{G \sim \FG_1} \Bracket{O(q_1,\ldots,q_{\ell}) = 1 \mid (q_1,\ldots,q_{\ell}) = \bsefr_\ell}$ or $\Pr_{G \sim \FG_0}\Bracket{O(q_1,\ldots,q_{\ell}) = 0}$ must be $0$. This is because in both cases, $(q_1,\ldots,q_{\ell}) = \bsefr_{\ell}$
	and hence $O(q_1,\ldots,q_{\ell})$ is fixed to be either $0$ or $1$ at this point. As a result, 
	\begin{align*}
		\Pr_{G \sim \FG} \Bracket{\text{$\alg'$ is correct on $G$}} \leq \frac{1}{2} + \frac{\ell}{2m} = \frac{1}{2} + o(1). 
	\end{align*}
	This contradicts the fact that $\alg'$ outputs the correct answer with probability at least $2/3$, implying that $\ell$ needs to be $\Omega(m)$. 
\end{proof}

\subsection*{Proof of Theorem~\ref{thm:lb-colorful}}

We can now finalize the proof of Theorem~\ref{thm:lb-colorful} using Proposition~\ref{prop:dist} and Lemma~\ref{lem:qc-fg}. 

\begin{proof}[Proof of Theorem~\ref{thm:lb-colorful}]
	Firstly, any algorithm that can provide any multiplicative-approximation to the number of colorful copies of $H$ in graphs $G$ must necessarily distinguish between the graphs chosen from distributions $\FG_0$ and $\FG_1$
	because by Part~\ref{p:dist1} of Proposition~\ref{prop:dist}, graphs in $\FG_0$ contain no colorful copies of $H$ while graphs in $\FG_1$ contain $m^{\rho(H)-1}$ colorful copies of $H$. Moreover, in the graphs chosen 
	from $\FG_1$, $\sH_c = m^{\rho(H)-1}$. The lower bound of $\Omega(m^{\rho(H)}/\sH_c)$ on the query complexity of algorithms now follows from the $\Omega(m)$ lower bound of Lemma~\ref{lem:qc-fg}. 
\end{proof}

\subsection*{Acknowledgements}
We are thankful to the anonymous reviewers of ITCS 2019 for many valuable comments.

\bibliographystyle{abbrv}
\bibliography{general}

\appendix

\section{Missing Details and Proofs}\label{app:appendix}

\subsection{Proof of Proposition~\ref{prop:min-degree}}\label{app:min-degree}

Proposition~\ref{prop:min-degree} (restated here for convenience of the reader) follows from standard graph theory 
facts (see, e.g. Lemma~2 in~\cite{ChibaN85}). We give a self-contained proof here for completeness. 

\begin{proposition*}[Proposition~\ref{prop:min-degree} in Section~\ref{sec:prelim}]
	For any graph $G$, $\sum_{(u,v) \in E} \min(d_u,d_v) \leq 5m\sqrt{m}$. 
\end{proposition*}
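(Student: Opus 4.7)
The plan is to orient each edge of $G$ from its lower-degree endpoint to its higher-degree endpoint (breaking ties arbitrarily, e.g., via the order $\prec$ already defined in the paper), and then regroup the sum $\sum_{(u,v)\in E}\min(d_u,d_v)$ by the tail vertex. Under this orientation, $\min(d_u,d_v)$ is precisely the degree of the tail, so if $d^+(v)$ denotes the number of edges oriented out of $v$, we get the clean identity
\begin{equation*}
\sum_{(u,v)\in E}\min(d_u,d_v) \;=\; \sum_{v\in V} d_v\cdot d^+(v).
\end{equation*}

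The crux of the argument is a uniform upper bound $d^+(v)\leq \sqrt{2m}$. I would derive this in two ways simultaneously: first, trivially, $d^+(v)\leq d_v$, since outgoing edges at $v$ are in particular incident to $v$; second, every outgoing neighbor $w$ of $v$ satisfies $d_w\geq d_v$, and since $\sum_w d_w\leq 2m$, there can be at most $2m/d_v$ such neighbors, giving $d^+(v)\leq 2m/d_v$. Taking the minimum of the two bounds and applying AM-GM yields $d^+(v)\leq \min(d_v, 2m/d_v)\leq \sqrt{2m}$.

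Plugging this back into the identity and using $\sum_v d_v = 2m$ gives
\begin{equation*}
\sum_{(u,v)\in E}\min(d_u,d_v) \;\leq\; \sqrt{2m}\cdot\sum_v d_v \;=\; 2\sqrt{2}\,m\sqrt{m} \;\leq\; 5m\sqrt{m},
\end{equation*}
which is the desired inequality (with room to spare). There is no real obstacle here: the only subtle point is choosing the right orientation so that the sum decomposes additively by tail vertex, after which both bounds on $d^+(v)$ are essentially immediate, and combining them through AM-GM is the standard ``heavy/light'' trick. No case analysis on heavy versus light vertices is actually needed if one phrases the proof via this orientation argument.
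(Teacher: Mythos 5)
Your proof is correct, and it takes a genuinely different route from the paper's. The paper fixes a global degree threshold of $\sqrt{m}$, partitions the \emph{vertices} into a light set $V^-$ and a heavy set $V^+$, bounds the contribution of light vertices by $m\sqrt{m}$ directly, and then, for heavy vertices, splits each vertex's neighborhood into lower-degree and higher-degree neighbors and uses $\lvert V^+\rvert \leq 2\sqrt{m}$ to close the argument; the constant $5$ comes from summing the two pieces. You instead orient every edge toward its higher-degree endpoint, which converts the sum over edges into $\sum_v d_v\, d^+(v)$, and then bound the out-degree of each vertex individually by $d^+(v)\leq\min(d_v,\,2m/d_v)\leq\sqrt{2m}$. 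This replaces the paper's explicit heavy/light case split by a per-vertex minimum and an AM-GM step, and it yields the sharper constant $2\sqrt{2}$. Both arguments are elementary; your orientation lemma makes the combinatorial structure (``charge each edge to its low-degree endpoint'') explicit, while the paper's version avoids introducing an orientation at the cost of a slightly more cluttered case analysis. One small remark: for the formula $\min(d_v, 2m/d_v)$ to parse you implicitly need $d_v>0$, but isolated vertices have $d^+(v)=0$ and contribute nothing, so the argument is unaffected.
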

\begin{proof}
	Let $V^+$ be the set of vertices with degree more than $\sqrt{m}$ and $V^- := V \setminus V^+$. 
	\begin{align*}
		\sum_{(u,v) \in E} \min(d_u,d_v) &= \frac{1}{2} \cdot \sum_{u \in V}\sum_{v \in N(u)} \min(d_u,d_v) \leq \frac{1}{2} \cdot \paren{\sum_{u \in V^-} \sum_{v \in N(u)} \sqrt{m} + \sum_{u \in V^+} \sum_{v \in N(u)} \min(d_u,d_v)} \\
		&\leq m\sqrt{m} + \frac{1}{2} \cdot \sum_{u \in V^+} \paren{\sum_{v \in N(u): d_{v} < d_{u}} d_v + \sum_{v \in N(u): d_v \geq d_u} d_u} \\
		&\leq m\sqrt{m} + \frac{1}{2} \cdot \sum_{u \in V^+} \paren{2m + \frac{2m}{d_u} \cdot d_u} \leq m\sqrt{m} + \sqrt{m} \cdot 4m,
	\end{align*}
	where the second last inequality is because, sum of degrees of vertices in $N(u)$ is $\leq 2m$, and number of vertices with degree more than $d_{u}$ is $\leq 2m/d_{u}$ and the last inequality is by $\card{V^+} \leq 2\sqrt{m}$. 
\end{proof}

\subsection{Proof of Lemma~\ref{lem:subgraph-decomposition}}\label{app:subgraph-decomposition}

We now provide a self-contained proof of Lemma~\ref{lem:subgraph-decomposition} (restated below) for completeness. 

\begin{lemma*}[Lemma~\ref{lem:subgraph-decomposition} in Section~\ref{sec:edge-cover}]
	Any subgraph $H$ admits an optimal fractional edge-cover $x^*$ such that the support of $x^*$, denoted by $\supp{x^*}$, is a collection of vertex-disjoint
	odd cycles and star graphs, and, 
	\begin{enumerate}
		\item for every odd cycle $C \in \supp{x^*}$, $x^*_e = 1/2$ for all $e \in C$;
		\item for every edge $e \in \supp{x^*}$ that does \emph{not} belong to any odd cycle,  $x_e = 1$. 
	\end{enumerate}
\end{lemma*}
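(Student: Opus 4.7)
The plan is to start with an extreme-point optimal solution of LP~(\ref{lp:ec}) and massage it into the claimed form. The first and most technical step is to establish \emph{half-integrality}: every extreme point $x^*$ of LP~(\ref{lp:ec}) satisfies $x^*_e \in \{0, 1/2, 1\}$ for every edge $e$. This is a classical property of the fractional edge-cover polytope, provable by a standard perturbation argument: if the ``strictly fractional'' subgraph $F := \{e : x^*_e \in (0,1) \setminus \{1/2\}\}$ is nonempty, one constructs a signed $\pm \eps$ perturbation $\delta$ supported on $F$ that preserves every vertex constraint, so that both $x^* \pm \eps \delta$ remain feasible for small $\eps > 0$, contradicting the extremality of $x^*$. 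Such a perturbation is built by exhibiting an alternating assignment along an even cycle of $F$, or along a path joining two odd cycles of $F$, or along a tree component of $F$ whose leaf vertices have slack in their covering constraints.

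With half-integrality in hand, I would choose an optimal half-integral $x^*$ minimizing $|\supp{x^*}|$, and let $E_{1/2}$ and $E_1$ denote the edges with values $1/2$ and $1$ respectively. The structural claims then follow from local exchange arguments. To see that every component of $E_{1/2}$ is an odd cycle, I would show that every vertex incident to $E_{1/2}$ has degree exactly $2$ in $E_{1/2}$: degree $\geq 3$ at a vertex $a$ admits a small perturbation of the incident $1/2$-values that preserves $a$'s constraint and can be propagated through $E_{1/2}$ to strictly reduce the support; and degree $1$ at $a$ is ruled out because feasibility then forces $a$ to be additionally covered by some $E_1$-edge, in which case the single $1/2$-edge at $a$ can simply be dropped. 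Hence each component of $E_{1/2}$ is a cycle; and it must be odd, since any even cycle in $E_{1/2}$ admits a rewriting with alternating values $0$ and $1$ along its edges at the same cost but with strictly smaller support, contradicting our choice of $x^*$.

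Next, I would argue that each component of $E_1$ is a star. If $E_1$ contained a path $u - v - w - z$, then the middle edge $(v,w)$ would have both its endpoints already covered at value $1$ by the other two edges of the path, so removing it would strictly decrease the objective, contradicting optimality. Finally, I would check that $E_{1/2}$ and $E_1$ are vertex-disjoint: if a vertex $v$ lay on an odd cycle $C$ of length $2k+1$ in $E_{1/2}$ and were also incident to an $E_1$-edge, deleting $v$ from $C$ leaves a path on $2k$ vertices that admits an integral edge cover of size $k$ using alternate edges of $C$; replacing the cycle by this matching while retaining the $1$-edge strictly reduces the total cost by $1/2$, contradicting optimality.

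The main obstacle is the half-integrality step: the case analysis based on the structure of the strictly fractional subgraph $F$ requires care, especially when $F$ consists of a disjoint union of odd cycles, where no local perturbation exists on a single component and one must rely on global slack at some vertex to design the perturbation. Once half-integrality is established, the remaining structural properties follow from elementary local exchanges that either strictly improve the objective or strictly reduce the support among half-integral optima.
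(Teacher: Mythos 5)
Your overall plan — establish half-integrality first, then apply local exchange arguments to a support-minimal optimum — is essentially the paper's, just organized by the value classes $E_{1/2}$, $E_1$ rather than by ``cycles in the support vs.\ forests.'' For the half-integrality subroutine itself you propose the extreme-point perturbation route, while the paper goes through the bipartite double cover; both are standard and both work, so this is a genuine but minor difference in route.

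There are, however, two places where the argument as written has a gap. First, in the degree-$1$ case in $E_{1/2}$: you claim that if $a$ has a single $1/2$-edge $(a,b)$ plus a covering $E_1$-edge, then ``the single $1/2$-edge at $a$ can simply be dropped.'' This is not justified as stated — dropping $(a,b)$ reduces the coverage at $b$ by $1/2$, and $b$ need not have slack. You must argue the $b$-side: if $b$ has coverage $\geq 1$ even without $(a,b)$, dropping strictly decreases the objective and contradicts optimality; otherwise $b$'s only other support is a single $1/2$-edge, and raising it to $1$ while dropping $(a,b)$ keeps the cost equal but strictly shrinks the support, contradicting support-minimality. (This is exactly what your support-minimality hypothesis is for, but the case split needs to appear.) The same remark applies to your degree-$\geq 3$ argument: there is no need for an $\eps$-perturbation that is ``propagated'' — since $a$ already has slack $\geq 1/2$, a single $1/2$-unit shift, with the same case split at the other endpoint, already gives the contradiction.

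Second, ``no path of length $3$ in $E_1$'' implies each tree component of $E_1$ is a star, but it does not by itself show $E_1$ is a forest: a triangle in $E_1$ contains no simple path of length $3$. Triangles are trivially dispatched (each vertex of an $E_1$-triangle is covered twice, so any of its edges can be deleted, decreasing the objective), but this is a case you omit and it is genuinely needed since a triangle is not a star. Your arguments for ``$E_{1/2}$ has no even cycles'' and for the vertex-disjointness of $E_{1/2}$ and $E_1$ are correct as stated.
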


To prove Lemma~\ref{lem:subgraph-decomposition}, we first state a basic property of LP~(\ref{lp:ec}). 
\begin{proposition}\label{prop:half-integrality-ec}
	LP~(\ref{lp:ec}) admits a half-integral optimum solution $x^* \in \set{0,\frac{1}{2},1}^{\card{E(H)}}$.  Moreover, if $H$ is bipartite, then LP~(\ref{lp:ec}) admits an integral optimum solution. 
\end{proposition}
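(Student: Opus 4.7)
\textbf{Proof plan for Proposition~\ref{prop:half-integrality-ec}.}

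My plan is to handle the two claims separately, starting with the (easier) bipartite case and then bootstrapping from it to the general case via a bipartite double cover.

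For the bipartite statement, I would observe that LP~(\ref{lp:ec}) can be written as $\min\{\mathbf{1}^\top x : Ax \geq \mathbf{1},\, x \geq 0\}$, where $A$ is the vertex-edge incidence matrix of $H$. A standard result (Poincaré) states that the incidence matrix of a graph is totally unimodular if and only if the graph is bipartite. Since the right-hand side $\mathbf{1}$ is integral, every basic feasible solution of the polytope is then integral, so LP~(\ref{lp:ec}) admits an integral optimum when $H$ is bipartite.

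For the general case, I would use the \emph{bipartite double cover} $H'$ of $H$: set $V(H') := V(H) \times \{0,1\}$, and for each edge $\{a,b\} \in E(H)$ introduce the two edges $\{a_0, b_1\}$ and $\{b_0, a_1\}$ in $E(H')$. By construction $H'$ is bipartite (the bipartition is given by the second coordinate). The key observation is that any fractional edge cover $x$ of $H$ lifts to a fractional edge cover $\tilde{x}$ of $H'$ by setting $\tilde{x}_{\{a_0,b_1\}} = \tilde{x}_{\{b_0,a_1\}} = x_{\{a,b\}}$, with value exactly $2 \sum_e x_e$. In particular $\rho(H') \leq 2 \rho(H)$. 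Applying the bipartite case to $H'$ yields an \emph{integral} edge cover $\hat{x} \in \{0,1\}^{E(H')}$ with $\sum_{e'} \hat{x}_{e'} \leq 2\rho(H)$.

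Next I project $\hat{x}$ back to $H$: define $x^*_{\{a,b\}} := \tfrac{1}{2}\bigl(\hat{x}_{\{a_0,b_1\}} + \hat{x}_{\{b_0,a_1\}}\bigr)$, which lies in $\{0,\tfrac{1}{2},1\}$ since each $\hat{x}_{e'} \in \{0,1\}$. Feasibility is immediate: for any $a \in V(H)$,
\[
\sum_{e \ni a} x^*_e \;=\; \tfrac{1}{2}\Bigl(\sum_{e' \ni a_0} \hat{x}_{e'} + \sum_{e' \ni a_1} \hat{x}_{e'}\Bigr) \;\geq\; \tfrac{1}{2}(1+1) \;=\; 1,
\]
and the value satisfies $\sum_e x^*_e = \tfrac{1}{2} \sum_{e'} \hat{x}_{e'} \leq \rho(H)$. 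Hence $x^*$ is a feasible half-integral solution achieving the optimum value, which proves the half-integrality claim.

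The main thing to be careful about is the second equality in the value computation: each edge of $H'$ is counted in exactly one term of the sum $\sum_e (\hat{x}_{\{a_0,b_1\}} + \hat{x}_{\{b_0,a_1\}})$, so no double-counting occurs. Everything else is a direct verification. I do not expect this proposition to contain any real obstacle; the substantive structural content (that the half-integral support decomposes into vertex-disjoint odd cycles with value $1/2$ and integral star edges) is what Lemma~\ref{lem:subgraph-decomposition} then derives from this proposition by a complementary-slackness / extreme-point argument on $x^*$.
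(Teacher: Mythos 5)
Your proof is correct, and its second half coincides with the paper's: both reduce the general case to the bipartite one via the bipartite double cover $H'$, lift an optimal cover of $H$ to $H'$, apply the bipartite statement to get an integral optimum for $H'$, and project back by averaging the two copies of each edge. Your bookkeeping of feasibility and objective value under lifting/projection is right, including the observation that each edge of $H'$ contributes to exactly one term of the projected sum. The genuine difference is in the bipartite base case. The paper argues directly on an optimal solution: any cycle in its support is even, can be cancelled by alternately increasing and decreasing along the cycle, and the resulting forest is then rounded to an integral cover deterministically from the leaves up. You instead invoke total unimodularity of the vertex--edge incidence matrix of a bipartite graph together with the integral right-hand side $\mathbf{1}$, so the polytope $\{Ax \geq \mathbf{1},\, x\geq 0\}$ has integral vertices and hence an integral optimum. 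Both are standard; yours is shorter and uses polyhedral machinery off the shelf, while the paper's is self-contained and foreshadows the very cycle-cancelling perturbation it then applies to $x^*$ in Lemma~\ref{lem:subgraph-decomposition}. One small attribution nit: the statement ``the incidence matrix of an undirected graph is TU iff the graph is bipartite'' is usually attributed to Heller and Tompkins rather than Poincar\'e (whose theorem concerns oriented incidence matrices of directed graphs); the fact itself is, of course, standard and correct.
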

\begin{proof}
	Suppose first that $H$ is bipartite and $x \in [0,1]^{\card{E(H)}}$ is some optimal solution of LP~(\ref{lp:ec}). We perform a simple cycle-canceling on $x$ to make it integral. In particular, let 
	$e_1,\ldots,e_{2k}$ for some integer $k \geq 2$ be a cycle in the support of $x$ (as $H$ is bipartite length of this cycle is necessarily even). We can alternatively increase the value on one edge and decrease the value
	on the next one by the same amount and continue along the cycle until the value on an edge drops to zero. This operation clearly preserves the feasibility as well as the value of the solution. By doing this, we can cancel all cycles in the support of $x$ 
	without changing the value of LP or violating the feasibility. At this point, support of $x$ is a forest and can be turned into an integral solution using a standard deterministic rounding in a bottom up approach from the leaf-nodes of the forest (see the proof 
	of Lemma~\ref{lem:subgraph-decomposition} for more details on this standard procedure).  
	
	Now suppose $H$ is a non-bipartite graph. Create the following bipartite graph $H'$ where $V(H')$ consists of two copies of vertices in $H$, i.e., for any vertex $a \in V(H)$, there are two copies, say, $a^L$ and $a^R$ in $V(H')$. 
	Moreover, for any edge $e:=(a,b) \in E(H)$ there are two edges $e_1:=(a^L,b^R)$ and $e_2:=(a^R,b^L)$ in $E(H')$. It is easy to see that any edge cover $y$ of $H'$ can be translated to an edge cover $x$ of $H$ by
	setting $x_e = \frac{y_{e_1}+y_{e_2}}{2}$. As by the first part, $H'$ admits an integral optimum solution, we immediately have that $H$ admits a half-integral optimum solution.
\end{proof}

\begin{proof}[Proof of Lemma~\ref{lem:subgraph-decomposition}]
	Let $x^*$ be a half-integral optimum solution for the graph $H$ that is guaranteed to exist by Proposition~\ref{prop:half-integrality-ec}. Let $C$ be any cycle (odd or even length) in $\supp{x^*}$. For any edge $e \in C$, 
	$x^*_e = 1/2$ as otherwise by decreasing $x^*_e$ from $1$ to $1/2$ (recall that $x^*$ is half-integral and $x^*_e \neq 0$), we can reduce the optimal solution without violating the feasibility. Moreover, if 
	$C$ is of even length, then we can perform a standard cycle canceling (by adding and subtracting $1/2$ to the value of $x^*$ on the alternate edges of $C$)
	and remove the cycle. Now suppose $C$ is of odd length; we argue that for each vertex $a \in C$, the only edges in $\supp{x^*}$ that are incident on $a$ are edges in $C$.	
	
	Suppose by contradiction that there exists an edge $e$ with $x^*_e \geq 1/2$ which is incident on a vertex $a$ in an odd cycle $C$ (in $\supp{x^*}$). Perform a cycle canceling as follows: subtract $1/2$ 
	from every other edge starting from an edge incident to $a$ and add $1/2$ to every other edge \emph{plus} the edge $e$. As $C$ is an odd cycle, the total number of addition and subtractions are equal and hence 
	does not change the value of $x^*$. It is also easy to verify that the new $x^*$ is still feasible as $x^*_e \geq 1$ now and hence $a$ is covered still. 
	Thus, by repeatedly applying the above argument, we can change $x^*$ so that $\supp{x^*}$ consists of a vertex-disjoint union of odd cycles (with $x^*_e = 1/2$) and forests. We now turn the forests into a collection of 
	starts using a simple deterministic rounding. 
	
	For each tree $T$ in this forest, we root the tree arbitrarily at some degree one vertex.  Any edge $e$ incident on leaf-nodes of this tree clearly has $x^*_e = 1$.  Let $f$ be a parent edge $e$ and $z$ be
	a parent of $f$ (if these edges do not exist, $e$ belongs to a cycle and we are already done). Let $x^*_z \leftarrow x^*_z + x^*_f$ and $x^*_f \leftarrow 0$. This preserves both the value of $x^*$ and its feasibility, and further partition this 
	tree into a forest and a star. By repeatedly applying this argument, we can decompose every forest into a collection of stars, finalizing the proof of  Lemma~\ref{lem:subgraph-decomposition}. 
\end{proof}

\subsection{An Alternate Analysis of the Variance of the Estimator for Stars}\label{app:star-variance}

Recall that in Lemma~\ref{lem:ssestimator-output}, we upper bounded the variance of the random variable $X$ associated with $\ssestimator(G,S_\ell)$ with $\var{X} \leq 2m^{\ell} \cdot \expect{X}$.               
Using this analysis in our Theorem~\ref{thm:alg-estimate} results in an upper bound of $O(\frac{m^{\ell}}{\sS})$ on the query complexity of counting stars which is suboptimal. We now show that a 
slightly improved analysis of the variance in fact results in an algorithm with $\Os(\frac{m}{(\sS)^{1/\ell}})$ query complexity which is optimal by a result of~\cite{AliakbarpourBGP18}. 

\begin{lemma}\label{lem:ssestimator-better}
	For the random variable $X$ associated with $\ssestimator(G,S_\ell)$, 
	\begin{align*}
		\expect{X} &= (\sS), \qquad 
		\var{X} \leq 4m \cdot \ell^{2\ell} \cdot (\sS)^{2-1/\ell}.
	\end{align*}
\end{lemma}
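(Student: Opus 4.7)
The expectation bound $\expect{X} = (\sS)$ is already established in Lemma~\ref{lem:ssestimator-output}, so my plan is only to sharpen the variance bound. I would begin from the exact same computation as in Lemma~\ref{lem:ssestimator-output}, up to the intermediate expression
\[
\var{X} \leq \expect{X^2} = \sum_{v \in V} \sum_{\bw \in N(v)^\ell} \II\Paren{(v,\bw)\ \textnormal{forms a copy of }S_\ell} \cdot \frac{2m}{d_v} \cdot \binom{d_v}{\ell}.
\]
Denote by $\tsS_v := \sum_{\bw} \II((v,\bw)\ \textnormal{forms a copy of }S_\ell)$ the per-vertex contribution to $\tsS := (\sS)$. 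Since a star-profile with center $v$ requires $\ell$ distinct neighbors (and, for $\ell=1$, an additional ordering constraint), we have $\tsS_v = \ell! \cdot \binom{d_v}{\ell}$ for $\ell \geq 2$ and $\tsS_v \leq d_v$ for $\ell = 1$. Substituting this identity in, the display simplifies cleanly to $\var{X} \leq \frac{2m}{\ell!} \sum_{v} \tsS_v \cdot d_v^{\ell-1}$ (the $\ell=1$ case is handled separately and is immediate).

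The key step, and the one where the previous analysis in Lemma~\ref{lem:ssestimator-output} loses, is to bound $d_v$ from above in terms of $\tsS_v$ rather than using the loose inequality $d_v \leq m$. Here I would observe that if $d_v < \ell$ then $\tsS_v = 0$ (no $\ell$ distinct neighbors exist), so these terms contribute nothing. If $d_v \geq \ell$, then for every $i \in \{0,\ldots,\ell-1\}$ we have $d_v - i \geq d_v/\ell$ (this is an easy calculation), so
\[
\tsS_v = d_v(d_v-1)\cdots(d_v-\ell+1) \geq \Paren{\tfrac{d_v}{\ell}}^{\ell}, \quad\text{hence}\quad d_v \leq \ell \cdot \tsS_v^{1/\ell}.
\]
This gives $d_v^{\ell-1} \leq \ell^{\ell-1} \tsS_v^{(\ell-1)/\ell}$, and substituting back yields $\var{X} \leq \frac{2m \ell^{\ell-1}}{\ell!} \cdot \sum_v \tsS_v^{2-1/\ell}$.

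To finish I would invoke the elementary super-additivity inequality: for any exponent $p \geq 1$ and non-negative reals $x_1,\ldots,x_n$, one has $\sum_i x_i^p \leq (\sum_i x_i)^p$. Since $2 - 1/\ell \geq 1$ for every $\ell \geq 1$, this gives
\[
\sum_v \tsS_v^{2-1/\ell} \leq \Paren{\sum_v \tsS_v}^{2-1/\ell} = \tsS^{2-1/\ell},
\]
and combining with the previous display, $\var{X} \leq \frac{2m \ell^{\ell-1}}{\ell!} \cdot \tsS^{2-1/\ell} \leq 4m \cdot \ell^{2\ell} \cdot \tsS^{2-1/\ell}$, with plenty of slack absorbed into the $\ell^{2\ell}$ factor (which also comfortably handles the $\ell=1$ base case). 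The only nontrivial place in the argument is the reverse degree bound $d_v \leq \ell \tsS_v^{1/\ell}$; the rest of the proof is a direct reworking of Lemma~\ref{lem:ssestimator-output} combined with an elementary super-additivity inequality. Plugging this stronger variance bound into Theorem~\ref{thm:alg-estimate} then yields the $\Os(m/(\sS)^{1/\ell})$ query complexity claimed in Remark~\ref{rem:star-variance}, matching the bound of~\cite{AliakbarpourBGP18}.
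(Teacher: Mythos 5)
Your proposal is correct and uses essentially the same ingredients as the paper's proof: both hinge on the reverse bound $\binom{d_v}{\ell}\geq (d_v/\ell)^\ell$ (equivalently $d_v\leq \ell\,\tsS_v^{1/\ell}$) followed by the superadditivity inequality $\sum_v x_v^p\leq(\sum_v x_v)^p$ for $p\geq 1$, with the only difference being whether one converts to powers of $d_v$ first (as the paper does) or directly to powers of $\tsS_v$ (as you do). One slip worth flagging: you assert $\tsS_v=\ell!\,\binom{d_v}{\ell}$, but with your own definition $\tsS_v=\sum_{\bw}\II(\cdot)$ and the paper's convention that $\Pr\paren{\Label{\alpha_l}=\bw}=1/\binom{d_v}{\ell}$ (so $\bw$ ranges over unordered $\ell$-subsets of $N(v)$), the correct identity is $\tsS_v=\binom{d_v}{\ell}$ for $\ell\geq 2$; indeed, $\ell!\binom{d_v}{\ell}$ would force $\sum_v\tsS_v=\ell!\cdot(\sS)$, contradicting your statement that $\tsS_v$ sums to $(\sS)$. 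Relatedly, the intermediate display $\var{X}\leq\frac{2m}{\ell!}\sum_v\tsS_v\,d_v^{\ell-1}$ does not come from ``substituting the identity'' (that would give $\frac{2m}{\ell!}\sum_v\tsS_v^2/d_v$) but from the separate bound $\binom{d_v}{\ell}\leq d_v^\ell/\ell!$. Both slips are harmless here: the crucial estimate $d_v\leq\ell\,\tsS_v^{1/\ell}$ is valid with either normalization of $\tsS_v$, and the stray $\ell!$ factors are comfortably absorbed by the $4\ell^{2\ell}$ slack in the target bound, so the conclusion stands.
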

\begin{proof}
	The bound on the expectation is already establish in Lemma~\ref{lem:ssestimator-output}. We now prove the bound on variance. 
	This proves the desired bound on the exception. We now bound $\var{X}$.
	\begin{align*}
		\var{X} &\leq \expect{X^2} =  \sum_{v \in V}\sum_{\substack{\bw  \in N(v)^{\ell}}}\Pr\paren{\Label{\alpha_r}=v} \cdot \Pr\paren{\Label{\alpha_l} = \bw} \\
		 &\hspace{4.5cm} \cdot \II(\text{$(v,\bw)$ forms a copy of $S_{\ell}$}) \cdot \paren{\Value{\alpha_r} \cdot \Value{\alpha_l}}^2 \\
		&= \sum_{v }\frac{d_v}{2m} \cdot \sum_{\bw} \frac{1}{{{d_v}\choose{\ell}}}  \cdot \II(\text{$(v,\bw)$ forms a copy of $S_{\ell}$}) \cdot \paren{(2m/d_v) \cdot {{d_v}\choose{\ell}}}^2 \\
		&= \sum_{v } \sum_{\bw} \II(\text{$(v,\bw)$ forms a copy of $S_{\ell}$}) \cdot (2m/d_v) \cdot {{d_v}\choose{\ell}} \\
		&\leq 2m \cdot  \sum_{v} \sum_{\bw} \II(\text{$(v,\bw)$ forms a copy of $S_{\ell}$}) \cdot {d_v}^{\ell-1} \tag{since ${{d_v}\choose{\ell}} \leq {d_v}^{\ell}$}\\
		&= 2m \cdot  \sum_{v: d_v \geq \ell} {{d_v}\choose{\ell}} \cdot {d_v}^{\ell-1} \tag{since $\sum_{\bw} \II(\text{$(v,\bw)$ forms a copy of $S_{\ell}$}) = {{d_v}\choose{\ell}}$ for any $v$ with $d_v \geq \ell$ and is $0$ otherwise}\\
		&= 2m \cdot  \sum_{v: d_v \geq \ell} \paren{{d_v}^{\ell}}^{2-1/\ell} \tag{since ${{d_v}\choose{\ell}} \leq {d_v}^{\ell}$}\\
		&\leq 2m \cdot  \Paren{\sum_{v: d_v \geq \ell} {d_v}^{\ell}}^{2-1/\ell} \tag{since $\sum_{a} a^b \leq \paren{\sum_a a}^{b}$ for $b \geq 1$}\\
		&\leq 2m \cdot  \Paren{\sum_{v: d_v \geq \ell} \ell^\ell \cdot {{d_v}\choose{\ell}}}^{2-1/\ell} \tag{since ${{d_v}\choose{\ell}} \cdot \ell^{\ell} \geq {d_v}^{\ell}$}\\
		&\leq 4m \cdot \ell^{2\ell} \cdot (\sS)^{2-1/\ell},
	\end{align*}
	where the last inequality is because $\sum_{v: d_v \geq \ell}  {{d_v}\choose{\ell}}$ is equal to $\sS$ when $\ell > 1$ and is equal to $2\cdot (\sS)$ when $\ell = 1$. 
\end{proof}

Using this lemma, the only change we need to do with our algorithm in Theorem~\ref{thm:alg-estimate} for improving its performance when counting stars is that instead of taking average of $O(\frac{m^{\ell}}{\sS})$ estimators, 
we only need to take average of $O(\frac{m}{\sS^{1/\ell}})$ many of them. The proof of correctness now follows exactly as before by Chebyshev's inequality (Proposition~\ref{prop:cheb}) as $\ell$ is a constant. With this minor modification, 
our algorithm then needs $O(\frac{m}{\sS^{1/\ell}})$ queries to compute a $(1\pm \eps)$-approximation of the number of occurrences of the star $S_\ell$ in any given graph $G$.

\end{document}